\documentclass[aps,prx,superscriptaddress,longbibliography,twocolumn,notitlepage,nofootinbib,10pt]{revtex4-2}
\usepackage{xcolor}
\definecolor{color1}{rgb}{0,0,0.7}
\definecolor{color2}{rgb}{0.85,0,0}

\definecolor{darkblue}{RGB}{0,0,149}

\usepackage[colorlinks,citecolor=darkblue,linkcolor=darkblue,urlcolor=darkblue, breaklinks=true]{hyperref}
\usepackage{graphicx}
\usepackage{amsmath}
\usepackage{amssymb}
\usepackage{amsfonts}
\usepackage{amsthm}
\usepackage{bbm}
\usepackage{hyperref}
\usepackage{soul}
\usepackage{textcomp}
\usepackage{bm}

\usepackage{dsfont}
\usepackage{relsize}
\usepackage{braket}
\usepackage{enumitem}
\usepackage{cleveref}
\usepackage{csquotes}
\usepackage{bigints}
\usepackage{float}
\usepackage{framed}
\usepackage[makeroom]{cancel}
\usepackage{comment}
\usepackage{mathrsfs}

\newtheorem{theorem}{Theorem}[]
\newtheorem{corollary}{Corollary}[]

\def\Tr{{\rm Tr}}
\def\rl{\!\braket{\hat r_E\hat\ell_E}_{\hat\sigma_E}\!}
\newcommand{\rel}[2]{\!\braket{\hat r_E\tilde E_{#1}(#2)\hat\ell_E}_{\hat\sigma_E}\!}

\def\Tr{{\rm Tr}}

\newcommand{\nocontentsline}[3]{}

\newcommand{\fref}[1]{\textcolor{blue}{\hyperref[#1]{Fig.~\bfseries{\ref{#1}}}}}
\newcommand{\sref}[1]{\textcolor{blue}{\hyperref[#1]{Section~\ref{#1}}}}
\newcommand{\tref}[1]{\textcolor{blue}{\hyperref[#1]{Table~\bfseries{\ref{#1}}}}}

\newcommand{\aref}[1]{\textcolor{blue}{\hyperref[#1]{App.$\,$\ref{#1}}}}

\newcounter{example}

\begin{document}

\title{Autonomization of Quantum Systems and the Emergence of the Work Operator}

\author{Carlo Cepollaro}
\email{carlo.cepollaro@univie.ac.at}
\affiliation{Vienna Center for Quantum Science and Technology (VCQ), Faculty of Physics, University of Vienna, Boltzmanngasse 5, 1090 Vienna, Austria}
\affiliation{Institute for Quantum Optics and Quantum Information - IQOQI Vienna, Austrian Academy of Sciences, Boltzmanngasse 3, 1090 Vienna, Austria}
\author{Danial Chughtai}
\affiliation{Vienna Center for Quantum Science and Technology (VCQ), Atominstitut, TU Wien, Stadionallee 2, 1020 Vienna, Austria}
\affiliation{Institute for Quantum Optics and Quantum Information - IQOQI Vienna, Austrian Academy of Sciences, Boltzmanngasse 3, 1090 Vienna, Austria}
\author{Alberto Spalvieri}
\affiliation{Vienna Center for Quantum Science and Technology (VCQ), Faculty of Physics, University of Vienna, Boltzmanngasse 5, 1090 Vienna, Austria}
\affiliation{Institute for Quantum Optics and Quantum Information - IQOQI Vienna, Austrian Academy of Sciences, Boltzmanngasse 3, 1090 Vienna, Austria}
\author{Marcus Huber}
\affiliation{Vienna Center for Quantum Science and Technology (VCQ), Atominstitut, TU Wien, Stadionallee 2, 1020 Vienna, Austria}
\affiliation{Institute for Quantum Optics and Quantum Information - IQOQI Vienna, Austrian Academy of Sciences, Boltzmanngasse 3, 1090 Vienna, Austria}
\author{Alberto Rolandi}
\email{alberto.rolandi@tuwien.ac.at}
\affiliation{Vienna Center for Quantum Science and Technology (VCQ), Atominstitut, TU Wien, Stadionallee 2, 1020 Vienna, Austria}
\affiliation{Institute for Quantum Optics and Quantum Information - IQOQI Vienna, Austrian Academy of Sciences, Boltzmanngasse 3, 1090 Vienna, Austria}

\begin{abstract}
    Estimation of work at the quantum scale remains a central challenge in quantum thermodynamics. Canonical approaches have a fundamental drawback: they assume classical control of the quantum system, as implicit in a time-dependent Hamiltonian. Yet the energetic cost of implementing this control is omitted and can exceed the system’s energy scale by orders of magnitude, calling into question the operational significance of work values. We address this by autonomizing the controlled energy transfer, embedding the driven dynamics into an energy-conserving evolution on a larger quantum system. Work is then unambiguously identified with the energy transferred between the two systems, singling out a unique observable on the driven system: the work operator. We show that the quantum work operator evades a no-go theorem by deriving a quantum fluctuation theorem that recovers the Jarzynski equality in classical scenarios. We incorporate imperfect control and quantify corrections to work statistics. Extending the framework to open quantum systems, we obtain an operatorial first law in which work, heat, and internal-energy changes are represented by distinct operators on the reduced system. Together, these results settle the long-standing debate over whether work is a quantum observable: it is, once the controlling system is included in the description rather than treated as external.
\end{abstract}

\maketitle

\section{Introduction}
One of the methodological cornerstones of thermodynamics is the use of time-dependent Hamiltonians to describe physical systems driven out of equilibrium. For classical systems, and for quantum systems controlled by an effectively classical agent, this framework is well motivated. From a foundational perspective, however, it also exposes one of the central challenges in formulating thermodynamics at the quantum scale~\cite{Goold2016,Vinjanampathy2016,Campisi2011}. Time-dependent Hamiltonians can be accurately approximated by large classical control fields, but generating and steering these fields requires work and inevitably dissipates entropy, often on scales orders of magnitude larger than those of the quantum processes being controlled. This calls into question the operational meaning of \enquote{quantum work} as defined within this framework, and partly explains why so many conflicting accounts of work in the quantum regime coexist.

Examples range from ergotropy~\cite{Allahverdyan2004}, understood as the energy extractable by unitary operations generated by time-dependent Hamiltonians, to the cyclic work output of thermal engines, where the Hamiltonian is explicitly driven in time, and to attempts to generalize stochastic work to the quantum regime, most notably through the two-point measurement (TPM) scheme~\cite{Tasaki2000,Kurchan2001}, which requires projective measurements before and after a time-dependent change of the Hamiltonian. In all these cases, the semiclassical treatment prevents a self-contained accounting of energy and entropy. As a consequence, the true efficiency, cost, and fundamental limitations of thermodynamic protocols at the quantum scale remain inaccessible.

One potential workaround is to consider continuously operating autonomous thermal machines~\cite{Mitchison19, Niedenzu19}, which do not suffer from these issues because their Hamiltonians are inherently time independent. This, however, severely limits the range of accessible operations: it precludes the study of agent-controlled energy changes and cyclic engines driven by externally varied parameters, and fails to capture many quantum-technological experiments. Another route is the autonomization of time-dependent Hamiltonian processes through quantum clocks and carefully designed interaction Hamiltonians~\cite{Unruh1989,Horodecki2013,Freznel2014,Skrzypczyk2014,Malabarba_2015,Colla24,Cavina26}. The simplest implementations, however, quickly run into problems of either Hamiltonian unboundedness or limited tractability. While exact unitary dynamics generated by time-dependent Hamiltonians can, in general, be induced autonomously only by infinite-dimensional ancillas~\cite{Aberg2014,Woods2018, Woods2023}, finite-dimensional approximations can nevertheless be constructed that converge to perfectly unitary dynamics in the infinite-dimensional limit. Fluctuation theorems have recently been derived within this autonomous-work framework for classical systems~\cite{Jarzynski25} and subsequently extended to the quantum regime using projective measurements~\cite{Zhao26}.

In this paper, we follow this conceptual approach and show how the energy-changing unitaries associated with time-dependent Hamiltonian evolutions of arbitrary quantum systems emerge from a fully autonomous description. This agent--free construction singles out an operator associated with the energy change induced on the system, providing a natural notion of a work observable without postulating one at the outset~\cite{Bochkov77,Yukawa00,Chernyak04,Allahverdyan05, Weimer2008,Talkner2016}. We show that the distribution of the fundamental energetic cost required to implement the evolution induced by the time-dependent Hamiltonian $\hat H_S(t)$ between $t=t_0$ and $t=\tau$ is governed by the operator
\begin{equation}\label{eq:work_op_int}
\begin{aligned}
    \hat W(\tau,t_0) &= \hat U^\dagger(\tau,t_0)\hat H_S(\tau)\hat U(\tau,t_0) - \hat H_S(t_0) \\ 
     &= \int_{t_0}^\tau\!dt~\hat U^\dagger(t,t_0)\,\partial_t{\hat H}_S(t)\,\hat U(t,t_0),
\end{aligned}
\end{equation}
evaluated on the initial state of the system, where $\hat U(\tau,t_0)$ describes the evolution between $t_0$ and $\tau$. Our autonomous construction provides a novel foundation for this operator, which, despite earlier criticism, has also recently appeared in several independent approaches to quantum work~\cite{Beyer2020,Silva2021,Silva2024,silva2025consistency,sathe2026,Schmid26,elouard2026}.

Furthermore, by deriving an exact expression for the moments of the induced energy-change distribution, we obtain a remarkable result: Jarzynski's equality~\cite{Jarzynski1997,Crooks1999,Seifert2012,Jarzynski25} is recovered for all classical protocols, while genuinely quantum cases acquire an explicit closed-form correction, expressible as the expectation value of a commutator. This circumvents a well-known no-go theorem for defining fluctuating work in quantum processes~\cite{Perarnau17} and removes the need for two-point measurements in the estimation of work~\cite{micadei2020,Lostaglio2018}.

This has profound implications for four reasons. First, a definition of work that does not require measurement is crucial, because measurement constitutes one of the major challenges for quantum thermodynamics: although its energetic cost is not yet known in full generality, in practice it often lies far beyond the quantum energy scale~\cite{Guryanova20,Schwarzhans25}. Second, it shows that a consistent definition of work for agent-controlled quantum processes is possible, contrary to long-held convictions in the field~\cite{Allahverdyan05,Talkner2007,Perarnau17,Baumer2018}. Third, unlike previous attempts~\cite{ Talkner2007, Perarnau17,  Silva2024}, this definition is not obtained by imposing consistency requirements on work itself, but emerges as a consequence of autonomizing controlled quantum evolutions and proper energy bookkeeping. Finally, this operatorial perspective also provides a natural route to a first law of thermodynamics at the operator level. Once work is identified with autonomous energy transfer rather than with a measurement record, the same framework can be applied to open controlled systems, where one can ask how work, heat, and internal-energy changes are represented on the reduced
system alone.

The rest of the manuscript is organized as follows. Section~\ref{sec:autonomous} introduces the autonomous picture of driven quantum dynamics. Section~\ref{sec:work_operator} derives the work operator from the autonomous energy transfer. Section~\ref{sec:fluctuation_theorem} establishes the corresponding quantum fluctuation theorem. Section~\ref{sec:imprecise_control} examines the effects of imprecise control. Section~\ref{sec:oqs} extends the framework to open quantum systems and uses the work operator to formulate the first law of thermodynamics at the
operatorial level. Section~\ref{sec:discussion} addresses the existing debate in the literature and the shortcomings of arguments against the work operator, and finally Section~\ref{sec:conclusion} concludes.

\section{The Autonomous Picture of a Controlled Quantum System}\label{sec:autonomous}
The evolution of a closed quantum system $S$, defined over a Hilbert space $\mathcal{H}_S$, from a time $t_0$ to time $t$ is governed by unitary dynamics,
\begin{equation}\label{eq:driven_dynamics}
    \hat \rho_S(t) = \hat U(t,t_0)\hat \rho_S(t_0)\hat U^\dagger(t,t_0),
\end{equation}
where $\hat\rho_S(t)$ is the state of the system $S$ at time $t$ and $\hat U(t,t_0)$ is the time-ordered unitary operator
\begin{equation}
    \hat U(t,t_0)=\mathcal{T}\exp\!\left(-i\int_{t_0}^t \!dt'\, \hat H_S(t')\right),
\end{equation}
with $\hat H_S(t)$ the Hamiltonian of the system at time $t$. A time-dependent Hamiltonian allows one to model the action of an external agent on a system’s control parameters, such as the displacement of a piston or the switching of an external field. From a thermodynamic standpoint, this is a powerful framework, as any operation an agent can perform on a system can be interpreted as a change in the Hamiltonian, provided that the description includes all relevant degrees of freedom except those associated with the external agent. However, it is important to note that this description does not correspond to a fully energy-preserving framework, as energy must be supplied by the external agent implementing the changes.

Here, we propose an alternative picture in which the explicit time dependence of the system Hamiltonian is removed, by incorporating the external control into the quantum description of the dynamics. To this end, we introduce a quantum clock that interacts with the system through a time-independent total Hamiltonian. Although the joint dynamics of system and clock is autonomous, the interaction induces an effective time-dependent Hamiltonian on the system alone: for each clock state, the Hamiltonian acting on $S$ is selected through its interaction with the clock. Importantly, this \emph{autonomous picture} exactly recovers the desired time-dependent Hamiltonian unitary dynamics, without ever invoking an external agent.

Concretely, we introduce an auxiliary clock system $C$ with Hilbert space $\mathcal{H}_C=L^2(\mathbb{R})$, spanned by the continuous family of (generalized) vectors $\{\ket{t}\}_{t\in\mathbb R}$ satisfying $\braket{t|t'}=\delta(t-t')$. We now consider the evolution of the composite system on $\mathcal{H}_S\otimes\mathcal{H}_C$, governed by the total Hamiltonian
\begin{equation}
    \hat H_{\text{tot}}=\mathbbm 1\otimes \hat H_C+\hat H_{\text{int}},
\end{equation}
where $\hat H_{C}$ is the clock Hamiltonian and $\hat H_{\text{int}}$ couples the clock to the system
\begin{equation}\label{eq:int_ham}
    \hat H_{\text{int}}=\int_{\mathbb{R}}\! dt'\, \hat H_S(t')\otimes \ket{t'}\!\bra{t'},
\end{equation}
The clock Hamiltonian is taken to be\footnote{This construction is idealized, as it requires the clock Hamiltonian $\hat H_C$ to be unbounded from below. We adopt it here for simplicity; two analogous constructions based on bounded Hamiltonians are given in Appendix~\ref{app:autonomous_physical}.}
\begin{equation}\label{eq:clock_ham}
    \hat H_C = \frac{1}{2\pi}\int_{\mathbb R}\!dE\,dt\,dt'E \, e^{iE(t-t')}\ket{t\,}\!\bra{t'},
\end{equation}
which generates translations in the clock basis,
\begin{equation}
    e^{-i\hat H_C s}\ket{t}=\ket{t+s}.
\end{equation}
The evolution of the composite system is then
\begin{equation}
    \hat\rho_{tot}(\Delta t) = e^{-i\hat H_{tot}\Delta t}\,\hat\rho_{tot}(0) \, e^{i\hat H_{tot}\Delta t}.
\end{equation}
For an arbitrary $\hat\rho_{tot}(0)$ this leads to a highly non-trivial evolution. However, when there is no initial correlation between system and clock we can recover the driven dynamics on the system $S$. We formulate this insight as 
\begin{theorem}[Autonomization of Quantum Systems]\label{th:autonomization}
    Given an initial product state $\hat\rho_{tot}(0) = \hat \rho_S \otimes \hat \rho_C $, the reduced dynamics on the system take the form of a mixture of unitary evolutions weighted by the initial clock distribution
    \begin{equation}\label{eq:autonomous_evolution}
    \Tr_C[\hat\rho_{tot}(\Delta t)]\! =\! \int_{\mathbb{R}}dt'\,p_C(t')\,\hat U(t'+\Delta t,t')\,\hat\rho_S\,\hat U^\dagger(t'+\Delta t,t'),
    \end{equation}
    where $p_C(t) = \braket{t|\hat\rho_C|t}$.

    In particular, for the initial clock state $\hat\rho_C= \ket{t_0}\!\bra{t_0}$, the reduced dynamics coincide with the original non-autonomous evolution from $t_0$ to $t = t_0 + \Delta t$:
    \begin{equation}\label{eq:autonomous_vs_nonautonomous}
        \hat \rho_S(t) = \Tr_C[\hat\rho_{tot}(\Delta t)] = \hat U(t,t_0) \, \hat \rho_S \, \hat U^\dagger(t,t_0).
    \end{equation}
\end{theorem}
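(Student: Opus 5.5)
The plan is to compute the joint evolution operator $e^{-i\hat H_{\text{tot}}\Delta t}$ explicitly on generalized product vectors $\ket{\psi}\otimes\ket{t'}$. The aim is to show it acts as a \emph{controlled unitary}:
\begin{equation}
e^{-i\hat H_{\text{tot}}\Delta t}\bigl(\ket{\psi}\otimes\ket{t'}\bigr)=\hat U(t'+\Delta t,t')\ket{\psi}\otimes\ket{t'+\Delta t}.
\end{equation}
To establish this, define the candidate operator
\begin{equation}
\hat V(s)=\int_{\mathbb R}dt'\,\hat U(t'+s,t')\otimes\ket{t'+s}\!\bra{t'}
\end{equation}
and check three things:
\begin{enumerate}
\item $\hat V(0)=\mathbbm 1$, which holds because $\hat U(t',t')=\mathbbm 1$ and the clock vectors resolve the identity.
\item $\hat V(s)$ is unitary, i.e.\ it is a direct integral of unitaries composed with a clock translation.
\item $i\,\partial_s\hat V(s)=\hat H_{\text{tot}}\hat V(s)$.
\end{enumerate}
Uniqueness of solutions of the Schr\"odinger equation (or Stone's theorem) then gives $\hat V(s)=e^{-i\hat H_{\text{tot}}s}$.

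The derivative in item 3 splits into two pieces. Differentiating the system factor gives $i\partial_s\hat U(t'+s,t')=\hat H_S(t'+s)\hat U(t'+s,t')$. Since $\hat H_{\text{int}}$ acts on $\ket{t'+s}$ as multiplication by $\hat H_S(t'+s)$, this piece is exactly $\hat H_{\text{int}}\hat V(s)$. Differentiating the clock ket gives $i\partial_s\ket{t'+s}=\hat H_C\ket{t'+s}$, which follows from the stated translation property $e^{-i\hat H_C s}\ket{t}=\ket{t+s}$. Adding the two pieces gives $(\mathbbm 1\otimes\hat H_C+\hat H_{\text{int}})\hat V(s)$.

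The reduced dynamics then follow directly. Write
\begin{equation}
\hat\rho_C=\int dt'\,dt''\,\rho_C(t',t'')\ket{t'}\!\bra{t''}.
\end{equation}
The joint state becomes
\begin{equation}
\int dt'\,dt''\,\rho_C(t',t'')\;\hat U(t'+\Delta t,t')\,\hat\rho_S\,\hat U^\dagger(t''+\Delta t,t'')\otimes\ket{t'+\Delta t}\!\bra{t''+\Delta t}.
\end{equation}
Taking the partial trace over $C$ with $\braket{t''+\Delta t|t'+\Delta t}=\delta(t'-t'')$ collapses the double integral to the diagonal. This yields Eq.~\eqref{eq:autonomous_evolution} with weight $p_C(t')=\rho_C(t',t')$.

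For the special case $\hat\rho_C=\ket{t_0}\!\bra{t_0}$, the weight is formally $p_C=\delta(\cdot-t_0)$. It is cleaner to apply $\hat V(\Delta t)$ directly to the pure product state; this gives $\hat U(t,t_0)\hat\rho_S\hat U^\dagger(t,t_0)\otimes\ket{t}\!\bra{t}$, which is a product state. Its system marginal is Eq.~\eqref{eq:autonomous_vs_nonautonomous}, provided the clock factor is normalized (e.g.\ via a limit of narrow wave packets).

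The main obstacle is rigor rather than algebra. First, $\hat H_C$ is unbounded, and the $\ket{t}$ are improper states, so $\ket{t_0}\!\bra{t_0}$ is not a trace-class state. Second, $\hat H_C$ and $\hat H_{\text{int}}$ do not commute, so a naive Trotter or product-formula argument is delicate. I would handle the first by working with $L^2$ wavefunctions $\phi(t)\in\mathcal H_S\otimes L^2(\mathbb R)$. On this space $\hat V(s)$ acts as $(\hat V(s)\phi)(t)=\hat U(t,t-s)\phi(t-s)$, and verifying the generator is a computation on a dense domain of differentiable, compactly supported functions, assuming $\hat H_S(t)$ is bounded and continuous in $t$. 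I would then obtain the sharp-clock statement as a limit of narrow clock wave packets.
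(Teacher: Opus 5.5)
Your proof is correct, and it reaches the key identity $e^{-i\hat H_{\text{tot}}\Delta t}(\ket{\psi}\otimes\ket{t'})=\hat U(t'+\Delta t,t')\ket{\psi}\otimes\ket{t'+\Delta t}$ by a genuinely different route from the paper. The partial-trace step and the specialization to $\ket{t_0}\!\bra{t_0}$ are the same as in the paper; your wave-packet remark tightens what the paper does formally there.

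\textbf{The paper's route.} The paper derives the identity constructively. It applies the Lie--Trotter formula $e^{-i\hat H_{\text{tot}}t}=\lim_n\bigl(e^{-i\hat H_C t/n}e^{-i\hat H_{\text{int}}t/n}\bigr)^n$ to $\ket{\psi}\ket{t_0}$. Each factor $e^{-i\hat H_{\text{int}}t/n}$ acts on the current clock eigenvector as $e^{-i\hat H_S(\cdot)t/n}$ at the current clock time, and $e^{-i\hat H_C t/n}$ then advances the clock. The resulting ordered product is recognized as a Riemann approximation of the time-ordered exponential.

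\textbf{Your route.} You posit the answer $\hat V(s)$ and verify it through $\hat V(0)=\mathbbm 1$, unitarity, and $i\partial_s\hat V=\hat H_{\text{tot}}\hat V$. Your two-piece derivative and its wavefunction form $(\hat V(s)\phi)(t)=\hat U(t,t-s)\phi(t-s)$ both check out.

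\textbf{What each buys.} The paper's route needs no ansatz and makes the ``clock ticks select the instantaneous Hamiltonian'' picture explicit. However, it uses formally a product formula for two unbounded, non-commuting generators, together with convergence of the Riemann product. Your route trades the ansatz for a much cleaner path to rigor. By the composition law, $\hat V(s)\hat V(s')=\hat V(s+s')$, so $\hat V$ is a strongly continuous unitary group that leaves $C^1_c(\mathbb R,\mathcal H_S)$ invariant. Stone's theorem plus the invariant-domain core theorem then identify its generator with the closure of $\hat H_{\text{tot}}$ on that domain. This also settles which self-adjoint operator $e^{-i\hat H_{\text{tot}}s}$ refers to.

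\textbf{A more direct form.} Your construction amounts to $\hat V(s)=\hat W\,(\mathbbm 1\otimes e^{-i\hat H_C s})\,\hat W^\dagger$ with $\hat W=\int dt\,\hat U(t,t_*)\otimes\ket{t}\!\bra{t}$. Formally $\hat W(\mathbbm 1\otimes\hat H_C)\hat W^\dagger=\mathbbm 1\otimes\hat H_C+\hat H_{\text{int}}$, so $\hat H_{\text{tot}}$ is unitarily equivalent to the free clock Hamiltonian. This equivalence needs only $\hat H_S(t)$ bounded on compact time intervals, not uniformly bounded, which covers protocols like the $gt^3$ drive in the paper's example.
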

\begin{proof}
    The proof consists of applying the Lie--Trotter formula and exploiting the specific form of $\hat H_{\text{int}}$ to simplify the evolution. Full details are provided in Appendix~\ref{app:autonomous}.
\end{proof}

\begin{figure*}[t]
    \centering
    \begin{minipage}[c]{0.35\textwidth}
        \centering
    \includegraphics[width=\linewidth]{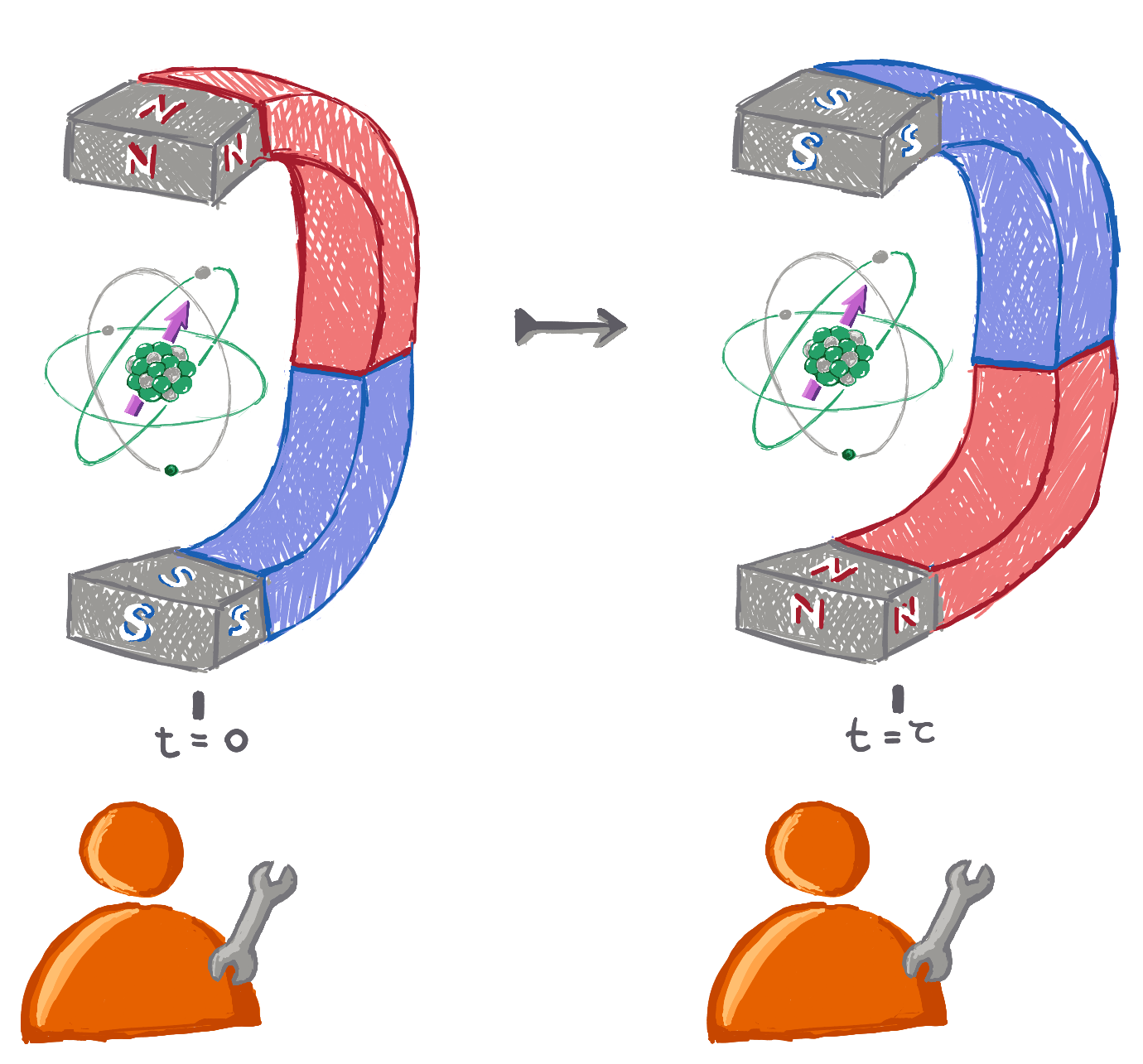}
    \end{minipage}
    \hspace{0.1\textwidth}
    \begin{minipage}[c]{0.35\textwidth}
        \centering        \includegraphics[width=\linewidth]{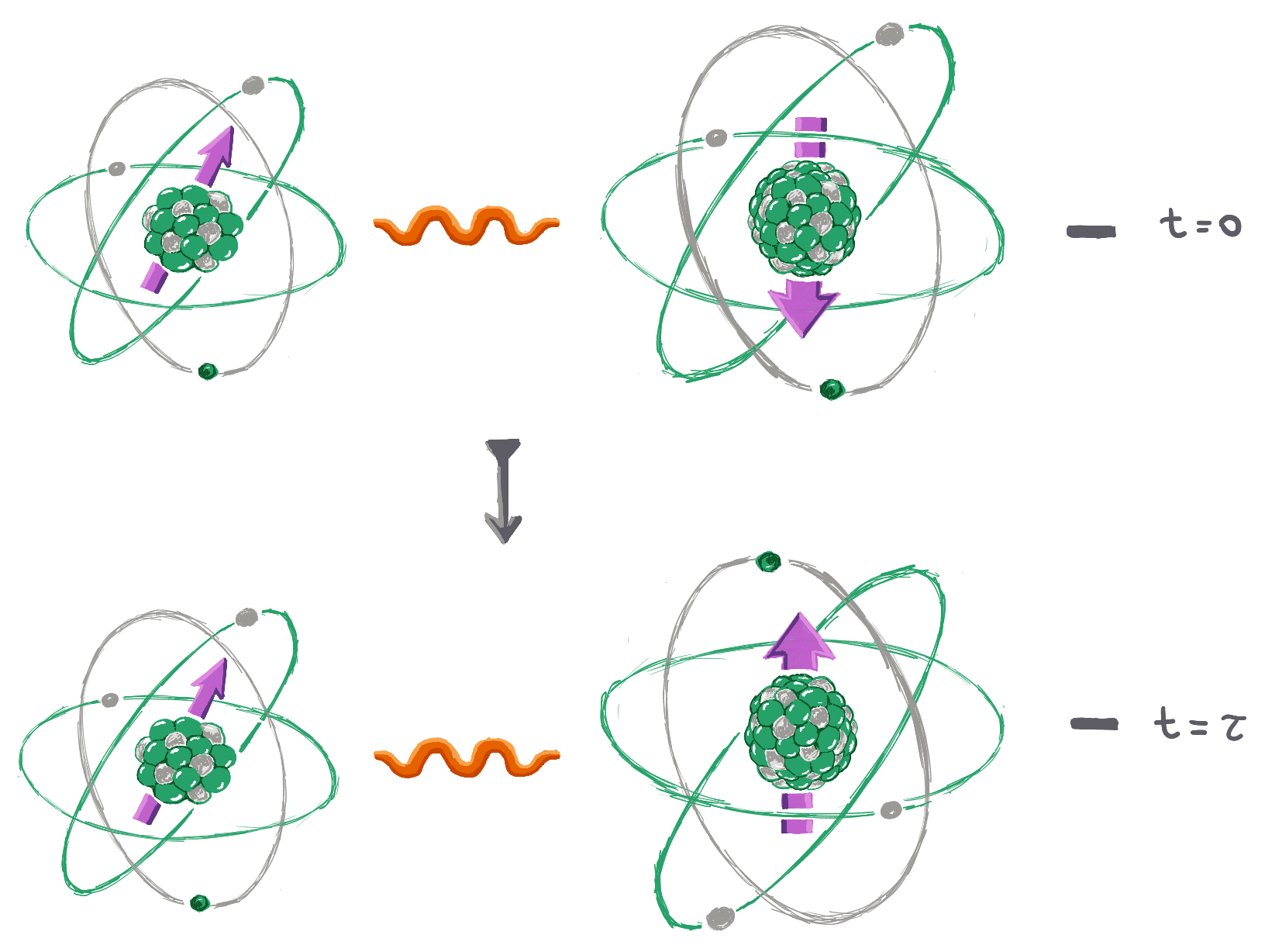}
    \end{minipage}
    \caption{\label{fig:illustration}
    Illustration of the autonomization procedure. Left: In the usual non-autonomous description, a qubit is driven by a time-dependent Hamiltonian $\hat H_S(t)$, represented here by an externally controlled magnetic field that is changed by a classical agent. The energetic cost and fluctuations of this control are not included in the model. Right: In the autonomous description, the control device is promoted to a quantum system. The qubit interacts with a second system, here depicted as a rotating qudit, whose state selects the effective Hamiltonian acting on the system. The joint evolution is generated by a time-independent Hamiltonian, while the reduced dynamics of the qubit reproduces the original driven evolution.}
\end{figure*}

The theorem’s main statement shows that any dynamics generated by a time-dependent Hamiltonian admits an exact autonomous realization: it can be reproduced by an energy-conserving, time-independent dynamics on a larger system; see the illustration in Figure~\ref{fig:illustration}. It is worth emphasizing that this is stronger than the standard Stinespring dilation of quantum channels~\cite{stinespring}, which states that every CPTP map can be realized by coupling the system to an ancilla, evolving the composite system unitarily, and then discarding part of it. Indeed, for each time pair $(t,t')$, the unitary $\hat U(t,t')$ can be realized as the reduced action of some unitary $\hat V_{t,t'}$ on an enlarged system. However, this does not guarantee that the unitaries corresponding to different times arise from a single time-independent Hamiltonian. The nontrivial content of the present construction is precisely that the entire family $\{\hat U(t,t')\}_{t,t'}$ is recovered by evolving with $\hat H_{tot}$ for a duration $t-t'$ and choosing the initial clock state to be $\hat\rho_C = \ket{t'}\!\bra{t'}$. 

Moreover, beyond reproducing the exact driven dynamics, this construction provides a natural way to model imprecise time control: one simply chooses a different initial state for the clock; see Eq.~\eqref{eq:autonomous_evolution}. In the remainder of this section and throughout the next, we restrict attention to the exact driven dynamics, corresponding to the initial clock state $\hat \rho_C=\ket{t_0}\!\bra{t_0}$. We return to the general case in Section~\ref{sec:imprecise_control}.

Having established the equivalence of the two pictures at the level of states, we now turn to the correspondence between observables.
\begin{corollary}\label{th:observables}
    Consider an observable on the composite system of the form
    \begin{equation}
        \hat O_{\text aut}
        =
        \int_{\mathbb{R}} dt^\prime \, \hat O_{t^\prime}\otimes \ket{t^\prime}\!\bra{t^\prime}
    \end{equation}
    and the clock initial state to be pure $\hat \rho_C=\ket{t_0}\!\bra{t_0}$. Then, for every time $t=t_0+\Delta t$, $\hat O_t$ is the unique time-dependent observable on the system whose moments reproduce those of $\hat O_\text{aut}$ on the composite system:
    \begin{equation}\label{eq:observable_statistics}
        \Tr_{SC}\!\!\left[
            \hat O_\text{aut}^n\, e^{-i \hat H_\text{tot} \Delta t}
            (\hat \rho_S \otimes \ket{t_0}\!\bra{t_0}\bigr)
            e^{i \hat H_\text{tot} \Delta t}
        \right]
        =
        \Tr_S\!\left[
            \hat O_{t}^n\, \hat \rho_S(t)
        \right].
    \end{equation}
\end{corollary}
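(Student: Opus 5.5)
Since $\hat O_\text{aut}$ is block-diagonal in the clock basis, its powers are computed blockwise: $\hat O_\text{aut}^n=\int_{\mathbb R}dt'\,\hat O_{t'}^n\otimes\ket{t'}\!\bra{t'}$. This follows from $\braket{t'|t''}=\delta(t'-t'')$, formally by the same manipulation that gives $\hat H_\text{int}^n$ blockwise. The proof therefore reduces to knowing the joint state $\hat\rho_{tot}(\Delta t)$ in the clock basis, and more precisely only its clock-diagonal blocks $\bra{t'}\hat\rho_{tot}(\Delta t)\ket{t'}$, since the trace against a clock-diagonal operator only sees these.

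The key step is to strengthen the computation behind Theorem~\ref{th:autonomization} from the reduced state to the full joint state. I would show that
\begin{equation}
e^{-i\hat H_\text{tot}\Delta t}\bigl(\ket{\psi}\otimes\ket{t_0}\bigr)=\hat U(t_0+\Delta t,t_0)\ket{\psi}\otimes\ket{t_0+\Delta t}.
\end{equation}
To do this, I would use the same Lie--Trotter splitting as in Appendix~\ref{app:autonomous}. Each step $e^{-i\hat H_\text{int}\delta}e^{-i\hat H_C\delta}$ first shifts the clock $\ket{t}\mapsto\ket{t+\delta}$. It then applies $e^{-i\hat H_S(t+\delta)\delta}$ to the system, because $\hat H_\text{int}$ acts as $\hat H_S(t')$ on the clock eigenvector $\ket{t'}$. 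After $N$ steps with $N\delta=\Delta t$, the system carries the time-ordered product $\prod_k e^{-i\hat H_S(t_0+k\delta)\delta}$, which converges to $\hat U(t,t_0)$, while the clock sits at $\ket{t}$.

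By linearity, and by writing $\hat\rho_S=\sum_j p_j\ket{\psi_j}\!\bra{\psi_j}$, this gives $\hat\rho_{tot}(\Delta t)=\hat\rho_S(t)\otimes\ket{t}\!\bra{t}$. Inserting this into the left-hand side of Eq.~\eqref{eq:observable_statistics}, the clock factor $\ket{t'}\!\bra{t'}t\rangle\!\langle t|$ collapses the $t'$-integral to $t'=t$. This yields $\Tr_S[\hat O_t^n\hat\rho_S(t)]$. Because the clock state is a generalized (delta-normalized) vector, I would carry out this step with the convention already used in Theorem~\ref{th:autonomization}, where $\Tr_C$ of $\ket{t}\!\bra{t}$ is taken to be one. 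Equivalently, I would treat $\ket{t_0}$ as the limit of narrow normalized wavepackets, for which the identity holds exactly in the limit.

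For uniqueness, suppose some system observable $\hat A$ satisfies $\Tr[\hat A^n\hat\rho_S(t)]=\Tr[\hat O_t^n\hat\rho_S(t)]$ for all $n$ and for every initial $\hat\rho_S$. Since $\hat U$ is unitary, $\hat\rho_S(t)$ ranges over all states. Taking $n=1$ then gives $\Tr[(\hat A-\hat O_t)\sigma]=0$ for all density operators $\sigma$, hence $\hat A=\hat O_t$. The main obstacle is the rigorous version of the product-form claim for the joint state with a non-normalizable clock vector, together with the Trotter limit for unbounded $\hat H_C$. I would handle both exactly as in the appendix proof of Theorem~\ref{th:autonomization}, at the level of rigor adopted there.
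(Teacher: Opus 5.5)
Your proposal is correct and follows essentially the same route as the paper. Both compute the powers blockwise, $\hat O_\text{aut}^n=\int dt'\,\hat O_{t'}^n\otimes\ket{t'}\!\bra{t'}$, and both reuse the Trotter computation of Appendix~\ref{app:autonomous}, $e^{-i\hat H_\text{tot}\Delta t}\ket{\psi}\ket{t_0}=\hat U(t,t_0)\ket{\psi}\ket{t}$, with the same delta-normalization convention for the clock. Your uniqueness step ($n=1$, with $\hat\rho_S(t)$ ranging over all states because $\hat U$ is unitary) spells out the paper's brief remark that uniqueness follows because the equality must hold for every $\hat\rho_S$.
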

\begin{proof}
    It can be proved by noting that $\hat O^n_\text{aut} = \int_{\mathbb{R}} dt^\prime \,\hat O^n_{t^\prime} \otimes \ket{t^\prime}\!\bra{t^\prime}$ and applying the same reasoning of the proof of Theorem~\ref{th:autonomization}, detailed in Appendix~\ref{app:autonomous}. Uniqueness follows by contradiction, using the fact that the equality must hold for every $\hat \rho_S$.
\end{proof}
This observation allows us to relate observables on the autonomous composite system to observables on the system alone. As we will see in the next section, this leads to a natural definition of a work operator for the system. 

\section{Work Operator}\label{sec:work_operator}
Characterizing work in a closed but non-isolated system is subtle, because the energy cost of implementing the classical control encoded in the time-dependent Hamiltonian is not explicitly accounted for. In the previous section, we reformulated the time-dependent scenario entirely in terms of a closed, isolated autonomous system. In this energy-conserving setting, work is unambiguously defined: it is the energy lost by the clock and transferred to the system. At the quantum level, this transfer is captured, as an operator, by the change in the clock Hamiltonian in the Heisenberg picture (see Section~\ref{sec:discussion} for an extensive discussion). We thus identify the autonomous work operator
\begin{equation}\label{eq:work_autonomous}
    \hat W_{\text{aut}}(\Delta t)
    =
    -\left(
    e^{i\hat H_{\text{tot}}\Delta t} \,\hat H_C \,e^{-i\hat H_{\text{tot}}\Delta t}
    -
    \hat H_C
    \right).
\end{equation}
Beyond the total energy transfer encoded by
$\hat W_{\mathrm{aut}}(\Delta t)$, the autonomous description also identifies its
instantaneous flow. The energy current from the clock to the system is, in the Schrödinger picture,
\begin{equation}\label{eq:aut_current}
    \hat w_{\mathrm{aut}}
    :=
    -i[\hat H_{\mathrm{tot}},\hat H_C].
\end{equation}

In the Heisenberg picture, currents at different times are all represented as observables referred to the same initial state and can therefore be consistently integrated to obtain the work operator \mbox{$\hat W_{\mathrm{aut}}(\Delta t) =\int_0^{\Delta t}\!ds\,e^{i\hat H_{\mathrm{tot}}s} \hat w_{\mathrm{aut}} e^{-i\hat H_{\mathrm{tot}}s}$}.

This representation also clarifies the role of $\Delta t$: although $\hat W_{\mathrm{aut}}(\Delta t)$ is constructed by integrating a Heisenberg-picture current, it does not represent work measured at time $\Delta t$. Rather, it is an observable evaluated on the initial state, with $\Delta t$ parameterizing the duration of the process over which the energy current is accumulated. Thus, $\hat W_{\mathrm{aut}}(\Delta t)$ answers the question: ``How much work would be performed, or equivalently how much energy would be transferred to the system, if the autonomous dynamics were implemented from time $t_0$ to time $t=t_0+\Delta t$?'' Owing to the operatorial nature of energy, this transfer is encoded not only in changes of energy populations, but also in coherences. One could alternatively define a \emph{retroactive} work operator, with the unitary operators placed on the opposite sides, corresponding to the energy transfer inferred at the final time, after the process has occurred. However, since the definition of work does not involve an intermediate measurement, these two descriptions are physically equivalent.

Our autonomization construction shows that there is a unique observable on the system that reproduces the same statistics as this work operator. Owing to its importance, we state this result as a theorem.
\begin{theorem}[Work Operator]\label{th:work}
    The work operator
    \begin{equation}\label{eq:work_operator_system}
        \hat W(t,t_0) = \hat U^\dagger(t,t_0)\,\hat H_S(t)\,\hat U(t,t_0) - \hat H_S(t_0)
    \end{equation}
    is the unique operator on $S$ that has the same statistics on the initial state of the system as the autonomous work operator $\hat W_\text{aut}(\Delta t)$ on the composite system, when $\hat \rho_C=\ket{t_0}\!\bra{t_0}$, i.e.
    \begin{equation}
        \Tr_{SC}\!\left[\hat W^n_\text{aut}(\Delta t)(\hat \rho_S \otimes \ket{t_0}\!\bra{t_0})\right] = \Tr_S\!\left[\hat W^n(t,t_0) \,\hat \rho_S\right]\!.
    \end{equation}
\end{theorem}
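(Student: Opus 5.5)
The plan is to use energy conservation to rewrite the autonomous work operator as a change of the interaction energy, and then apply Corollary~\ref{th:observables}. Since $\hat H_{\text{tot}}$ is time independent, it commutes with its own evolution. From $\hat H_C = \hat H_{\text{tot}} - \hat H_{\text{int}}$ we therefore get
\begin{equation*}
    \hat W_{\text{aut}}(\Delta t) = e^{i\hat H_{\text{tot}}\Delta t}\,\hat H_{\text{int}}\,e^{-i\hat H_{\text{tot}}\Delta t} - \hat H_{\text{int}},
\end{equation*}
where the identity factor on $S$ in $\mathbbm 1\otimes\hat H_C$ is kept implicit. Both terms are now of the block-diagonal form $\int dt'\,\hat O_{t'}\otimes\ket{t'}\!\bra{t'}$ with $\hat O_{t'} = \hat H_S(t')$. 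The task is thus reduced to computing the moments of a Heisenberg-evolved block-diagonal operator minus an unevolved one.

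The key structural fact, taken from the proof of Theorem~\ref{th:autonomization} (Appendix~\ref{app:autonomous}), is the action of the total evolution on clock eigenstates:
\begin{equation*}
    e^{-i\hat H_{\text{tot}}\Delta t}\bigl(\ket{\psi}\otimes\ket{t'}\bigr) = \hat U(t'+\Delta t,t')\ket{\psi}\otimes\ket{t'+\Delta t}.
\end{equation*}
Equivalently, $e^{-i\hat H_{\text{tot}}\Delta t} = \int dt'\,\hat U(t'+\Delta t,t')\otimes\ket{t'+\Delta t}\!\bra{t'}$, which is the Lie--Trotter statement. Conjugating then gives
\begin{equation*}
    e^{i\hat H_{\text{tot}}\Delta t}\hat H_{\text{int}}e^{-i\hat H_{\text{tot}}\Delta t} = \int dt'\,\hat U^\dagger(t'+\Delta t,t')\,\hat H_S(t'+\Delta t)\,\hat U(t'+\Delta t,t')\otimes\ket{t'}\!\bra{t'}.
\end{equation*}
Hence $\hat W_{\text{aut}}(\Delta t) = \int dt'\,\hat W(t'+\Delta t,t')\otimes\ket{t'}\!\bra{t'}$, where the integrand is exactly the operator in Eq.~\eqref{eq:work_operator_system}. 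Powers of this operator stay block diagonal, with blocks $\hat W^n(t'+\Delta t,t')$, because the projectors $\ket{t'}\!\bra{t'}$ are orthogonal.

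Evaluating $\Tr_{SC}$ on $\hat\rho_S\otimes\ket{t_0}\!\bra{t_0}$ selects the block $t'=t_0$. This gives $\Tr_S[\hat W^n(t,t_0)\hat\rho_S]$ with $t=t_0+\Delta t$, which proves the moment identity. This is a static (time-zero) version of Corollary~\ref{th:observables} applied to the already-Heisenberg-evolved operator.

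Uniqueness follows as in the Corollary. Suppose $\hat W'$ reproduces the first moment $\Tr[\hat W'\hat\rho_S]$ for every $\hat\rho_S$. Then $\Tr[(\hat W'-\hat W)\hat\rho_S]=0$ for all density operators, and since these span all operators, $\hat W'=\hat W$. Only the $n=1$ condition is needed here.

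The main obstacle is rigor in the continuum clock basis. The generalized state $\ket{t_0}\!\bra{t_0}$ makes $\Tr_C$ formally produce $\delta(0)$ factors when the products in $\hat W_{\text{aut}}^n$ are expanded. I would handle this as the paper implicitly does: regularize the clock state as a narrow wavepacket $p_C$, obtain $\int dt'\,p_C(t')\,\Tr[\hat W^n(t'+\Delta t,t')\hat\rho_S]$, and take the limit $p_C\to\delta(\cdot-t_0)$. This requires continuity of $t'\mapsto\hat W(t'+\Delta t,t')$, which holds for smooth $\hat H_S(t)$. A secondary point is to justify commuting $\hat H_{\text{tot}}$ with its exponential given the unbounded $\hat H_C$; this can be bypassed by working directly with the explicit kernel of $e^{-i\hat H_{\text{tot}}\Delta t}$, since $\hat H_C$ itself, being a translation generator, is not block diagonal.
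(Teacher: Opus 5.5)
Your proof is correct and follows the paper's route. Energy conservation rewrites $\hat W_{\text{aut}}(\Delta t)$ as the change of $\hat H_{\text{int}}$, and the clock-translation structure from the proof of Theorem~\ref{th:autonomization}, which is what Corollary~\ref{th:observables} packages, then reduces everything to the block $t'=t_0$.

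Your explicit identity $\hat W_{\text{aut}}(\Delta t)=\int dt'\,\hat W(t'+\Delta t,t')\otimes\ket{t'}\!\bra{t'}$ is in fact slightly more careful than the paper's one-line proof. It controls the cross terms in the powers of the difference, which two separate applications of the corollary to $\hat H_{\text{int}}$ at $t_0$ and $t$ would not by themselves fix. Your wavepacket regularization also handles the $\delta(0)$ normalization that the paper leaves formal.
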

\begin{proof}
    The theorem can be proved by noting that $\hat W_\text{aut}(\Delta t) = e^{i\hat H_\text{tot}\Delta t} \hat H_\text{int} e^{-i\hat H_\text{tot}\Delta t} - \hat H_\text{int}$ due to energy conservation, and by applying Corollary \ref{th:observables} to $\hat H_\text{int}$ at the initial time $t_0$ and final time $t = t_0+\Delta t$.
\end{proof}

Likewise, the correspondence extends to energy currents: the autonomous current of the composite system, defined in Eq.~\eqref{eq:aut_current}, corresponds uniquely to the non-autonomous work-current observable in the Schrödinger picture
\begin{equation}
    \hat w(t) = \partial_t \hat H_S(t).
\end{equation}

Theorem~\ref{th:work} shows that the work operator in Eq.~\eqref{eq:work_operator_system} follows directly from the autonomous, energy-conserving formulation, rather than being postulated for the driven system. In this formulation, work is identified with the energy exchanged between the clock and the system. The construction is therefore self-consistent and principled: the corresponding observable is uniquely determined by the requirement that it characterizes the energy transferred to the system. In this way, the definition avoids the ambiguities and ad hoc choices that arise in other approaches to quantum work. Moreover, it leads to genuine predictions that can differ from those obtained with other characterizations of work, such as the TPM scheme~\cite{Tasaki2000, Kurchan2001} and quasi-probability-based approaches~\cite{Lostaglio2018}. To see this, consider the following example.

\par\medskip
\noindent\textbf{Example 1.} Consider a qubit in the initial state $\ket{\psi}=\ket{0}$, undergoing evolution under the instantaneous quench
\begin{equation} \label{eq:example_hamiltonian}
  \hat H_S(t)=
  \begin{cases}
    \hat \sigma_x, & t=t_0,\\[2mm]
    \hat \sigma_z+\hat \sigma_x, & t>t_0.
  \end{cases}
\end{equation}
We can calculate the work to be
\begin{equation}
  \hat W(t,t_0)=
  \begin{cases}
    0, & t=t_0,\\[2mm]
    \hat \sigma_z, & t>t_0.
  \end{cases}
\end{equation}
Since the initial state is an eigenstate of the work operator, it follows that, for any $t>0$, the work performed on the system is deterministically $w=1$. By contrast, both the TPM and the QP schemes assign a nonzero probability to outcomes with $w>1$; see Appendix~\ref{app:example} for details. This difference has direct operational consequences. Suppose, for instance, that the energy supply is prepared so as to provide at most one unit of work, then the process will fail whenever more energy is required. The work operator then predicts that the process always succeeds, since exactly one unit of work is supplied. On the contrary, the TPM scheme or quasi-probability-based approaches assign instead a nonzero probability to failure. The origin of this discrepancy is that these quantities have different physical meanings: TPM and quasi-probability approaches are tied to measurements performed on the system, which modify the process and, in this example, affect the very resources that enable it. These resources are coherences in energy, which are necessary for the process but are destroyed by the TPM measurement and altered by the quasi-probability construction. The work operator derived here, by contrast, quantifies the energy transfer without introducing such measurement back-action. Hence the work operator is not only the unique operator compatible with energy transfer in the autonomous, energy-conserving setting, but also yields physical predictions that are not captured by measurement-based approaches to quantum work.
\section{Quantum Fluctuation Theorem}\label{sec:fluctuation_theorem}
We have shown that the work operator arises naturally from energy transfer in the autonomous, energy-conserving description. We now turn to the corresponding fluctuation relation. In particular, we show that this framework admits a quantum fluctuation theorem that generalizes the Jarzynski equality~\cite{Jarzynski1997} to coherent work processes, and that the resulting expression takes a remarkably simple form.
\begin{theorem}[Quantum Fluctuation Theorem]\label{th:fluctuation_theorem}
    For processes where the state is initialized in a thermal state
    $\hat\rho_S(t_0)=\hat\pi_S(t_0)=\frac{1}{Z_{t_0}}e^{-\beta\hat H_S(t_0)}$
    at inverse temperature $\beta=(k_BT)^{-1}$, with
    $Z_t=\Tr[e^{-\beta\hat H_S(t)}]$, the work statistics satisfy
    \begin{equation}\label{eq:fluctuation_theorem}
        \braket{e^{-\beta\hat W(t,t_0)}}_{\hat\pi_S(t_0)}
        =
        e^{-\beta\Delta F}
        +
        \Delta_{t,t_0},
    \end{equation}
    where $\Delta F=k_BT\ln(Z_{t_0}/Z_t)$ is the change in equilibrium free energy, and
    \begin{equation}\label{eq:fluctuation_correction}
        \Delta_{t,t_0}
        \!=
        \left\langle
            e^{-\beta\hat W(t,t_0)}
            \!-
            \!e^{-\beta\hat U^\dagger(t,t_0)\hat H_S(t)\hat U(t,t_0)}
            e^{\beta\hat H_S(t_0)}
        \right\rangle_{\hat\pi_S(t_0)} \!.
    \end{equation}
    In particular, for all $t$ and $t_0$, $\Delta_{t,t_0}\geq0$, and
    \begin{equation}\label{eq:classicality}
        \Delta_{t,t_0}=0
        \quad\Longleftrightarrow\quad
        \left[
            \hat U^\dagger(t,t_0)\hat H_S(t)\hat U(t,t_0),
            \hat H_S(t_0)
        \right]
        =0.
    \end{equation}
    For explicit evaluation, the correction can equivalently be written as
    \begin{align}\label{eq:fluctuation_correction_explicit}
        \Delta_{t,t_0}
        &=
        \frac{\beta^2}{Z_{t_0}}
        \int_0^1\!d\lambda
        \int_0^1\!dx\,x\,
        \Tr\!\left[
            \hat\Xi_{\lambda,x}
            e^{-\lambda\beta\hat H_S(t_0)}
        \right],
        \\
        \hat\Xi_{\lambda,x}
        &=\
        e^{-x\beta\hat K_\lambda}
        \!\left[
            \hat U^\dagger(t,t_0)\hat H_S(t)\hat U(t,t_0),
            \hat H_S(t_0)
        \right]\!
        e^{-(1-x)\beta\hat K_\lambda},
        \nonumber
    \end{align}
    with $\hat K_\lambda
    =\hat U^\dagger(t,t_0)\hat H_S(t)\hat U(t,t_0)
    -\lambda\hat H_S(t_0)$.
\end{theorem}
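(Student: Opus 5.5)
The plan is to set $\hat A:=\hat U^\dagger(t,t_0)\hat H_S(t)\hat U(t,t_0)$ and $\hat B:=\hat H_S(t_0)$, so that by Theorem~\ref{th:work} the work operator is $\hat W(t,t_0)=\hat A-\hat B$ and $\hat\pi_S(t_0)=e^{-\beta\hat B}/Z_{t_0}$. The decomposition in Eq.~\eqref{eq:fluctuation_theorem} is then essentially definitional. Adding and subtracting $\braket{e^{-\beta\hat A}e^{\beta\hat B}}_{\hat\pi_S(t_0)}$ and using cyclicity of the trace gives
\begin{equation*}
\Tr\!\left[e^{-\beta\hat B}e^{-\beta\hat A}e^{\beta\hat B}\right]=\Tr\!\left[e^{-\beta\hat A}\right]=\Tr\!\left[e^{-\beta\hat H_S(t)}\right]=Z_t,
\end{equation*}
where the middle equality holds because $\hat A$ is unitarily equivalent to $\hat H_S(t)$. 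The subtracted term therefore equals $Z_t/Z_{t_0}=e^{-\beta\Delta F}$. This leaves
\begin{equation*}
Z_{t_0}\Delta_{t,t_0}=\Tr\!\left[e^{-\beta\hat B}e^{-\beta(\hat A-\hat B)}\right]-\Tr\!\left[e^{-\beta\hat A}\right].
\end{equation*}

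For positivity, apply the Golden--Thompson inequality $\Tr e^{\hat X+\hat Y}\le\Tr[e^{\hat X}e^{\hat Y}]$ with $\hat X=-\beta\hat B$ and $\hat Y=-\beta(\hat A-\hat B)$. Since $\hat X+\hat Y=-\beta\hat A$, this gives $\Delta_{t,t_0}\ge0$ at once. For the characterization in Eq.~\eqref{eq:classicality}, one direction is immediate: if $[\hat A,\hat B]=0$, the exponentials factorize and $\Delta_{t,t_0}=0$. For the converse I would invoke the known equality case of Golden--Thompson, namely that equality holds iff $[\hat X,\hat Y]=0$, and note that $[\hat B,\hat A-\hat B]=[\hat B,\hat A]$. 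This equality case is the step most dependent on external input and holds cleanly in finite dimension, so I would state it with that hypothesis. Alternatively, the converse can be read off the integral formula below, which also serves as an independent check of positivity.

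For the explicit form, interpolate with
\begin{equation*}
f(\lambda)=\Tr\!\left[e^{-\lambda\beta\hat B}e^{-\beta\hat K_\lambda}\right],\qquad \hat K_\lambda=\hat A-\lambda\hat B,
\end{equation*}
so that $Z_{t_0}\Delta_{t,t_0}=f(1)-f(0)=\int_0^1 f'(\lambda)\,d\lambda$. The first factor differentiates trivially. For the second, Duhamel's formula gives
\begin{equation*}
\partial_\lambda e^{-\beta\hat K_\lambda}=\beta\int_0^1\!dx\,e^{-x\beta\hat K_\lambda}\hat B\,e^{-(1-x)\beta\hat K_\lambda}.
\end{equation*}
Combining the two, $f'(\lambda)=\beta\,\Tr\!\big[e^{-\lambda\beta\hat B}\big(\int_0^1 dx\,g(x)-g(0)\big)e^{-\beta\hat K_\lambda}\big]$ with $g(x)=e^{-x\beta\hat K_\lambda}\hat B\,e^{x\beta\hat K_\lambda}$.

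Writing $g(x)-g(0)=\int_0^x g'(y)\,dy$, with $g'(y)=-\beta e^{-y\beta\hat K_\lambda}[\hat K_\lambda,\hat B]e^{y\beta\hat K_\lambda}$ and $[\hat K_\lambda,\hat B]=[\hat A,\hat B]$, and exchanging the order of integration produces a linear weight $(1-y)$. The integrand then becomes a $\hat\Xi$-type sandwich $e^{-y\beta\hat K_\lambda}[\hat A,\hat B]e^{-(1-y)\beta\hat K_\lambda}$ traced against $e^{-\lambda\beta\hat B}$. The remaining step is to bring this into exactly the form of Eq.~\eqref{eq:fluctuation_correction_explicit}, with weight $x$ and overall prefactor $\beta^2/Z_{t_0}$. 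I expect the sign and weight bookkeeping here to be the main obstacle. My first attempt would be the substitution $x=1-y$, combined with cyclicity and the fact that $\Delta_{t,t_0}$ is real. Reality lets me replace the trace by its complex conjugate, which reverses operator order and turns $[\hat A,\hat B]$ into $-[\hat A,\hat B]$, and that should convert the $(1-y)$ weight into the stated factor $x$ with the correct sign.
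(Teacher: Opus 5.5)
Your proposal is correct and follows essentially the paper's proof: cyclicity identifies the subtracted term with $Z_t/Z_{t_0}$, Golden--Thompson and its finite-dimensional equality case give $\Delta_{t,t_0}\geq0$ and \eqref{eq:classicality}, and Duhamel is applied to $\lambda\mapsto\Tr[e^{-\lambda\beta\hat B}e^{-\beta\hat K_\lambda}]$. The paper instead uses the commutator form $[\hat B,e^{-s\beta\hat K_\lambda}]=\beta\int_0^s du\,e^{-(s-u)\beta\hat K_\lambda}[\hat A,\hat B]e^{-u\beta\hat K_\lambda}$, which reaches weight $x$ directly after $x=1-r$; your left-anchored integrand is the mirror image and does need your complex-conjugation step, which is valid because $\Delta_{t,t_0}$ is real and $[\hat A,\hat B]^\dagger=-[\hat A,\hat B]$. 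Drop the aside that the converse (or positivity) can be read off the integral formula: for each $\lambda$, $\int_0^1dx\,\Tr[\hat\Xi_{\lambda,x}e^{-\lambda\beta\hat B}]=0$, so the integrand is not sign-definite, and the Golden--Thompson equality case is what carries that direction.
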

\begin{proof}
    The proof is given in Appendix~\ref{app:fluctuation_theorem}. It follows from the Duhamel identity together with straightforward algebraic manipulations.
\end{proof}

The significance of Theorem~\ref{th:fluctuation_theorem} is twofold. First, it shows that the work operator derived from autonomous energy transfer satisfies a genuine quantum fluctuation theorem. The departure from the classical Jarzynski equality is entirely captured by the non-commutative correction $\Delta_{t,t_0}$, which leaves the average work unchanged but contributes to the exponential work average, and hence to the higher moments of the work statistics. Notably, whereas the equilibrium contribution $e^{-\beta\Delta F}$ depends only on the initial and final Hamiltonians, the quantum correction is generally protocol-dependent through the propagator. Thus, unlike the classical Jarzynski equality, the quantum fluctuation relation is not, in general, determined by equilibrium endpoint data alone, as an effect of non-commutativity.

Second, the theorem clarifies what the appropriate classical limit is. The no-go theorem of Ref.~\cite{Perarnau17} takes classicality to mean agreement with the TPM statistics whenever the initial state is diagonal in the eigenbasis of $\hat H(t_0)$. From the present perspective, however, this condition is too weak to identify a genuinely classical process. Even if the initial state contains no energy coherences, coherent effects can still be generated whenever the Hamiltonians at different times do not commute. Thus, agreement with TPM on energy-diagonal initial states does not by itself characterize the absence of quantum effects in the dynamics (see Ref.~\cite{Silva2024} for similar arguments). 

In contrast, this formulation shows that the quantum effects are quantified by the non-commutative correction $\Delta_{t,t_0}$: when this term vanishes, the quantum fluctuation theorem reduces to the Jarzynski equality; when it does not, the deviation is not a failure of the work operator, but a signature of genuinely quantum dynamics. In particular, in the classical regime, where the Hamiltonians commute throughout the protocol, our relation reduces to the standard Jarzynski equality. Outside this regime, one should not expect the classical relation to hold. This explains why our construction evades the no-go theorem: we do not impose TPM agreement as the definition of classicality, but instead derive the classical limit from the absence of non-commutative contributions to energy transfer. The autonomous framework therefore singles out the physically relevant condition for classicality, rather than building it in through agreement with TPM statistics.

Finally, the connection between the work operator and TPM can be characterized more precisely. For an arbitrary initial state, the first two TPM moments satisfy
\begin{equation}\label{eq:TPM_first_two_moments}
    \braket{W^n}_{\mathrm{TPM}}
    =
    \Tr\!\left[
        \hat W^n(t,t_0)
        \mathcal D_{t_0}[\hat\rho]
    \right],
    \qquad n=1,2,
\end{equation}
where
$\mathcal D_{t_0}[\hat\rho]
:=\sum_n\hat\Pi_n(t_0)\hat\rho\,\hat\Pi_n(t_0)$
denotes dephasing in the eigenspaces of the initial Hamiltonian. Consequently, for energy-diagonal initial states, the work operator and TPM yield the same mean and variance, while their higher moments generally differ because of the non-commutative operator ordering absent from TPM energy differences. It is precisely these higher-order differences that generate the correction $\Delta_{t,t_0}$ in the fluctuation theorem. Furthermore, the vanishing of the correction implies more than the recovery of the Jarzynski equality. By Eq.~\eqref{eq:classicality}, $\Delta_{t,t_0}=0$ implies that the work operator commutes with the initial Hamiltonian. Consequently, the work-operator and TPM statistics coincide not only for an initial thermal state, as we see in Theorem~\ref{th:fluctuation_theorem}, but for any initial state:
\begin{equation}\label{eq:full_TPM_agreement}
    \Delta_{t,t_0}=0
    \quad\Longrightarrow\quad
    \braket{W^n}_{\mathrm{TPM}}
    =
    \Tr\!\left[
        \hat\rho\,\hat W^n(t,t_0)
    \right], \quad \forall \hat \rho.
\end{equation}

\section{Imprecise Control}\label{sec:imprecise_control}
The autonomous formulation provides an energy-conserving description in which all energetic contributions are explicitly accounted for. Work can therefore be identified with energy exchanged between quantum systems, without appealing to an externally prescribed classical drive whose energetic cost lies outside the description. One remaining idealization concerns the preparation of the clock. The exact reproduction of the driven dynamics relies on preparing an auxiliary system in the pure state $\hat \rho_C=\ket{t_0}\!\bra{t_0}$. Exact pure-state preparation is an idealization: in realistic settings, it typically requires ideal cooling, perfect isolation, or otherwise unphysical resources~\cite{Misra2016,Masanes2017}. Consequently, we consider here the more realistic mixed-state case, which captures finite-temperature, imperfectly isolated, and experimentally accessible preparations. This allows us to study how imperfect control affects the resulting work operator and the associated energetic description, in a fully self-consistent way.

As follows from Theorem~\ref{th:autonomization}, only the diagonal components of the clock state in the time basis affect the reduced system dynamics; coherences between different time states do not contribute. Consider, for example, an intended preparation of the clock in the state $\ket{t_0}$ subject to a small timing uncertainty described by a narrow probability distribution $G_\varepsilon$ centered at zero, with variance $\varepsilon^2$ and well-behaved higher moments. The resulting clock state is
\begin{equation}
    \hat \rho_C=\int ds \,G_\varepsilon(s) \,\ket{t_0+s} \!\bra{t_0+s}.
\end{equation}
The autonomous work operator of Eq.~\eqref{eq:work_autonomous} is independent of the initial clock state, so the notion of energy exchange between the system and the clock remains unchanged. What depends on the clock preparation are the effective work statistics induced on the system. Consider a protocol of duration $\Delta t$ intended to start at $t_0$ and end at $t=t_0 + \Delta t$. For the mixed clock state above, these statistics are a classical mixture of those associated with protocols starting at $t_0+s$ and ending at $t+s$. In general, this mixture cannot be represented by a single system observable. Instead, each moment is reproduced by a distinct moment operator,
\begin{equation}\label{eq:imprecise_work}
    \hat W_{G_\varepsilon}^{(n)}(t,t_0)
    =
    \int ds\,G_\varepsilon(s)\,
    \hat W^n(t+s,t_0+s),
\end{equation}
where $\hat W$ is defined in Eq.~\eqref{eq:work_op_int}. These operators do not, in general, arise as powers of a single effective work observable, since
$\hat W_{G_\varepsilon}^{(n)}\neq
\left(\hat W_{G_\varepsilon}^{(1)}\right)^n$.

This expression shows that finite control precision can be incorporated in a transparent way: while the autonomous energy-exchange observable remains unchanged, the work statistics induced on the system become coarse grained and are described by the hierarchy of moment operators. This coarse graining also propagates to the fluctuation relation, which acquires an additional correction quantifying the energetic effect of imperfect timing control. To evaluate this correction, we consider the small-variance limit, corresponding to a small imprecision in the preparation of the clock state. We can then expand around the intended initial time $t_0$, evaluating the zeroth-order fluctuation relation with respect to the intended initial state $\hat\pi_S(t_0)$ and the corresponding free-energy difference. We obtain
\begin{theorem}[Imprecise control]
\label{th:imprecise_fluctuation_theorem}
Given the hierarchy of moment operators in Eq.~\eqref{eq:imprecise_work}, the finite-control quantum fluctuation theorem reads
\begin{multline}
    \sum_{n=0}^{\infty}
        \frac{(-\beta)^n}{n!}\,
        \braket{\hat W_{G_\varepsilon}^{(n)}(t,t_0)}_{\hat\pi_S(t_0)}
    =\\
    =e^{-\beta\Delta F}
    +\Delta_{t,t_0}
    +\varepsilon^2 I_{t,t_0}
    +o(\varepsilon^2).
\end{multline}
where
\begin{equation}
    I_{t,t_0}
    =
    \frac{1}{2}
    \frac{d^2}{ds^2}
    \braket{
        e^{-\beta\hat W(t+s,t_0+s)}
    }_{\hat\pi_S(t_0)}
    \bigg|_{s=0},
\end{equation}
and analogous definitions to Theorem~\ref{th:fluctuation_theorem}.
\end{theorem}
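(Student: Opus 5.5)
The plan is to resum the series on the left-hand side into a single $s$-averaged exponential, and then Taylor expand in the timing offset $s$.

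\emph{Step 1: resummation.} By Eq.~\eqref{eq:imprecise_work} and linearity of the trace,
\begin{equation*}
\braket{\hat W_{G_\varepsilon}^{(n)}(t,t_0)}_{\hat\pi_S(t_0)}=\int ds\,G_\varepsilon(s)\braket{\hat W^n(t+s,t_0+s)}_{\hat\pi_S(t_0)} .
\end{equation*}
Next I exchange the sum over $n$ with the $s$-integral. On finite-dimensional $\mathcal H_S$, or for bounded $\hat H_S(t)$, the exponential series converges absolutely and uniformly in operator norm for $s$ in compact sets, so dominated convergence applies. Outside compact sets I will assume the well-behaved tails of $G_\varepsilon$ suffice. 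The left-hand side then becomes
\begin{equation*}
\int ds\,G_\varepsilon(s)\,f(s),\qquad f(s):=\braket{e^{-\beta\hat W(t+s,t_0+s)}}_{\hat\pi_S(t_0)} .
\end{equation*}
Note that the state is kept fixed at $\hat\pi_S(t_0)$, as stipulated in the text before the theorem.

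\emph{Step 2: expansion and moment conditions.} I assume $\hat H_S$ is $C^2$ in time near $t_0$ and $t$. Then $\hat U(t+s,t_0+s)$ is $C^2$ in $s$, since its $s$-derivative equals
\begin{equation*}
-i\hat H_S(t+s)\hat U+i\hat U\hat H_S(t_0+s).
\end{equation*}
It follows that $\hat W(t+s,t_0+s)$ is $C^2$ in $s$. The map $X\mapsto e^{-\beta X}$ is analytic, so its Fréchet derivatives can be written via the Duhamel formula, and hence $f\in C^2$. Taylor's theorem with remainder gives
\begin{equation*}
f(s)=f(0)+sf'(0)+\tfrac12 s^2 f''(0)+R(s),\qquad R(s)=o(s^2).
\end{equation*}
Integrating against $G_\varepsilon$ and using the three moment conditions $\int G_\varepsilon=1$, $\int sG_\varepsilon=0$, and $\int s^2G_\varepsilon=\varepsilon^2$ yields
\begin{equation*}
f(0)+\varepsilon^2\cdot\tfrac12 f''(0)+\int G_\varepsilon R .
\end{equation*}
The middle term is exactly $\varepsilon^2 I_{t,t_0}$.

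\emph{Step 3: zeroth order.} Since $f(0)=\braket{e^{-\beta\hat W(t,t_0)}}_{\hat\pi_S(t_0)}$, Theorem~\ref{th:fluctuation_theorem} gives $f(0)=e^{-\beta\Delta F}+\Delta_{t,t_0}$.

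\emph{Main obstacle: the remainder.} The hardest step is showing $\int G_\varepsilon R=o(\varepsilon^2)$. Here I interpret $G_\varepsilon$ as a rescaled family, $G_\varepsilon(s)=\varepsilon^{-1}g(s/\varepsilon)$, for some fixed $g$ with finite higher moments.

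1. \emph{Near region $|s|\le\delta$.} Bound $|R(s)|\le\eta(\delta)s^2$, where $\eta(\delta)\to0$ as $\delta\to0$. This contributes at most $\eta(\delta)\varepsilon^2$.
2. \emph{Far region $|s|>\delta$.} Bound $|R(s)|$ polynomially in $s$. For bounded Hamiltonians $\|\hat W\|\le2\sup\|\hat H_S\|$, so $f$ is bounded outright. The contribution is then controlled by a Chebyshev bound with the third or fourth moment, giving $O(\varepsilon^3/\delta)$ or better.
3. \emph{Conclusion.} Choosing $\delta\to0$ slowly, e.g.\ $\delta=\varepsilon^{1/2}$, makes both pieces $o(\varepsilon^2)$.

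If $\hat H_S$ is unbounded, I would instead require $f$ to have polynomial growth, which is the role of the well-behaved-moments hypothesis. Under those regularity assumptions this completes the argument.
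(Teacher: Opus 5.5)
Your proposal is correct and follows the paper's proof: resum the series into $\int ds\,G_\varepsilon(s)f(s)$, Taylor-expand $f$ to second order, remove the linear term using the zero mean, read off $\varepsilon^2 f''(0)/2$, and obtain $f(0)$ from Theorem~\ref{th:fluctuation_theorem}. The paper integrates the $o(s^2)$ remainder without comment, so your justification of the sum--integral interchange, the $C^2$ regularity of $f$, and the remainder control are rigor the paper omits.

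There is one bookkeeping slip in that extra step. On $|s|>\delta$ the remainder satisfies $|R(s)|\le A+B|s|+Cs^2$. Its bounded part contributes $A\int_{|s|>\delta}G_\varepsilon\le A\,m_3\varepsilon^3/\delta^3$, not $O(\varepsilon^3/\delta)$. With $\delta=\varepsilon^{1/2}$ this gives only $O(\varepsilon^{3/2})$ from the third moment, or $O(\varepsilon^2)$ from the fourth, and neither is $o(\varepsilon^2)$.

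The fix is to hold $\delta$ fixed. On $|s|>\delta$ you have $|R(s)|\le K_\delta|s|^3$ with $K_\delta=A/\delta^3+B/\delta^2+C/\delta$, so the far region contributes at most $K_\delta m_3\varepsilon^3$. Together with the near-region bound this gives $\limsup_{\varepsilon\to0}\varepsilon^{-2}\bigl|\int G_\varepsilon R\bigr|\le\eta(\delta)$ for every $\delta$, and sending $\delta\to0$ gives $o(\varepsilon^2)$ using only the third absolute moment. Equivalently, you can take $\delta=\varepsilon^{1/4}$, which makes the far-region term $O(\varepsilon^{9/4})$.
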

\begin{proof}
    The proof follows by expanding the averaged fluctuation quantity around the intended initial time. The vanishing mean of $G_\varepsilon$ eliminates the linear contribution, while its variance $\varepsilon^2$ determines the leading correction. The zeroth-order term is given by Theorem~\ref{th:fluctuation_theorem}. Further details are provided in Appendix~\ref{app:precise_control_theorem}.
\end{proof}
The correction $I_{t,t_0}$ quantifies how the work fluctuations are modified by finite control of the auxiliary system. Unlike the term $\Delta_{t,t_0}$ in Theorem~\ref{th:fluctuation_theorem}, which is purely non-commutative and vanishes in the classical regime, $I_{t,t_0}$ contains both a quantum and a classical contribution. In particular, it need not vanish even when the Hamiltonians commute throughout the protocol: finite control already modifies the work statistics at the classical level.

In Appendix~\ref{app:precise_control_example}, we analyze an exactly solvable example in which both the intrinsic quantum correction and the finite-control correction can be evaluated explicitly. We then compare their magnitudes in the near-classical, weak-imprecision regime at fixed characteristic work scale and identify when timing uncertainty dominates over quantum effects. This illustrates how imperfect control and noncommutativity contribute separately to the fluctuation relation.

Note furthermore that imprecise control need not increase the total work variance and may instead decrease it. Denoting by $\mu(s)$ and $\sigma^2(s)$ the mean and variance of the shifted protocol, timing uncertainty adds fluctuations by mixing protocols with different mean work, while it may simultaneously reduce the intrinsic variance by averaging over nearby protocols. To leading order in the timing imprecision, the total work variance decreases when \mbox{$\bigl[\partial_s^2\sigma^2(s)+2\bigl(\partial_s\mu(s)\bigr)^2\bigr]_{s=0}<0$}.

More generally, the framework developed in this section applies to other forms of finite control. For instance, one could consider a situation in which the agent aims to implement the protocol $\hat H_S(t)$, but the Hamiltonian realized in each run is $\hat H_S(t)+\delta\hat H_\xi(t)$, where $\xi$ labels an uncontrolled error distributed according to a probability density $p(\xi)$. Each realization $\xi$ defines its own work operator, and the lack of control over $\xi$ induces a classical coarse graining of the corresponding work statistics, analogous to that generated above by the imprecise clock preparation. These statistics are described by a hierarchy of moment operators rather than by a single effective work observable and lead to a fluctuation relation with noise corrections analogous to those in Theorem~\ref{th:imprecise_fluctuation_theorem}.

A qualitatively different form of non-ideality arises when the work source is itself finite and dynamical. Unlike the imprecise control considered above, this cannot generally be represented as classical uncertainty over a family of prescribed protocols: the system and work source become dynamically correlated, and back-action on the source can prevent it from reproducing the intended system dynamics exactly. Autonomous fluctuation relations that explicitly account for back-action on work sources have recently been derived in the classical setting~\cite{Jarzynski25} and subsequently extended to the quantum regime using successive projective measurements~\cite{Zhao26}. Extending our measurement-independent fluctuation relation to finite, back-reacting quantum work sources is an interesting direction for future work.

\section{Open Quantum Systems and First Law} \label{sec:oqs}
We now extend the construction to a driven subsystem of interest $A$ interacting with an environment $E$. To this end, the system $S$ considered in the previous sections is decomposed as $\mathcal H_S=\mathcal H_A\otimes\mathcal H_E$, where $A$ is accessible to the agent and $E$ is inaccessible, with Hamiltonian
\begin{equation}
    \hat H_S(t)
    =
    \hat H_A(t)\otimes\mathbbm 1_E
    +
    \mathbbm 1_A \otimes \hat H_E
    +
    \hat V(t).
\end{equation}
The explicit time dependence specifies the agent's control: the agent drives $\hat H_A(t)$ and may also modulate the interaction $\hat V(t)$, whereas $\hat H_E$ remains fixed and uncontrolled.

As in the closed-system scenario, work is the energy supplied to the composite system by the agent through the controlled time dependence of its Hamiltonian. It is therefore defined at the level of the composite system $S$, independently of its partition into $A$ and $E$. Starting from this global definition allows us to avoid a notorious subtlety encountered when extending thermodynamics to open quantum systems: when an interacting composite is divided into a subsystem of interest and an environment, the split of the total energy into local and interaction contributions is not unique, and with it the very definitions of work, heat, and internal energy become ambiguous~\cite{Talkner2022,Colla22,Seegebrecht2024, Schrauwen2025}. In our construction, the system–environment partition does not redefine work; rather, energy conservation decomposes the already-defined energy transfer according to where it is stored. Thus, which degrees of freedom the agent can access affects the description of the energy transfer, but not the energetic cost of the underlying operation. Following the canonical definitions, we identify the heat $\hat Q$ with the energy transferred to the inaccessible environment $E$, $\Delta\hat E$ with the change in the internal energy of the system of interest $A$, and $\Delta\hat V$ with the change in the interaction energy between $A$ and $E$. This gives the exact operator identity
\begin{equation}\label{eq:work_decomposition}
    \hat W(t,t_0)
    =
    \Delta\hat E(t,t_0)
    +
    \hat Q(t,t_0)
    +
    \Delta\hat V(t,t_0),
\end{equation}
where
\begin{equation}\label{eq:energy_operators}
\begin{aligned}
    \Delta\hat E(t,t_0)
    &=
    \hat U^\dagger(t,t_0)\,
    \hat H_A(t)\,
    \hat U(t,t_0)
    -
    \hat H_A(t_0),
    \\
    \hat Q(t,t_0)
    &=
    \hat U^\dagger(t,t_0)\,
    \hat H_E\,
    \hat U(t,t_0)
    -
    \hat H_E,
    \\
    \Delta\hat V(t,t_0)
    &=
    \hat U^\dagger(t,t_0)\,
    \hat V(t)\,
    \hat U(t,t_0)
    -
    \hat V(t_0).   
\end{aligned}
\end{equation}

The system-environment interaction contribution is not specific to quantum mechanics, but arises whenever an interacting composite is divided into subsystems. If the interaction-energy change vanishes, for instance because the interaction is switched off at the protocol boundaries, or if it is negligible compared with the other relevant energy scales, Eq.~\eqref{eq:work_decomposition} reduces to the operatorial first law, that we formulate as 
\begin{theorem}[First Law of Thermodynamics]
For protocols where the system and its environment are disconnected at the beginning and end of the protocol $(\hat V(t)=\hat V(t_0) = 0)$
\begin{equation}\label{eq:first_law}
    \hat W(t,t_0)=\Delta\hat E(t,t_0)+\hat Q(t,t_0).
\end{equation}
Furthermore, for a weak interaction, Eq.~\eqref{eq:first_law} is recovered at leading order $\mathcal O(\|\hat V\|^0)$.
\end{theorem}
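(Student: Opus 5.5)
The plan is to start from the exact operator decomposition in Eq.~\eqref{eq:work_decomposition} and then specialize it. That identity follows from linearity of conjugation. We insert $\hat H_S(t)=\hat H_A(t)\otimes\mathbbm 1_E+\mathbbm 1_A\otimes\hat H_E+\hat V(t)$ into the definition \eqref{eq:work_operator_system} of Theorem~\ref{th:work}. The map $\hat X\mapsto \hat U^\dagger(t,t_0)\hat X\hat U(t,t_0)$ is linear, so
\begin{equation*}
\begin{aligned}
\hat W(t,t_0)={}&\bigl[\hat U^\dagger\hat H_A(t)\hat U-\hat H_A(t_0)\bigr]+\bigl[\hat U^\dagger\hat H_E\hat U-\hat H_E\bigr]\\
&+\bigl[\hat U^\dagger\hat V(t)\hat U-\hat V(t_0)\bigr].
\end{aligned}
\end{equation*}
The three brackets are, term by term, $\Delta\hat E$, $\hat Q$ and $\Delta\hat V$ of Eq.~\eqref{eq:energy_operators}. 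The only point to check is that $\hat H_E$ carries no time dependence, so its initial and final forms coincide and the second bracket is exactly $\hat Q$.

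For the first claim, impose $\hat V(t)=\hat V(t_0)=0$. Then $\Delta\hat V(t,t_0)=\hat U^\dagger\cdot 0\cdot\hat U-0=0$ as an operator identity, and Eq.~\eqref{eq:first_law} follows at once. I would remark that $\hat V$ may be nonzero at intermediate times. Its effect then enters only through the propagator $\hat U(t,t_0)$ appearing in $\Delta\hat E$ and $\hat Q$. This is precisely what lets heat flow, since $\hat Q$ is generally nonzero.

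For the weak-interaction claim, write $\hat V(t)=g\,\hat v(t)$ with $\|\hat v(t)\|$ bounded uniformly on $[t_0,t]$, so that $\|\hat V\|=\mathcal O(g)$. Unitary invariance of the operator norm gives
\begin{equation*}
\|\Delta\hat V(t,t_0)\|\le\|\hat V(t)\|+\|\hat V(t_0)\|=\mathcal O(\|\hat V\|).
\end{equation*}
Hence $\hat W-\Delta\hat E-\hat Q=\mathcal O(\|\hat V\|)$, and Eq.~\eqref{eq:first_law} holds at order $\mathcal O(\|\hat V\|^0)$. This bound needs no expansion of $\hat U$ itself, because the propagator enters only through a norm-preserving conjugation.

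The main subtlety is interpretive rather than technical. At order $\|\hat V\|^0$ the propagator factorizes as $\hat U_A\otimes\hat U_E$, and then $\hat Q$ vanishes identically at that order. Heat then appears only as a higher-order effect, accumulated over long times. So the statement should be understood with $\Delta\hat E$ and $\hat Q$ evaluated with the full $\hat U$, and only the interaction term $\Delta\hat V$ neglected; the norm bound above is exactly this statement. For unbounded environment Hamiltonians, I would state the estimate on a suitable domain, or for expectation values on states of finite energy.
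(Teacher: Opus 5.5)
Your proposal is correct and follows the paper's own argument. The paper simply notes that the first law follows trivially from the exact decomposition $\hat W=\Delta\hat E+\hat Q+\Delta\hat V$, which is linearity of the Heisenberg conjugation with a time-independent $\hat H_E$, and that $\Delta\hat V$ vanishes when $\hat V(t)=\hat V(t_0)=0$. Your bound $\|\Delta\hat V\|\le\|\hat V(t)\|+\|\hat V(t_0)\|$ and your remark that $\Delta\hat E$ and $\hat Q$ must be kept with the full propagator (because $\hat Q$ vanishes at strict order $\|\hat V\|^0$) spell out what the paper leaves implicit in the weak-coupling claim.
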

\begin{proof}
    The first law follows trivially from Eq.~\eqref{eq:work_decomposition}.
\end{proof}
While operatorial formulations of the first law have previously been considered in specific contexts~\cite{Quan2005,Koide2016, luo2026}, our construction provides a general derivation as a direct consequence of the autonomization of arbitrary time-dependent Hamiltonian processes. Note that this simplification need not hold for the corresponding instantaneous currents, since a negligible net change in interaction energy does not imply a negligible interaction-energy current.

Having established this global formulation on $S$, we now ask whether the statistics of work can be reproduced within a reduced description involving operators on $A$ alone. We assume an initially factorized state $\hat\rho_S(t_0)=\hat\rho_A(t_0)\otimes\hat\sigma_E$ and denote the reduced dynamics by $\hat\rho_A(t)=\Phi_t[\hat\rho_A(t_0)]$. Such a description cannot, in general, be provided by a single work observable $\hat W_A(t,t_0)$ whose powers reproduce all moments of the global work observable. Instead, each moment is represented by a distinct operator,
\begin{equation}\label{eq:work_reduction_moment}
    \hat W_A^{(n)}(t,t_0)
    :=
    \Tr_E\!\left[
        (\mathbbm 1_A\otimes\hat\sigma_E)
        \hat W^n(t,t_0)
    \right],
\end{equation}
satisfying
\begin{equation}\label{eq:work_reduction_equality}
    \Tr_A\!\left[
        \hat\rho_A(t_0)\hat W_A^{(n)}(t,t_0)
    \right]
    =
    \Tr_S\!\left[
        \hat\rho_S(t_0)\hat W^n(t,t_0)
    \right].
\end{equation}
Crucially, in general,
\begin{equation}
    \hat W_A^{(n)}
    \neq
    \bigl(\hat W_A^{(1)}\bigr)^n.
\end{equation}
The reduced work statistics are therefore encoded in a hierarchy of moment operators rather than in a single observable on $A$. Although the operators $\hat W_A^{(n)}$ generally do not commute for different values of $n$, they arise as moments of a single POVM on $A$.\footnote{Write $\hat W=\sum_k w_k\hat\Pi_k$ and define $\hat E_k:=\Tr_E[(\mathbbm 1_A\otimes\hat\sigma_E)\hat\Pi_k]$. The operators $\hat E_k$ form a POVM and $\hat W_A^{(n)}=\sum_k w_k^n\hat E_k$.} The outcome distribution of this POVM determines all moments, making the entire hierarchy jointly accessible within a single measurement scheme. Analogous hierarchies describe the reduced statistics of $\Delta\hat E$ and $\hat Q$.

Applying the first-moment reduction to each term in Eq.~\eqref{eq:first_law} gives the reduced first law
\begin{equation}\label{eq:reduced_first_law}
    \hat W_A^{(1)}(t,t_0)
    =
    \Delta\hat E_A^{(1)}(t,t_0)
    +
    \hat Q_A^{(1)}(t,t_0).
\end{equation}
For an undriven interaction, $\partial_t\hat V(t)=0$, the first-moment operators take the form (cf. Appendix~\ref{app:oqs_undriven})
\begin{equation}\label{eq:OQS_first_moments}
\begin{aligned}
    \hat W_A^{(1)}(t,t_0)
    &=
    \int_{t_0}^{t}\!ds\,
    \Phi_s^\dagger[\partial_s\hat H_A(s)],\\
    \Delta\hat E_A^{(1)}(t,t_0) &= \int_{t_0}^t\!ds~\Phi_s^\dagger[\partial_s \hat H_A(s) + \mathcal L_s^\dagger[\hat H_A(s)]],\\
     \hat Q_A^{(1)}(t,t_0)
    &=
    -\int_{t_0}^{t}\!ds\,
    \Phi_s^\dagger[\mathcal L_s^\dagger[\hat H_A(s)]],
\end{aligned}
\end{equation}
where $\Phi_t^\dagger$ denotes the adjoint of the dynamical map\footnote{Defined by \mbox{$\Phi_t^\dagger[\hat O_A] = \Tr_{E}[(\mathbbm 1_A\otimes\hat\sigma_{E})\hat U^\dagger(t,t_0)(\hat O_A\otimes\mathbbm 1_{E})\hat U(t,t_0)]$}, such that 
$\Tr[\Phi_t[\hat\rho_A]\hat O_A]
=
\Tr[\hat\rho_A\Phi_t^\dagger[\hat O_A]]$, $\forall \hat\rho_A, \hat O_A$.}, and $\mathcal L_t := \dot\Phi_t\circ\Phi_t^{-1}$ is the generator of the reduced dynamics\footnote{Under the common assumption that the inverse map $\Phi_t^{-1}$ exists, albeit not necessarily as a CPTP map.}.
Their expectation values recover known expressions for mean work, $\langle\hat W\rangle=\int_{t_0}^{t}\!ds\,\Tr[\hat\rho_A(s)\partial_s\hat H_A(s)]$, and heat, $\langle\hat Q\rangle=\int_{t_0}^{t}\!ds\,\Tr[\mathcal L_s[\hat\rho_A(s)]\hat H_A(s)]$ (see, e.g., Refs.~\cite{Alicki76,Spohn78,Colla26}).

Notably, the first-moment operators in Eq.~\eqref{eq:OQS_first_moments} are entirely determined by the reduced dynamical law $\Phi_t$ (or, equivalently, by its generator $\mathcal L_t$) together with the known system Hamiltonian, without requiring explicit knowledge of the environment or the joint system--environment evolution. This simplification is specific to the first moments: higher moments involve multi-time system--environment correlations and cannot, in general, be expressed solely in terms of the reduced dynamical map without further assumptions. 

For the remainder of this section we consider the Born--Markov limit, in which a driven interaction, $\partial_t\hat V(t)\neq0$, can also be treated within the reduced description. In this regime, the dynamics of $A$ is governed by a GKLS generator whose coherent part contains the local Hamiltonian $\hat H_A(t)$ and the Lamb-shift Hamiltonian $\hat H_{\mathrm{LS}}(t)$, while its dissipative part is determined by the time-dependent jump operators $\hat L_{k\omega}(t)$. A brief review of this construction and its connection with the microscopic dynamics is given in Appendix~\ref{app:Markov-Lindblad}.

The first-moment work operator is
\begin{equation}\label{eq:OQS_work_V}
    \hat W_A^{(1)}(t,t_0)
    =
    \int_{t_0}^{t}\!ds\,
    \Phi_s^\dagger[
        \partial_s\hat H_A(s)+\hat w_V(s)
    ],
\end{equation}
where we defined the interaction work current $\hat w_V(t)$ as a function of the Lindblad rates $\gamma$ and Lamb-shift (cf. Appendix~\ref{app:OQS_first_moment_driven})
\begin{equation}
    \hat w_V(t)
    =
    \partial_t\hat H_{\mathrm{LS}}^{(T)}(t)
    +
    \frac{i}{2}
    \sum_{kk'\omega}
    \gamma_{kk'}(\omega)
    \hat L_{k\omega}^\dagger(t)
    \overleftrightarrow{\partial_t}
    \hat L_{k'\omega}(t),
\end{equation}
with
$\hat f\,\overleftrightarrow{\partial_t}\,\hat g
:=
\hat f(\partial_t \hat g)
-
(\partial_t \hat f)\hat g$.

The corresponding reduced currents for internal energy, heat, and interaction energy are also derived in Appendix~\ref{app:OQS_first_moment_driven}, showing how the terms entering $\hat w_V(t)$ are distributed among the different channels of the energy balance. In particular, the Lamb-shift contribution generates a nonvanishing interaction-energy current. Consequently, for a driven interaction, the operatorial energy-conservation law at the level of currents cannot be cast as a first law involving only work, internal energy, and heat: the interaction-energy current must also be retained. By contrast, this term vanishes for an undriven interaction, yielding an operatorial first law at the level of currents.

More broadly, the assignment of these energy contributions to work, heat, internal energy, and interaction energy can be regarded as a choice of thermodynamic gauge~\cite{Colla22,Nicacio2023, Schrauwen2025}. Accordingly, the reduced operators in Eqs.~\eqref{eq:OQS_first_moments} and~\eqref{eq:OQS_work_V} are, in principle, gauge dependent. In our construction, however, the gauge is fixed by the microscopic description. Although a gauge transformation of the jump operators and the Lamb-shift Hamiltonian leaves the reduced dynamics unchanged, it changes the underlying microscopic system--environment coupling and therefore describes a different physical implementation. Hence, different gauge choices are not equivalent descriptions of the same process, but correspond to genuinely different thermodynamic scenarios.

As in the exact undriven case, Eq.~\eqref{eq:OQS_work_V} depends only on the reduced dynamical law: the reduced propagator and the coefficients of the GKLS generator determine the first-moment operator without requiring explicit knowledge of the joint system--environment evolution. Strikingly, within the Born--Markov approximation, this reduction is not confined to the first moment but extends to the entire hierarchy of work moments. Indeed, applying a generalization of the quantum regression theorem, derived in Appendix~\ref{app:generalized_QRT}, to the multi-time correlations entering $\hat W_A^{(n)}$, we obtain each work-moment operator entirely from reduced dynamical quantities, as a combinatorial sum of nested adjoint propagators acting on the driving terms $\partial_t\hat H_A(t)$ and $\partial_t\hat L_{k\omega}(t)$. For Gaussian environments, the reduced work moments organize into a regular contribution, a direct environment-noise term, and an out-of-time-order correction. This gives the following result.
\begin{theorem}[GKLS work moments]
\label{th:oqs_work_moments}
For a driven system described in the GKLS limit, the reduced $n$-th work moment is
\begin{multline}
\label{eq:gkls_work_moments}
    \hat W_A^{(n)}(t,t_0) \simeq\left(\int_{t_0}^{t}\!ds\,\Phi_s^\dagger\circ\mathfrak W_s \circ(\Phi_s^\dagger)^{-1}\right)^n[\mathbbm 1_A]\\
    + \hat W_N^{(n)}(t,t_0) + \hat W_{\rm OO}^{(n)}(t,t_0).
\end{multline}
Furthermore, if the interaction is undriven, $\partial_s\hat V(s)=0$ for $t_0\leq s\leq t$, then $\mathfrak W_s$ reduces to left multiplication by $\partial_s\hat H_A(s)$ and the direct environment-noise contribution $\hat W_{N}^{(n)}$ vanishes.
\end{theorem}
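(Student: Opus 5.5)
The plan is to start from the integral form of the global work operator, Eq.~\eqref{eq:work_op_int}, $\hat W(t,t_0)=\int_{t_0}^t ds\,\hat U^\dagger(s,t_0)\,\partial_s\hat H_S(s)\,\hat U(s,t_0)$, with $\partial_s\hat H_S(s)=\partial_s\hat H_A(s)\otimes\mathbbm 1_E+\partial_s\hat V(s)$. The $n$-th power then becomes an $n$-fold time integral of products of Heisenberg-picture driving operators. We time-order the integration variables, $s_1\ge\dots\ge s_n$, and sum over the permutations; this produces the combinatorial structure. Inserting this into the definition~\eqref{eq:work_reduction_moment} expresses $\hat W_A^{(n)}$ as a sum of environment-averaged multi-time correlation functions of the driving operators. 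These correlators are not time-ordered in the usual sense, since the operator ordering follows the permutation rather than the time labels.

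The second step is to reduce each correlator to reduced quantities. For the ordered sequences (time-ordered and anti-time-ordered nestings), we apply the generalized quantum regression theorem of Appendix~\ref{app:generalized_QRT} in the Born--Markov limit. In this regime, $\Tr_E[(\mathbbm 1\otimes\hat\sigma_E)\,\hat U^\dagger(s_1)\hat X_1\hat U(s_1)\cdots]$ factorizes into nested adjoint propagators $\Phi^\dagger_{s_1,s_2}$ acting on the reduced driving superoperators. Using the divisibility $\Phi_{s_i}^\dagger=\Phi_{s_{i+1}}^\dagger\circ\Phi^\dagger_{s_i,s_{i+1}}$ valid in the GKLS limit, each intermediate propagator can be rewritten as $\Phi_{s}^\dagger\circ(\Phi_{s}^\dagger)^{-1}$ sandwiches. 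Writing $\mathfrak W_s$ for the reduced driving superoperator, the regular part reassembles into the $n$-th power of the superoperator $\int ds\,\Phi_s^\dagger\circ\mathfrak W_s\circ(\Phi_s^\dagger)^{-1}$ acting on $\mathbbm 1_A$. Here $\mathfrak W_s$ is obtained from the first-moment computation of Eq.~\eqref{eq:OQS_work_V}: left multiplication by $\partial_s\hat H_A$ plus the superoperator built from $\partial_s\hat H_{\rm LS}$ and $\partial_s\hat L_{k\omega}$ terms. The ordering of operators within the product is fixed by the ordering of the time labels, so the symmetrized sum over permutations matches the ordered product of the superoperator integral, up to the remainders discussed next.

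The third step handles what is left over, assuming a Gaussian environment so that Wick's theorem applies to the bath operators hidden in $\partial_s\hat V$. There are two kinds of remainder. Contractions pairing two $\partial\hat V$ insertions directly, rather than through the dissipator, give $\hat W_N^{(n)}$, the direct environment-noise term. Orderings that do not follow the time labels, where an earlier-time operator appears to the left of a later one in a non-nested way, cannot be handled by the ordinary regression theorem and are collected into $\hat W_{\rm OO}^{(n)}$. For the undriven case, $\partial_s\hat V=0$, so only $\partial_s\hat H_A\otimes\mathbbm 1_E$ appears. Then $\mathfrak W_s$ is left multiplication by $\partial_s\hat H_A(s)$, consistent with Eq.~\eqref{eq:OQS_first_moments}. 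No bath operator enters the driving, so no direct bath contraction exists and $\hat W_N^{(n)}=0$.

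The main obstacle is the second step: justifying the generalized regression theorem for products whose operator ordering differs from the time ordering, and identifying the exact form of $\mathfrak W_s$ when $\partial_s\hat V\neq0$. For the latter, I would first derive $\mathfrak W_s$ at $n=1$ by matching with Eq.~\eqref{eq:OQS_work_V} and then check at $n=2$ that the regular part has the product form. Everything that fails to factorize would be pushed into $\hat W_{\rm OO}^{(n)}$, consistent with the ``$\simeq$'' being a Born--Markov approximation.
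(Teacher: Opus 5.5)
There is a genuine gap in how you obtain the power form, and therefore in what you call $\hat W_{\rm OO}^{(n)}$. Expand $\bigl(\int_{t_0}^t ds\,\Phi_s^\dagger\circ\mathfrak W_s\circ(\Phi_s^\dagger)^{-1}\bigr)^n[\mathbbm 1_A]$ over the full cube $[t_0,t]^n$. In \emph{every} sector it contains the label-ordered, left-started chain $\Phi^\dagger_{t_1,t_0}\circ\mathfrak W_{t_1}\circ\Phi^\dagger_{t_2,t_1}\circ\cdots\circ\Phi^\dagger_{t_n,t_{n-1}}\circ\mathfrak W_{t_n}[\mathbbm 1_A]$. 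Whenever $t_i<t_{i-1}$, the two-time map in this chain is the \emph{inverse} GKLS propagator $(\Phi^\dagger_{t_{i-1}})^{-1}\circ\Phi^\dagger_{t_i}$, and the telescoping into a power only works with that interpretation.

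Evaluating the monotone sectors by ordinary forward regression and calling the result ``regular'' therefore does not reassemble into the power. A counterexample already appears for $n=2$ with an undriven interaction, writing $\hat h_i=\partial_{\tau_i}\hat H_A(\tau_i)$ and $\tau_1<\tau_2$:
\begin{itemize}
\item the anti-time-ordered sector is physically $\Phi^\dagger_{\tau_1}\bigl[\Phi^\dagger_{\tau_2,\tau_1}[\hat h_2]\,\hat h_1\bigr]$, obtained by the right-started recursion;
\item the power instead supplies $\Phi^\dagger_{\tau_2}\bigl[\hat h_2\,\Phi^\dagger_{\tau_1,\tau_2}[\hat h_1]\bigr]$, containing a backward inverse map.
\end{itemize}
These differ in general unless $\Phi^\dagger_{\tau_2,\tau_1}$ is multiplicative, i.e.\ unless the reduced dynamics is unitary.

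Their difference is exactly the paper's out-of-time-order term. It arises from two sources: two-vertex bath contractions on recursion intervals with finite overlap, and the conversion of the physical reverse propagator into the inverse GKLS one. Together these form a total $s$-derivative, which integrates to boundary terms at the ends of the overlaps. For driven couplings there are in addition the $m=1$ pieces built with the two-sided $\gamma$-maps. Consequently, $\hat W_{\rm OO}^{(n)}$ is nonzero in the monotone sectors, for $n=2$, and for undriven couplings. This contradicts your description of it as the contribution of ``non-nested'' orderings. Defining it instead as ``whatever fails to factorize'' turns the statement into a tautology.

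The missing step is to compare the physical correlator with the formal chain built from inverse GKLS maps, either sector by sector or directly on the full cube as the paper does, and to show that the discrepancy has this boundary form.

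A second gap is the identification of $\mathfrak W_s$. The first moment fixes only its action on the identity, $\mathfrak W_s[\mathbbm 1_A]=\partial_s\hat H_A(s)+\hat w_V(s)$, not the superoperator itself. The $m=1$ case of the generalized regression theorem gives
\[
\mathfrak W_s[\hat O]=\partial_s\hat H_A\,\hat O+\sum_\alpha\dot\lambda_\alpha\bigl(\mathcal K^\dagger_{\alpha,s}[\hat A_\alpha\hat O]+\hat A_\alpha\mathcal R^\dagger_{\alpha,s}[\hat O]\bigr).
\]
Here the bath operator in $\partial_s\hat V$ is contracted with an interaction vertex generated either in the interval before the insertion ($\mathcal K^\dagger$) or in the interval after it ($\mathcal R^\dagger$). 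The $\mathcal R^\dagger$ piece vanishes on $\mathbbm 1_A$, so it is invisible at $n=1$, and a guess such as left multiplication by the full current would be wrong.

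You also need the order counting that justifies the truncation. Strings with three or more insertions of $\partial_s\hat V$ are $\mathcal O(\tau_E^2)$; for three insertions the leading term is even comparable to the Born error. The compact power also generates spurious terms with two or more $\mathcal K/\mathcal R$ insertions, which must be discarded at that same order. Your treatment of $\hat W_N^{(n)}$ as the direct contraction of two $\partial_s\hat V$ insertions, and of the undriven limit, is fine.
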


\begin{proof}
Applying the generalized regression theorem (Appendix~\ref{app:generalized_QRT}) to the $n$-fold expansion of the work operator separates the result into the regular, direct environment-noise, and out-of-time-order contributions. The complete derivation is given in Appendix~\ref{app:OQS_higher_moments}.
\end{proof}

Here, $\mathfrak W_s$ denotes the \emph{work-current superoperator}, whose action on the identity yields the reduced instantaneous work current entering Eq.~\eqref{eq:OQS_work_V}, $\mathfrak W_s[\mathbbm 1_A] = \hat w_A(s) := \partial_s\hat H_A(s)+\hat w_V(s)$.
More generally, it incorporates the regular higher-moment contributions generated by the driving of the local Hamiltonian and the jump operators. The term $\hat W_N^{(n)}$ accounts for the additional correlation-induced noise associated with a driven coupling, while $\hat W_{\rm OO}^{(n)}$ add out-of-time order corrections. The explicit definitions of these quantities are given in Appendix~\ref{app:OQS_higher_moments}.

Importantly, Theorem~\ref{th:oqs_work_moments} shows that, in the GKLS limit, the complete hierarchy of reduced work moments can be reconstructed from the reduced propagator and the corresponding work, noise, and out-of-time-order contributions. Since these terms are expressed through system maps and the bath correlation functions entering the GKLS description, the work moments can be obtained without explicitly evaluating the joint system--environment evolution. The system-only description available exactly for the first moment therefore extends, within this approximation, to the entire hierarchy, even when the interaction is driven.

Combining Theorems~\ref{th:fluctuation_theorem} and \ref{th:oqs_work_moments} yields a quantum fluctuation relation expressed in terms of the reduced work moments. Consider an initially factorized thermal state $\hat\pi_A(t_0)\otimes\hat\pi_E$ and assume that the interaction vanishes at the protocol boundaries. Since $\hat H_E$ is unchanged, its partition function cancels from the free-energy difference, giving $\Delta F_S=\Delta F_A$, with $\Delta F_A=k_BT\ln[Z_A(t_0)/Z_A(t)]$. From the definition in Eq.~\eqref{eq:work_reduction_moment}, and Theorem~\ref{th:fluctuation_theorem}, we obtain
\begin{align}
\label{eq:reduced_moment_fluctuation_relation}
    \braket{e^{-\beta\hat W(t,t_0)}}_{\hat\pi_A(t_0)\otimes\hat\pi_E}
    &= \sum_{n=0}^{\infty} \frac{(-\beta)^n}{n!}
    \braket{\hat W_A^{(n)}(t,t_0)}_{\hat\pi_A(t_0)}
    \nonumber\\
    &=e^{-\beta\Delta F_A}+\Delta_{t,t_0}.
\end{align}
By Theorem~\ref{th:oqs_work_moments}, every reduced moment in this
series is determined by the reduced description, while
$\Delta F_A$ depends only on reduced-system quantities.
Consequently, $\Delta_{t,t_0}$ can be evaluated without
explicitly constructing either the global work operator or the
joint system--environment dynamics. This establishes a quantum fluctuation theorem for open quantum systems entirely within the reduced GKLS description.

Finally, as a simple illustration of the higher-moment
construction, we consider the work variance for an undriven
interaction and an initial thermal state.
Section~\ref{sec:fluctuation_theorem} showed that the work operator
and the TPM scheme yield the same first two moments for
energy-diagonal initial states. Consequently, under the
weak-coupling assumptions of Ref.~\cite{Miller2019}, the variance
of the work operator coincides with their result for the TPM work
variance, as we verify explicitly in
Appendix~\ref{app:OQS_second_moment_driven}. This agreement
implies that, in the slow-driving regime
\cite{Speck04,Cavina17,Miller2019,Scandi20}, the quantum
fluctuation--dissipation relation of Ref.~\cite{Miller2019} also
holds for the work operator and reads
$\beta\sigma_W^2/2=W_{\mathrm{diss}}+\mathcal Q$, where
$W_{\mathrm{diss}}=\braket{\hat W}-\Delta F_A$ and
$\mathcal Q\geq0$ is the quantum contribution arising from the
non-commutativity of the driving. Expanding
Eq.~\eqref{eq:reduced_moment_fluctuation_relation} consistently
to leading nontrivial order in linear response and using this
identity gives
\begin{equation}
\label{eq:Delta_Q}
    \Delta_{t,t_0}
    \simeq
    \beta e^{-\beta\Delta F_A}
    \left(
        \frac{\beta}{2}\sigma_W^2-W_{\mathrm{diss}}
    \right)
    =
    \beta e^{-\beta\Delta F_A}\mathcal Q.
\end{equation}
Thus, at this order, the correction to the Jarzynski equality is
governed by the same non-commutative contribution that modifies
the classical fluctuation--dissipation relation.

\section{Discussion: Work is an Observable}\label{sec:discussion}
The autonomous construction above gives the work operator a precise physical meaning as the energy transferred during a fixed quantum process. We now revisit the main objections to treating work as an observable and show how they are resolved once the control system and the full energetic implementation are included in the description.

\paragraph*{Work is a process observable, not a state function.}
A common objection to the work operator is that work characterizes a process, rather than an intrinsic property of the state~\cite{Talkner2007,Talkner2016}. We agree with the premise, but not with the conclusion. The work operator does not attempt to represent work as a state function. Its definition in Eq.~\eqref{eq:work_autonomous}
depends explicitly on the whole protocol through both the Hamiltonian path and the corresponding unitary dynamics. The operator is evaluated on the initial state, but it is parametrized by the process. There is therefore no contradiction between work being a process quantity and being represented, once the process is fixed, by an observable. What must be avoided is the classical interpretation that assigns simultaneous definite values to the initial and final energies when the corresponding operators do not commute. The work operator is not such a joint assignment; it is the quantum observable associated with the energy displacement induced by the process.

\paragraph*{Work eigenvalues are not necessarily energy differences.}
A connected criticism~\cite{Baumer2018} is that the eigenvalues of $\hat W(t)$ are not, in general, differences between eigenvalues of $\hat H(t)$ and $\hat H(0)$. This is true, but it is not a pathology. In quantum mechanics, the spectrum of a difference of two non-commuting observables is not the set of pairwise differences of their spectra. Requiring work values to be energy-eigenvalue differences amounts to assuming that the initial energy and the final Heisenberg-evolved energy admit a joint distribution. Such a joint distribution exists in the classical, commuting case, but not in general. The fact that the work spectrum is not given by TPM-like energy differences is precisely what allows the work operator to retain the contribution of energy coherences, rather than destroying them through an initial projective measurement. In the commuting limit of Eq.~\eqref{eq:classicality}, the two energy observables can be jointly diagonalized, and the work eigenvalues reduced to the corresponding classical energy changes.

The right interpretation of $\hat W(t)$ is instead that of an energy displacement induced by the process. This is analogous to other displacement observables in quantum mechanics. For instance, a finite displacement can be represented by $\hat x(t)-\hat x(0)$, even though $\hat x(t)$ and $\hat x(0)$ need not commute, and its spectrum need not be the set of pairwise differences between two position eigenvalues. Similarly, the velocity operator is operationally meaningful even though it is not obtained by assigning simultaneous sharp values to the position at two different times. Its eigenstates are states of definite change of  position, despite initial and final position not being well-defined. The same applies to work: eigenstates of $\hat W(t)$ are states of definite energetic displacement, even though they need not have a well-defined initial or final energy. 

\paragraph*{The TPM scheme is the operational definition of work.}
The two-point measurement scheme becomes natural if one insists that fluctuating work must be defined as the difference between two measured energies~\cite{Talkner2007}. One then has to measure the energy at the beginning and at the end of the process, and work becomes a property of the resulting measurement record. This is also the viewpoint behind the criticism that, without such a measurement record, the work operator is merely counterfactual: no agent has actually measured the initial and final energies, so there would be no operational fact of the matter about the work performed.

On the contrary, in the autonomous formulation, work is not introduced as the difference between two measurement outcomes, but as the energy transferred from the clock or battery to the system in an energy-conserving dynamics. This allows a definition of work that is completely independent of the intervention of an agent, and measurement need not be part of the definition of work. It therefore also avoids the infinite-resource requirement associated with the ideal measurements underlying the TPM scheme~\cite{Guryanova20}. Importantly, this definition is operationally testable in the same sense as any other quantum observable: once the process is fixed, so is the corresponding work operator, and its statistics can be obtained by performing a projective measurement in its eigenbasis on the initial state. Although our construction derives this observable from energy conservation rather than postulating a measurement prescription, Ref.~\cite{silva2025consistency} showed that the associated protocol is uniquely selected by four broad physical requirements: conservation laws at the level of the full probability distribution, reality, independence from the initial state, and no-signaling. This provides independent axiomatic support for the work-operator measurement and, whenever these requirements are imposed, singles it out over the TPM scheme.

\paragraph*{Jarzynski equality and the no-go theorem.}
It is often emphasized that the work operator does not satisfy the classical Jarzynski equality for arbitrary coherent quantum protocols~\cite{Talkner2007}. This, however, should not be regarded as a failure of the construction. The Jarzynski equality is a classical fluctuation relation, and therefore should only be expected to hold in a genuinely classical regime.  The work operator satisfies the quantum fluctuation Theorem~\ref{th:fluctuation_theorem}, whose correction term vanishes precisely in the commuting, and hence classical, limit. In that regime the standard Jarzynski equality is recovered. The no-go theorem of Ref.~\cite{Perarnau17} is therefore evaded because it adopts a weaker notion of classicality, based only on the absence of initial coherences, while non-commutative effects generated by the protocol itself may still be present: even an initially energy-diagonal state can undergo a genuinely quantum process if the Hamiltonians at different times do not commute. 

\paragraph*{Gauge dependence.}
Finally, one may object that the work operator is gauge-dependent~\cite{Campisi2011}: adding a time-dependent scalar $f(t)\,\mathbbm 1$ to the system Hamiltonian leaves the reduced dynamics unchanged, but shifts the work by $f(\tau)-f(0)$ (see Appendix~\ref{app:gauge}). In the non-autonomous description, this may appear to be an arbitrary gauge freedom. In the autonomous description, instead, the shift has a direct physical interpretation. Different choices of $f(t)$ correspond to different couplings between the system and the clock, and therefore to different energetic implementations of the same reduced unitary dynamics. The reduced dynamics of the system alone is not sufficient to determine thermodynamic work: the same channel can be generated by physically distinct drives with different energetic costs. Work measures precisely this cost, and is therefore a property of the full implementation, not only of the induced dynamics on the system.

\section{Conclusion}\label{sec:conclusion}

In this manuscript we have constructed an autonomous formulation of driven quantum dynamics. Starting from an arbitrary time-dependent Hamiltonian, we showed how to embed the corresponding evolution into a larger, time-independent dynamics on a composite system. In this picture, the external control is no longer treated as a classical primitive, but is included explicitly as part of the quantum description. This addresses a central limitation of the standard framework: the energetic cost of classical control may be orders of magnitude larger than the work exchanged by the quantum system itself, yet it is normally left outside the description. Autonomization overcomes this difficulty by reformulating the driven process as an energy-conserving evolution of a closed composite system.

This allowed us to derive a work operator in a natural way. Since the composite dynamics is energy conserving, work is unambiguously identified with the energy transferred between the system and the auxiliary degrees of freedom. The corresponding observable on the system alone is then uniquely determined by the requirement that its statistics reproduce those of the autonomous energy exchange. Thus, the work operator is not postulated, nor introduced through a measurement prescription, but follows from the autonomous energy-transfer description. In this sense, the construction avoids the ambiguities of other approaches to quantum work: all energetic contributions are explicitly accounted for, and the work operator characterizes the undisturbed process rather than a measurement-modified one.

We then derived a quantum fluctuation theorem for this operator. The resulting relation reduces to the usual Jarzynski equality whenever the relevant Hamiltonians commute, while genuinely quantum protocols give rise to an explicit non-commutative correction. This identifies the obstruction to the classical fluctuation relation directly in terms of non-commutativity. It also clarifies why our construction evades existing no-go theorems for quantum work: we do not impose agreement with the TPM scheme as a defining classicality condition. Instead, the autonomous formulation identifies the relevant classical regime directly: the Jarzynski equality is recovered exactly when the non-commutative correction to the energy-transfer fluctuations vanishes.

The same framework provides a particularly transparent way to study finite control. Since the control system is part of the physical description, imperfect preparations or uncertain implementations can be incorporated directly, leading to coarse-grained work operators and corresponding corrections to the fluctuation theorem. This makes explicit the different energetic contributions involved in the experiment, including both the work exchanged with the system and the resources required to realize the control. 

Finally, we extended the construction to open quantum systems and formulated the first law directly at the operator level, identifying distinct operators for work, heat, internal-energy change, and interaction-energy change. The global work statistics induce a hierarchy of moment operators on the accessible subsystem, which cannot generally be represented as powers of a single reduced observable. For an undriven interaction, the first-moment operators are determined entirely by the reduced dynamics, while in the GKLS regime this extends to the full hierarchy of moments, including for driven interactions.

The autonomous formulation therefore provides a self-contained setting in which work, heat, internal-energy changes, fluctuations, coherence, finite control, and control costs can be treated within a unified energy-conserving description.

\section*{Acknowledgments} 

The authors warmly thank Paolo Abiuso, Pharnam Bakhshinezhad, \v{C}aslav Brukner, Alessandra Colla, Mohammad Mehboudi, Martí Perarnau-Llobet, and Michalis Skotiniotis for insightful discussions.

D.C., M.H., and A.R. acknowledge funding from the European Research Council (Consolidator grant ‘Cocoquest’ 101043705). A.R. acknowledges funding from the Swiss National Science Foundation (Postdoc.Mobility grant `CATCH' 225461). This research was funded in part by the Austrian Science Fund (FWF) [10.55776/F71] and
[10.55776/COE1]. This publication was made possible through the
financial support of WOST (WithOutSpaceTime) grant
from the John Templeton Foundation. The opinions expressed in this publication are those of the authors and
do not necessarily reflect the views of the John Templeton Foundation
For the purpose of open access, the author(s) have applied a CC BY license to any author-accepted manuscript version arising from this submission.

\bibliography{mybib.bib}

\widetext
\newpage
\appendix

\section{Autonomous picture}
\subsection{Construction of the autonomous picture}\label{app:autonomous}
We give here the proof of Theorem~\ref{th:autonomization}. We begin by applying Trotter formula to the full evolution, obtaining
\begin{equation}
    e^{-i \hat H_\text{tot} t} = \lim_{n\to\infty} \left(e^{-i \hat H_C t/n} e^{-i \hat H_\text{int} t/n}\right)^n.
\end{equation}
Then, since 
\begin{equation}
    \hat H_\text{int} \ket{\psi} \ket{t_0} = \hat H_S(t_0) \ket{\psi} \ket{t_0}
\end{equation}
it follows that 
\begin{equation}
    e^{-i \hat H_C t/n} e^{-i \hat H_\text{int} t/n} \ket{\psi}_S \ket{t_0}_C = e^{-i \hat H_S(t_0) t/n} \ket{\psi} \ket{t_0 + t/n}.
\end{equation}

We then apply the evolution $n$ times and we find
\begin{equation}
    \left(e^{-i \hat H_C t/n} e^{-i \hat H_\text{int} t/n}\right)^n \ket{\psi}_S \ket{t_0}_C = \prod_{k=1}^n e^{-i \hat H_S(t_0+k\,t/n)\, t/n} \ket{\psi} \ket{t_0+t} \xrightarrow[n\to\infty]{} \mathcal{T}\exp\left(i \int_{t_0}^{t_0+t} dt^\prime \hat H_S(t^\prime)\right) \ket{\psi} \ket{t_0+t},
\end{equation}
meaning that
\begin{equation}
    e^{-i \hat H_\text{tot} t} \ket{\psi}\ket{t_0}= \hat U(t+t_0,t_0)\ket{\psi} \ket{t_0+t}.
\end{equation}
Finally, we replace this expression in
\begin{equation}
    e^{-i \hat H_\text{tot} t}  \,\hat\rho_S \otimes \hat \rho_C \,e^{i \hat H_\text{tot} t} = \int dt_0^\prime dt_0^{\prime \prime}  \braket{t_0^\prime|\hat \rho_C |t_0^{\prime \prime}} e^{-i \hat H_\text{tot} t} \,\hat \rho_S \otimes \ket{t_0^\prime}\!\bra{t_0^{\prime \prime}} \,e^{i\hat H_\text{tot}t}.
\end{equation}
and find 
\begin{equation}
    e^{-i \hat H_\text{tot} t} \, \hat \rho_S \otimes\hat  \rho_C \, e^{i \hat H_\text{tot} t} = \int dt_0^\prime dt_0^{\prime \prime}  \braket{t_0^\prime|\hat \rho_C |t_0^{\prime \prime}} \, \hat U(t+t_0',t_0')\, \hat\rho_S \,\hat U^\dagger(t+t_0'',t_0'') \otimes \ket{t+t_0^\prime}\!\bra{t+t_0^{\prime \prime}},
\end{equation}
which yields Eq.~\eqref{eq:autonomous_evolution} by taking a partial trace and a shift in the integration variable. Specializing this expression for $\hat \rho_C=\ket{t_0}\!\bra{t_0}$ gives Eq.~\eqref{eq:autonomous_vs_nonautonomous}.
\subsection{Beyond the unbounded Hamiltonian}\label{app:autonomous_physical}
We give here two models for the autonomization that do not require an unbounded Hamiltonian for the clock, and show how analogous results to Theorem~\ref{th:autonomization} can be derived. The first model involves a bounded continuous spectrum, while the second one has a finite spectrum.
\subsubsection{Bounded continuous spectrum}
Consider a generic clock Hamiltonian $\hat H_C$ with bounded continuous spectrum $\text{spec}(\hat H_C)\subset [E_-,E_+]$. Let $\{\ket{E}\}_{E\in[E_-,E_+]}$ denote its energy eigenstates. We define the band-limited time states as
\begin{equation}
\ket{t}:=\frac{1}{\sqrt{\Delta E}}\int_{E_-}^{E_+} dE\, e^{-iEt}\ket{E},
\qquad
\Delta E=E_+ - E_-.
\end{equation}
These states are translated correctly by $\hat H_C$, namely
\begin{equation}
e^{-i\hat H_C s}\ket{t}=\ket{t+s}.
\end{equation}
However, because the spectrum is bounded, they are not orthogonal in general:
\begin{equation}
\braket{t|t'}
=
\frac{1}{\Delta E}\int_{E_-}^{E_+} dE\, e^{-iE(t'-t)}
=
e^{-i\bar E(t'-t)}
\frac{\sin\!\left[\frac{\Delta E}{2}(t'-t)\right]}{\frac{\Delta E}{2}(t'-t)},
\qquad
\bar E= \frac{E_+ + E_-}{2}.
\end{equation}
Nevertheless, they become orthogonal when restricted to the discrete time lattice
\begin{equation}
t_n = n\,\Delta t,
\qquad
\Delta t = \frac{2\pi}{\Delta E},
\end{equation}
so that
\begin{equation}
\braket{t_n|t_m} = \delta_{nm}.
\end{equation}

We now define the total Hamiltonian by
\begin{equation}
\hat H_\text{tot} = \hat H_C + \hat H_\text{int},
\end{equation}
with interaction term
\begin{equation}
\hat H_\text{int}=\sum_n \hat H_S(t_n)\,\ket{t_n}\!\bra{t_n}.
\end{equation}

Let the initial clock state be $\hat \rho_C = \ket{0}\!\bra{0}$. In Theorem~\ref{th:autonomization}, we showed that the autonomization construction in the unbounded case reproduces a non-autonomous system dynamics generated by a time-dependent Hamiltonian. In the present bounded setting, this is no longer true in general: the reduced system evolution is instead described by a CPTP map,
\begin{equation}
\hat \rho_S(t) = \Tr_C\!\left[e^{-i\hat H_\text{tot}t}\, (\hat \rho_S \otimes \ket{0}\!\bra{0})\, e^{i \hat H_\text{tot}t}\right].
\end{equation}
However, when the evolution is evaluated at the lattice times $t_n$, one recovers the desired unitary dynamics
\begin{equation}
\hat \rho_S(t_n) = \hat U(t_n,0)\,\hat \rho_S\,\hat U^\dagger(t_n,0),
\end{equation}
which can be proved following the same steps of Appendix~\ref{app:autonomous}.
Therefore, the conclusion of Theorem~\ref{th:autonomization} remains valid at this discrete set of times, as a consequence of the bounded spectrum of the clock Hamiltonian.

This admits a simple physical interpretation: the clock has a finite time resolution set by $\Delta t = \frac{2\pi}{\Delta E}$. Hence, the clock becomes more precise as the bandwidth $\Delta E$ increases. In particular, in every practical scenario where time differences cannot be resolved below a finite time resolution $\Delta t$, it is sufficient to choose the clock bandwidth such that
\begin{equation}
\Delta E > \frac{2\pi}{\Delta t},
\end{equation}
so that the induced dynamics appears unitary at all times. In the unbounded limit $\Delta E \to \infty$, the time lattice becomes continuous, and we find the results in the main text.
\subsubsection{Finite spectrum}

Consider now the clock to be a finite-dimensional system with Hilbert space
$
\mathcal H_C=\mathbb C^d
$
and orthonormal basis
$
\{\ket{j}\}_{j\in\mathbb Z_d}.
$
Let $\tau>0$ be the total duration of the experiment, and define the uniform grid
\begin{equation}
    t_j = j\,\frac{\tau}{d},
    \qquad j=0,1,\dots,d.
\end{equation}
For each interval $[t_j,t_{j+1}]$, define the exact system propagator
\begin{equation}
    \hat U_j
    :=
    \hat U(t_{j+1},t_j)
    =
    \mathcal T \exp\!\left(
        -i\int_{t_j}^{t_{j+1}} dt'\, \hat H_S(t')
    \right),
    \qquad j=0,\dots,d-1.
\end{equation}
Then, introduce the unitary operator
\begin{equation}
    \hat U_d
    :=
    \sum_{j\in\mathbb Z_d}
    \hat U_j \otimes \ket{j+1}\!\bra{j},
\end{equation}
where $j+1$ is understood modulo $d$. Since $\hat U_d$ is unitary on the finite-dimensional space
$
\mathcal H_S\otimes\mathcal H_C,
$
there exists a Hermitian operator $\hat H_\text{tot}$ such that
\begin{equation}
    e^{-i\hat H_\text{tot}\Delta t}=\hat U_d,
    \qquad
    \Delta t := \frac{\tau}{d}.
\end{equation}
One convenient choice is
\begin{equation}
    \hat H_\text{tot}
    =
    \frac{i}{\Delta t}\,\log \hat U_d
    =
    \frac{id}{\tau}\,\log \hat U_d.
\end{equation}
The corresponding autonomous evolution, generated by this time-independent Hamiltonian, is the usual
\begin{equation}
    \hat U_\text{tot}(t)=e^{-i \hat H_\text{tot}t}.
\end{equation}
At the discrete times $t_k=k\Delta t$, we then have
\begin{equation}
    \hat U_\text{tot}(t_k)
    =
    e^{-i \hat H_\text{tot}t_k}
    =
    \left(e^{-i \hat H_\text{tot}\Delta t}\right)^k
    =
    \hat U_d^k.
\end{equation}
Hence, if the clock is initialized in $\ket{0}$, then for $k=0,\dots,d$,
\begin{equation}
    \hat U_d^k\bigl(\ket{\psi}\otimes\ket{0}\bigr)
    =
    \left(
        \hat U_{k-1}\cdots \hat U_1\hat U_0\,\ket{\psi}
    \right)\otimes \ket{k \!\!\!\pmod d}.
\end{equation}
Using the composition law for propagators, this becomes
\begin{equation}
    \hat U_d^k\bigl(\ket{\psi}\otimes\ket{0}\bigr)
    =
    \bigl(
        \hat U(t_k,0)\ket{\psi}
    \bigr)\otimes \ket{k \!\!\!\pmod d}.
\end{equation}
Therefore, at the grid times $\{t_k\}_k$, the reduced system state is
\begin{equation}
    \hat \rho_S(t_k)
    :=
    \Tr_C\!\left[
        e^{-i \hat H_\text{tot}t_k}
        (\hat \rho_S\otimes\ket{0}\!\bra{0})
        e^{i \hat H_\text{tot}t_k}
    \right]
    =
    \hat U(t_k,0)\,\hat \rho_S\,\hat U^\dagger(t_k,0).
\end{equation}
For intermediate times $t\notin\{t_k\}_k$, the reduced dynamics is a CPTP map, which in general does not coincide exactly with the driven evolution generated by $\hat H_S(t)$. However, as in the previous section, given a physical time resolution $\Delta t$, it is sufficient to choose
\begin{equation}
    d > \frac{\tau}{\Delta t}
\end{equation}
to obtain an evolution that, for all practical purposes, is undistinguishable from the unitary dynamics generated by $\hat H_S(t)$. In the limit $d\to\infty$, the grid becomes dense and the exact stroboscopic reconstruction at the times $t_k$ approaches the continuous-time driven evolution of the main text.
\section{Illustrative example}\label{app:example}
Consider the example discussed in the main text: a qubit initially prepared in
the state $\ket{\psi}=\ket{0}$ and subject to the instantaneous quench
\begin{equation}
  \hat H_S(t)=
  \begin{cases}
    \hat \sigma_x, & t=0,\\[2mm]
    \hat \sigma_z+\hat \sigma_x, & t>0.
  \end{cases}
\end{equation}
For this protocol, the work operator of Eq.~\eqref{eq:work_operator_system} is
\begin{equation}
  \hat W(t)=
  \begin{cases}
    0, & t=0,\\[2mm]
    \hat \sigma_z, & t>0.
  \end{cases}
\end{equation}
Since the initial state $\ket{0}$ is an eigenstate of $\hat \sigma_z$, the work
is deterministic:
\begin{equation}
    w=\braket{0|\hat\sigma_z|0}=1 .
\end{equation}

Let us now compare this prediction with the TPM scheme~\cite{Tasaki2000,Kurchan2001}. Immediately after the
quench, at $t=0^+$, the TPM work distribution is
\begin{equation}
  p_{\rm TPM}(w,0^+)
  =
  \sum_{E,E'}
  \delta\bigl(w-(E'-E)\bigr)\,
  p^{0^+}(E'|E)\,p^0(E).
\end{equation}
The first measurement is performed in the eigenbasis of the initial Hamiltonian
$\hat\sigma_x$, whose eigenstates we denote by $\ket{\pm}$ with eigenvalues
$E=\pm1$. Since
$ \bigl|\braket{+|0}\bigr|^2 = \bigl|\braket{-|0}\bigr|^2 = \frac{1}{2},$
we have
\begin{equation}
    p^0(E=\pm1)=\frac{1}{2}.
\end{equation}
The second measurement is performed immediately after the quench, in the
eigenbasis of the final Hamiltonian $\hat\sigma_x+\hat\sigma_z$. Its normalized
eigenstates are
\begin{equation}\label{eq:H_1_eig}
  \ket{E'_+}
  =
  \frac{1}{\sqrt{2(2+\sqrt{2})}}
  \begin{pmatrix}
    1+\sqrt{2}\\[1mm]
    1
  \end{pmatrix},
  \qquad
  \ket{E'_-}
  =
  \frac{1}{\sqrt{2(2-\sqrt{2})}}
  \begin{pmatrix}
    1-\sqrt{2}\\[1mm]
    1
  \end{pmatrix},
\end{equation}
with eigenvalues $E'_\pm=\pm\sqrt{2}$. Therefore,
\begin{equation}
  p^{0^+}(E'|E)=
  \begin{cases}
    \dfrac{2+\sqrt{2}}{4}, &
    (E'=\sqrt{2},\,E=1)
    \ \vee\
    (E'=-\sqrt{2},\,E=-1),\\[3mm]
    \dfrac{2-\sqrt{2}}{4}, &
    (E'=-\sqrt{2},\,E=1)
    \ \vee\
    (E'=\sqrt{2},\,E=-1).
  \end{cases}
\end{equation}
Combining these transition probabilities with $p^0(E)=1/2$, one obtains
\begin{equation}
  p_{\rm TPM}(w,0^+)=
  \begin{cases}
    \dfrac{2+\sqrt{2}}{8}, &
    w=\sqrt{2}-1
    \ \vee\
    w=1-\sqrt{2},\\[2mm]
    \dfrac{2-\sqrt{2}}{8}, &
    w=\sqrt{2}+1
    \ \vee\
    w=-\sqrt{2}-1.
  \end{cases}
\end{equation}
Thus, while the work operator predicts the deterministic value $w=1$, the TPM
scheme assigns a nonzero probability
\begin{equation}
    p_{\rm TPM}(w=1+\sqrt{2})=\frac{2-\sqrt{2}}{8}
\end{equation}
to a work value larger than one.

A similar conclusion is obtained from the quasi-probability approach~\cite{Lostaglio2018}, based on two weak energy measurements, one before and one after the quench. In this case, the work quasi-probability at $t=0^+$ is
\begin{equation}
  p_{\rm QP}(w,0^{+})
  =
  \sum_{E,E'}
  \delta\bigl(w-(E'-E)\bigr)\,
  \bra{E'}\hat U(0^{+})\ket{E}\,
  \braket{E|\psi}\braket{\psi|E'} .
\end{equation}
Using the eigenstates and eigenvalues introduced above, and noting that immediately after the quench $\hat U(0^+)=\mathbbm{1}$, we obtain
\begin{equation}
  p_{\rm QP}(w,0^{+}) =
  \begin{cases}
    \dfrac{1+\sqrt{2}}{4}, & w=\sqrt{2}-1,\\[2mm]
    \dfrac{1}{4}, & w=\sqrt{2}+1\ \ \vee\ \ w=1-\sqrt{2},\\[2mm]
    \dfrac{1-\sqrt{2}}{4}, & w=-\sqrt{2}-1.
  \end{cases}
\end{equation}
Thus, also in this approach, a nonzero weight is assigned to a work value larger than one:
\begin{equation}
    p_{\rm QP}(w=\sqrt{2}+1)=\frac{1}{4}.
\end{equation}
\section{Quantum fluctuation theorem}\label{app:fluctuation_theorem}

We prove Theorem~\ref{th:fluctuation_theorem}. For notational simplicity, we set $\beta=1$; equivalently, one may replace $\hat H\mapsto\beta\hat H$ throughout. Let
\begin{equation}
    \hat H_{t,t_0}
    :=
    \hat U^\dagger(t,t_0)\hat H_S(t)\hat U(t,t_0),
    \qquad
    \hat H_{t_0}
    :=
    \hat H_S(t_0),
\end{equation}
so that
\begin{equation}
    \hat W(t,t_0)=\hat H_{t,t_0}-\hat H_{t_0}.
\end{equation}
We define
\begin{equation}
    \Delta_{t,t_0}
    :=
    \braket{e^{-\hat W(t,t_0)}}_{\hat\rho_S(t_0)}
    -
    \frac{Z_t}{Z_{t_0}},
    \qquad
    \hat\rho_S(t_0)
    =
    \frac{e^{-\hat H_{t_0}}}{Z_{t_0}}.
\end{equation}
Since $\Tr[e^{-\hat H_{t,t_0}}]
    =
    \Tr[e^{-\hat H_S(t)}]
    =
    Z_t$,
this can be written as
\begin{equation}
    \Delta_{t,t_0}
    =
    \frac{1}{Z_{t_0}}
    \left(
        \Tr[
            e^{-(\hat H_{t,t_0}-\hat H_{t_0})}
            e^{-\hat H_{t_0}}
        ]
        -
        \Tr[e^{-\hat H_{t,t_0}}]
    \right).
\end{equation}
Introduce
\begin{equation}
    \hat K_\lambda(t,t_0)
    =
    \hat H_{t,t_0}
    -
    \lambda\hat H_{t_0},
    \qquad
    F_{t,t_0}(\lambda)
    =
    \Tr\!\left[
        e^{-\hat K_\lambda(t,t_0)}
        e^{-\lambda\hat H_{t_0}}
    \right].
\end{equation}
Then $F_{t,t_0}(1)
    =
    \Tr\!\left[
        e^{-\hat W(t,t_0)}
        e^{-\hat H_{t_0}}
    \right]$,
    and 
    $F_{t,t_0}(0)=Z_t$, and hence
\begin{equation}
    \Delta_{t,t_0}
    =
    \frac{1}{Z_{t_0}}
    \int_0^1 d\lambda\,
    \frac{dF_{t,t_0}(\lambda)}{d\lambda}.
\end{equation}
Using the Duhamel identity,
\begin{equation}
    \frac{d}{d\lambda}e^{\hat B(\lambda)}
    =
    \int_0^1 ds\,
    e^{(1-s)\hat B(\lambda)}
    \frac{d\hat B(\lambda)}{d\lambda}
    e^{s\hat B(\lambda)},
\end{equation}
one obtains
\begin{equation}
\begin{aligned}
    \frac{dF_{t,t_0}(\lambda)}{d\lambda}
    =
    \int_0^1 ds\,
    \Tr\!\left[
        e^{-(1-s)\hat K_\lambda(t,t_0)}
        [\hat H_{t_0},e^{-s\hat K_\lambda(t,t_0)}]
        e^{-\lambda\hat H_{t_0}}
    \right].
\end{aligned}
\end{equation}
We now use the commutator form of the same identity,
\begin{equation}
\begin{aligned}
    [\hat H_{t_0},e^{-s\hat K_\lambda(t,t_0)}]
    =
    \int_0^s du\,
    &e^{-(s-u)\hat K_\lambda(t,t_0)}
    [\hat K_\lambda(t,t_0),\hat H_{t_0}]
    e^{-u\hat K_\lambda(t,t_0)}.
\end{aligned}
\end{equation}
Since $[\hat K_\lambda(t,t_0),\hat H_{t_0}]
    =
    [\hat H_{t,t_0},\hat H_{t_0}]$, this gives
\begin{equation}
\begin{aligned}
    \frac{dF_{t,t_0}(\lambda)}{d\lambda}
    =
    \int_0^1 dr\,(1-r)\,
    \Tr\!\Big[
        &e^{(r-1)\hat K_\lambda(t,t_0)}
        [\hat H_{t,t_0},\hat H_{t_0}]
        e^{-r\hat K_\lambda(t,t_0)}
        e^{-\lambda\hat H_{t_0}}
    \Big].
\end{aligned}
\end{equation}
Substituting this expression into the integral for
$\Delta_{t,t_0}$ gives
\begin{equation}
\begin{aligned}
    \Delta_{t,t_0}
    =
    \frac{1}{Z_{t_0}}
    \int_0^1 d\lambda
    \int_0^1 dr\,(1-r)\,
    \Tr\!\Big[
        &e^{(r-1)\hat K_\lambda(t,t_0)}
        [\hat H_{t,t_0},\hat H_{t_0}]
        e^{-r\hat K_\lambda(t,t_0)}
        e^{-\lambda\hat H_{t_0}}
    \Big].
\end{aligned}
\end{equation}
Setting $x=1-r$ and defining
\begin{equation}
    \hat\Xi_{\lambda,x}
    :=
    e^{-x\hat K_\lambda(t,t_0)}
    [\hat H_{t,t_0},\hat H_{t_0}]
    e^{-(1-x)\hat K_\lambda(t,t_0)},
\end{equation}
we obtain
\begin{equation}
    \Delta_{t,t_0}
    =
    \frac{1}{Z_{t_0}}
    \int_0^1 d\lambda
    \int_0^1 dx\,x\,
    \Tr\!\left[
        \hat\Xi_{\lambda,x}
        e^{-\lambda\hat H_{t_0}}
    \right],
\end{equation}
which is the claimed expression for $\beta=1$. Restoring
$\beta$ produces the factor $\beta^2$ in
the main text.

Finally, the Golden--Thompson inequality gives
\begin{equation}
    \Tr\!\left[
        e^{-(\hat H_{t,t_0}-\hat H_{t_0})}
        e^{-\hat H_{t_0}}
    \right]
    \geq
    \Tr[e^{-\hat H_{t,t_0}}]
    =
    Z_t,
\end{equation}
with equality if and only if
\begin{equation}
    [\hat H_{t,t_0}-\hat H_{t_0},\hat H_{t_0}]
    =[\hat H_{t,t_0},\hat H_{t_0}]=
    0,
\end{equation}
or, equivalently,
\begin{equation}
    \left[
        \hat U^\dagger(t,t_0)\hat H_S(t)\hat U(t,t_0),
        \hat H_S(t_0)
    \right]
    =
    0.
\end{equation}
Therefore $\Delta_{t,t_0}\geq0$, and the correction vanishes precisely when the above commutator vanishes. In that case, the fluctuation theorem reduces to the Jarzynski equality.

\section{Imprecise control} 
In this appendix we give the proof of Theorem~\ref{th:imprecise_fluctuation_theorem} and provide an example where all the terms in the fluctuation theorem are explicitly calculated. 
\subsection{Proof of the theorem}
\label{app:precise_control_theorem}

For notational convenience, define $t=t_0+\Delta t$ and
\begin{equation}
    \hat W_s
    :=
    \hat W(t+s,t_0+s),
    \qquad
    f(s)
    :=
    \braket{e^{-\beta\hat W_s}}_{\hat\pi_S(t_0)}.
\end{equation}
By the definition of the moment operators in
Eq.~\eqref{eq:imprecise_work}, we have
\begin{align}
        \sum_{n=0}^{\infty}
        \frac{(-\beta)^n}{n!}\,
        \braket{\hat W_{G_\varepsilon}^{(n)}(t,t_0)
    }_{\hat\pi_S(t_0)}
    =
    \int ds\,G_\varepsilon(s)\,
    \braket{e^{-\beta\hat W_s}}_{\hat\pi_S(t_0)}
    =\int ds\,G_\varepsilon(s)\,f(s).
\end{align}
Expanding around the intended initial time gives
\begin{equation}
    f(s)
    =
    f(0)
    +s f'(0)
    +\frac{s^2}{2}f''(0)
    +o(s^2).
\end{equation}
Using
\begin{equation}
    \int ds\,sG_\varepsilon(s)=0,
    \qquad
    \int ds\,s^2G_\varepsilon(s)=\varepsilon^2,
\end{equation}
we obtain
\begin{equation}
    \int ds\,G_\varepsilon(s)\,f(s)
    =
    f(0)
    +\frac{\varepsilon^2}{2}f''(0)
    +o(\varepsilon^2).
\end{equation}
The zeroth-order term follows from
Theorem~\ref{th:fluctuation_theorem},
\begin{equation}
    f(0)
    =
    e^{-\beta\Delta F}
    +\Delta_{t,t_0},
\end{equation}
while
\begin{equation}
    I_{t,t_0}
    =
    \frac{1}{2}f''(0)
    =
    \frac{1}{2}
    \frac{d^2}{ds^2}
    \braket{
        e^{-\beta\hat W(t+s,t_0+s)}
    }_{\hat\pi_S(t_0)}
    \bigg|_{s=0}.
\end{equation}
Combining these expressions proves
Theorem~\ref{th:imprecise_fluctuation_theorem}.

\subsection{Example}\label{app:precise_control_example}
We illustrate finite control with a smoothly driven, exactly solvable
qubit. Consider
\begin{equation}
    \hat H_S(t)
    =
    -\dot\phi(t)\hat\sigma_y
    +
    g(t)\hat R(t)\hat\sigma_x\hat R^\dagger(t),
    \qquad
    \hat R(t)=e^{i\phi(t)\hat\sigma_y}.
\end{equation}
Since $i\dot{\hat R}(t)\hat R^\dagger(t)
=-\dot\phi(t)\hat\sigma_y$, the propagator is
\begin{equation}
    \hat U(t,t_0)
    =
    \hat R(t)
    \exp\left[
        -i\hat\sigma_x\int_{t_0}^{t}dr\,g(r)
    \right]
    \hat R^\dagger(t_0).
\end{equation}
We specialize to $\phi(t)=\Omega t/2$ and $g(t)=gt^3$, obtaining
\begin{equation}
    \hat H_S(t)
    =
    -\frac{\Omega}{2}\hat\sigma_y
    +
    gt^3
    \left[
        \cos(\Omega t)\hat\sigma_x
        +
        \sin(\Omega t)\hat\sigma_z
    \right],
\end{equation}
and the propagator
\begin{equation}
    \hat{U}(t,t_0) = e^{i \frac{\Omega t}{2} \hat\sigma_y} e^{-i \frac{g}{4}(t^4-t_0^4) \hat\sigma_x} e^{-i \frac{\Omega t_0}{2} \hat\sigma_y} .
\end{equation}
When $\Omega=0$, the Hamiltonians commute at different times, whereas
for $g=0$ the Hamiltonian is time independent. The initial state $\hat{\pi}_S(t_0) = e^{-\beta \hat{H}(t_0)}/Z_{t_0}$ can be written in an useful form by using the Bloch vector $\bold{h}$ of the Hamiltonian $\hat{H}_S(t) = \bold{h}(t) \cdot \hat{\boldsymbol{\sigma}}$, with $\bold{h}(t) = (gt^3 \cos(\Omega t), - \frac{\Omega}{2}, gt^3 \sin(\Omega t))^T$, the Pauli vector $\hat{\boldsymbol{\sigma}} = (\hat\sigma_x, \hat\sigma_y, \hat\sigma_z)^T$, and length $E_t = |\bold{h}(t)| = \sqrt{g^2 t^6 + \frac{\Omega}{2}}$. This gives us 

\begin{equation}
    e^{- \beta \, \bold{h}(t) \cdot \hat{\boldsymbol{\sigma}}} = \cosh(\beta E_t) \, \mathbb{I} - \frac{\sinh(\beta E_t)}{E_t} \, \bold{h}(t) \cdot \hat{\boldsymbol{\sigma}} ,
\end{equation}

and $Z_t = 2 \cosh(\beta E_t)$. The initial state at $t_0$ then reads

\begin{equation}
    \hat\pi_S(t_0) = \frac{1}{2} \left( \mathbb{I} - \frac{\tanh(\beta E_{t_0})}{E_{t_0}} \, \bold{h}(t) \cdot \hat{\boldsymbol{\sigma}} \right) .
\end{equation}

Consider a protocol evolving from $t_0$ to $t=t_0+\Delta t$. For a timing error $s$,
define $\hat W_s:=\hat W(t+s,t_0+s)$,
$D_s:=(t+s)^3-(t_0+s)^3$, and
$\Phi_s:=\frac{g}{2}[(t+s)^4-(t_0+s)^4]$. The work operator can be written in terms of Pauli matrices
\begin{equation}
    \hat W_s
    \equiv 
    \bold{w}_s \cdot \hat{\boldsymbol{\sigma}}
    =
    X_s\hat\sigma_x
    +
    Y_s\hat\sigma_y
    +
    Z_s\hat\sigma_z,
\end{equation}
with the Bloch vector of the work operator $\bold{w}_s = (X_s,Y_s,Z_s)^T$, and
\begin{align}
    X_s
    &=
    gD_s\cos\bigl(\Omega(t_0+s)\bigr)
    -
    \frac{\Omega}{2}
    \sin\Phi_s\sin\bigl(\Omega(t_0+s)\bigr),
    \\
    Y_s
    &=
    \frac{\Omega}{2}(1-\cos\Phi_s),
    \\
    Z_s
    &=
    gD_s\sin\bigl(\Omega(t_0+s)\bigr)
    +
    \frac{\Omega}{2}
    \sin\Phi_s\cos\bigl(\Omega(t_0+s)\bigr).
\end{align}
The norm, given by $\chi_s = |\bold{w}_s| = \sqrt{X_s^2 + Y_s^2 + Z_s^2}$, is then
\begin{equation}
    \chi_s
    =
    \sqrt{
        g^2D_s^2
        +
        \Omega^2\sin^2\left(\frac{\Phi_s}{2}\right)
    }.
\end{equation}
Analogously to before, we write
\begin{equation}
    e^{-\beta \hat{W}_s} = \cosh(\beta \chi_s) \, \mathbb{I} - \frac{\sinh(\beta \chi_s)}{\chi_s} \, \bold{w}_s \cdot \hat{\boldsymbol{\sigma}} ,
\end{equation}
where we find
\begin{equation}
    \braket{e^{-\beta\hat W_s}}_{\hat\pi_S(t_0)}
    =
    \cosh(\beta\chi_s)
    +
    \frac{\tanh(\beta E_{t_0})}{E_{t_0}}\,
    K_s \, \frac{\sinh(\beta\chi_s)}{\chi_s} ,
    \label{eq:smooth_example_shifted_fluctuation}
\end{equation}
with $K_s \equiv \bold{h}(t_0) \cdot \bold{w}_s$, resulting in
\begin{equation}
    K_s
    =
    gt_0^3
    \left(
        X_s \cos(\Omega t_0)
        +
        Z_s \sin(\Omega t_0)
    \right)
    -
    \frac{\Omega}{2}Y_s.
\end{equation}
The quantum correction using $e^{-\beta \Delta F} = Z_t/Z_{t_0}$, is therefore
\begin{equation}
    \Delta_{t,t_0}
    =
    \braket{e^{-\beta\hat W_0}}_{\hat\pi_S(t_0)}
    -
    e^{-\beta\Delta F}
    =
    \cosh(\beta\chi_0)
    +
    \frac{\tanh(\beta E_{t_0})}{E_{t_0}}\,
    K_0\frac{\sinh(\beta\chi_0)}{\chi_0}
    -
    \frac{\cosh(\beta E_t)}
    {\cosh(\beta E_{t_0})}.
    \label{eq:example_quantum_correction}
\end{equation}

To calculate the finite-control correction, we denote derivatives with
respect to $s$, evaluated at $s=0$, by a dot and a subscript $0$. In
particular, $\dot D_0=3(t^2-t_0^2)$,
$\ddot D_0=6(t-t_0)$,
$\dot\Phi_0=2gD_0$, and
$\ddot\Phi_0=6g(t^2-t_0^2)$. It follows that
\begin{align}
    \dot\chi_0
    &=
    \frac{
        g^2D_0\dot D_0
        +
        \frac{\Omega^2}{4}\sin\Phi_0\,\dot\Phi_0
    }{\chi_0},
    \\
    \ddot\chi_0
    &=
    \frac{
        g^2(\dot D_0^2+D_0\ddot D_0)
        +
        \frac{\Omega^2}{4}
        \left(
            \cos\Phi_0\,\dot\Phi_0^2
            +
            \sin\Phi_0\,\ddot\Phi_0
        \right)
    }{\chi_0}
    -
    \frac{\dot\chi_0^2}{\chi_0},
    \\
    \dot K_0
    &=
    gt_0^3
    \left(
        g\dot D_0
        -
        \frac{\Omega^2}{2}\sin\Phi_0
    \right)
    -
    \frac{\Omega^2}{4}\sin\Phi_0\,\dot\Phi_0,
    \\
    \ddot K_0
    &=
    gt_0^3
    \left(
        g\ddot D_0
        -
        \Omega^2gD_0
        -
        \Omega^2\cos\Phi_0\,\dot\Phi_0
    \right)
    -
    \frac{\Omega^2}{4}
    \left(
        \cos\Phi_0\,\dot\Phi_0^2
        +
        \sin\Phi_0\,\ddot\Phi_0
    \right).
\end{align}

Defining $h_\beta(\chi):=\sinh(\beta\chi)/\chi$, with $h_\beta'(\chi) \equiv \frac{\partial}{\partial \chi} h_\beta(\chi)$, gives
\begin{equation}
    h_\beta'(\chi)
    =
    \frac{
        \beta\chi\cosh(\beta\chi)-\sinh(\beta\chi)
    }{\chi^2},
    \qquad
    h_\beta''(\chi)
    =
    \frac{
        (\beta^2\chi^2+2)\sinh(\beta\chi)
        -
        2\beta\chi\cosh(\beta\chi)
    }{\chi^3},
\end{equation}
where we obtain
\begin{align}
    I_{t,t_0}
    &=
    \frac{\beta}{2}\ddot\chi_0
    \sinh(\beta\chi_0)
    +
    \frac{\beta^2}{2}\dot\chi_0^2
    \cosh(\beta\chi_0)
    +
    \frac{\tanh(\beta E_{t_0})}{2E_{t_0}}
    \left(
        \ddot K_0\,h_\beta(\chi_0)
        +
        2\dot K_0\,h_\beta'(\chi_0)\dot\chi_0
        +
        K_0
        \left(
            h_\beta''(\chi_0)\dot\chi_0^2
            +
            h_\beta'(\chi_0)\ddot\chi_0
        \right)
    \right).
    \label{eq:example_finite_control_correction}
\end{align}

In the commuting limit $\Omega=0$, this reduces to
\begin{equation}
    I_{t,t_0}
    =
    \frac{
        3\beta g(t-t_0)\sinh(\beta gt^3)
        +
        \frac{9}{2}\beta^2g^2(t^2-t_0^2)^2\cosh(\beta gt^3)
    }{
        \cosh(\beta gt_0^3)
    }.
\end{equation}
Thus, imperfect timing control produces a nonvanishing correction even
in the classical limit, when the Hamiltonians commute at different
times.

It is interesting to compare the two corrections in the near-classical,
weak-imprecision regime. We set $t_0=0$ and consider
$\varepsilon/t\ll1$ and $|\Omega|\ll gt^3$, while keeping
$x:=\beta gt^3$ fixed. Indeed, in the commuting limit $\Omega=0$, the
work operator is $\hat W(t,0)=gt^3\hat\sigma_x$, so $gt^3$ is its
characteristic work scale and $x$ measures this scale in units of the
thermal energy $1/\beta$. The relative magnitude of the two corrections
is
\begin{equation}
    \frac{\varepsilon^2I_{t,0}}{\Delta_{t,0}}
    \simeq
    8\frac{(\varepsilon/t)^2}{(\beta\Omega)^2}
    \frac{
        3x\sinh x+\frac{9}{2}x^2\cosh x
    }{
        \cosh x-\frac{\sinh x}{x}
    }.
\end{equation}
In the weak-driving regime $x\ll1$, both corrections depend quadratically
on the characteristic work scale. The quantum contribution dominates
when $\varepsilon/t\ll\beta|\Omega|$, while the imprecise-control correction dominates otherwise. By contrast, for $x\gg1$ the ratio grows quadratically with $x$, so resolving the quantum correction requires the increasingly stringent condition $\varepsilon/t\ll\beta|\Omega|/x$. Stronger thermodynamic driving therefore makes the fluctuation relation increasingly sensitive to timing imprecision relative to intrinsic quantum effects.

\section{Work in open quantum systems}\label{app:OQS}
In this section we derive the expressions for the reduced work operator used in the main text. In particular, we will give an exact form in the case of the first momentum operator $\hat W_A^{(1)}$ in the case of a time-independent interaction, and the full hierarchy of momenta $\hat W_A^{(n)}$ for general time-dependent interactions, in the GKLS limit. The calculations rely on a generalized quantum regression theorem we prove in Appendix~\ref{app:generalized_QRT}.

\subsection{First moment for time-independent interaction}\label{app:oqs_undriven}

Let's start from the time-independent case. Throughout, we assume the factorized initial state $\hat\rho_S(t_0) = \hat\rho_A(t_0)\otimes\hat\sigma_E$ with $[\hat\sigma_E,\hat H_E]=0$, and decompose the total Hamiltonian as
\begin{equation}\label{eq:app_Hsplit}
    \hat H_S(t) = \hat H_A(t)\otimes\mathbbm 1_E + \mathbbm 1_A\otimes \hat H_E + \hat V,
\end{equation}
with only $\hat H_A(t)$ time-dependent.

The work operator $\hat W(t,t_0) = \hat U^\dagger(t,t_0)\,\hat H_S(t)\,\hat U(t,t_0) - \hat H_S(t_0)$ admits the integral representation
\begin{equation}\label{eq:app_W_int}
    \hat W(t,t_0) = \int_{t_0}^t\!ds\,\hat U^\dagger(s,t_0)\big(\partial_s \hat H_A(s)\otimes\mathbbm 1_E\big)\hat U(s,t_0),
\end{equation}
obtained by Eq.~\eqref{eq:work_op_int} and Eq.~\eqref{eq:app_Hsplit}, or equivalently by $    \hat W(t,t_0)
    =
    \int_{t_0}^{t}\!ds\,
    \frac{d}{ds}
    \left[
        \hat U^\dagger(s,t_0)
        \hat H_S(s)
        \hat U(s,t_0)
    \right]$.

Consider the evolution map on the state of $A$ only
\begin{equation}
    \hat \rho_A(t) = \Phi_t[\hat \rho_A] = \text{Tr}_E\left[\hat U(t,t_0) \, \hat\rho_A(t_0) \otimes \hat \sigma_E \,\hat U^\dagger(t,t_0)\right].
\end{equation}
The Heisenberg-picture adjoint of this map is defined by $\Tr_A[\hat O_A\,\Phi_t[\hat\rho_A]] = \Tr_A[\Phi_t^\dagger[\hat O_A]\,\hat\rho_A]$ for all $\hat\rho_A,O_A$, and reads
\begin{equation}\label{eq:app_Phi_dag}
    \Phi_t^\dagger[\hat O_A] = \Tr_E\!\big[(\mathbbm 1_A\otimes\hat\sigma_E)\,\hat U^\dagger(t,t_0)(\hat O_A\otimes\mathbbm 1_E)\hat U(t,t_0)\big].
\end{equation}

The first-moment reduced work operator on the system $A$ alone can be conveniently written in terms of this adjoint map. In fact, by definition,
\begin{equation}\label{eq:app_W_A_1}
    \hat W_A^{(1)}(t) = \Tr_E\!\big[(\mathbbm 1_A\otimes\hat\sigma_E)\,\hat W(t,t_0)\big] = \int_{t_0}^tds\,\Phi_s^\dagger[\partial_s\hat H_A(s)],
\end{equation}
where we recognized each term in the integrand as the action of $\Phi_s^\dagger$ on the $A$-operator $\partial_s \hat H_A(s)$. We can apply the same reasoning to $\Delta\hat E(t,t_0)$ from Eq.~\eqref{eq:energy_operators}, obtaining
\begin{align}
    \Delta\hat E(t,t_0)
    &=
    \int_{t_0}^{t}\!ds\,
    \frac{d}{ds}
    \left[
        \hat U^\dagger(s,t_0)
        \bigl(\hat H_A(s)\otimes\mathbbm 1_E\bigr)
        \hat U(s,t_0)
    \right]
    \nonumber\\
    &=
    \int_{t_0}^{t}\!ds\,
    \hat U^\dagger(s,t_0)
    \left(
        \partial_s\hat H_A(s)\otimes\mathbbm 1_E
        +
        i[\hat H_S(s),\hat H_A(s)\otimes \mathbbm 1_E]
    \right)
    \hat U(s,t_0).
    \label{eq:DeltaE_integral}
\end{align}
Taking the first-moment reduction therefore yields
\begin{align}
    \Delta\hat E_A^{(1)}(t,t_0)
    &=
    \int_{t_0}^{t}\!ds\,
    \left(
        \Phi_s^\dagger[\partial_s\hat H_A(s)]
        +
        \dot\Phi_s^\dagger[\hat H_A(s)]
    \right)=
    \int_{t_0}^{t}\!ds\,
    \Phi_s^\dagger\!\left[
        \partial_s\hat H_A(s)
        +
        \mathcal L_s^\dagger[\hat H_A(s)]
    \right],
\end{align}
where in the last equality we used
$\dot\Phi_s^\dagger=\Phi_s^\dagger\circ\mathcal L_s^\dagger$. By comparing the expressions for $\hat W_A^{(1)}$ and $\Delta\hat E_A^{(1)}$, and applying the reduced first law of Eq.~\eqref{eq:reduced_first_law}, we find the heat operator
\begin{equation}
         \hat Q_A^{(1)}(t,t_0)= -\int_{t_0}^{t}\!ds\,
    \Phi_s^\dagger[\mathcal L_s^\dagger[\hat H_A(s)]].
\end{equation}
In the more general case where $\Delta \hat V$ is not negligible, other contributions would be present.

We now turn to the higher-order moment operators
\begin{equation}\label{eq:app_W_A_n}
    \hat W_A^{(n)}(t) = \Tr_E\!\big[(\mathbbm 1_A\otimes\hat\sigma_E)\,\hat W^n(t,t_0)\big].
\end{equation}
Expanding $\hat W^n$ using Eq.~\eqref{eq:app_W_int} gives
\begin{equation}\label{eq:app_W_n_expanded}
    \hat W_A^{(n)}(t) = \int_{[t_0,t]^n}\!\!ds_1\cdots ds_n\,\Tr_E\!\bigg[(\mathbbm 1_A\otimes\hat\sigma_E)\prod_{k=1}^n \hat U^\dagger(s_k,t_0)\big(\partial_{s_k} \hat H_A(s_k)\otimes\mathbbm 1_E\big)\hat U(s_k,t_0)\bigg],
\end{equation}
with the product ordered $k=1,2,\ldots,n$ from left to right. The integrand is an operator-valued multi-time correlation on $A$, whose expectation in $\hat\rho_A(t_0)$ gives the corresponding correlation function in $\hat\rho_S(t_0)=\hat\rho_A(t_0)\otimes\hat\sigma_E$.

Although the general expression for the higher work moments in
Eq.~\eqref{eq:app_W_n_expanded} does not admit substantial simplification, it
takes a more transparent form under standard assumptions. For this reason, we specialize
to the Markov--Lindblad limit.

\subsection{Review on GKLS limit} \label{app:Markov-Lindblad}
We now move on to the case where the external agent has control over the interaction between the $A$ and then environment $E$, so that the total Hamiltonian reads
\begin{equation}
    \hat H_S(t) = \hat H_A(t)\otimes \mathbbm 1_E + \mathbbm 1_A \otimes \hat H_E + \hat V(t).
\end{equation}

In order to study the form of the work moments, we first write down the equation of motion of the subsystem of interest $A$. We study this scenario in the more tractable Markov--Lindblad limit, starting from the initial factorized state $\hat\rho_S(t_0)=\hat\rho_A(t_0)\otimes\hat\sigma_E$ with $[\hat\sigma_E,\hat H_E]=0$. The (time-dependent) generator of the dynamical map can then be written in canonical form~\cite{Breuer07} as
\begin{equation}\label{eq:Vt_GKLS_canon}
    \mathcal L_t[\,\cdot\,] = -i[\hat H_A(t) + \hat H_{LS}(t),\,\cdot\,] + \sum_{kk'}\sum_{\omega\in\Omega(t)} \gamma_{kk'}(\omega)\left(\hat L_{k\omega}(t)\,\cdot\, \hat L_{k'\omega}^\dagger(t) - \tfrac{1}{2}\{ \hat L_{k'\omega}^\dagger(t) \hat L_{k\omega}(t),\,\cdot\,\}\right),
\end{equation}
where we defined the Lamb-shift Hamiltonian as
\begin{equation}\label{eq:Vt_HLS}
    \hat H_{LS}(t) = \sum_{kk'}\sum_{\omega\in\Omega(t)} S_{kk'}(\omega) \hat L_{k'\omega}^\dagger(t) \hat L_{k\omega}(t),
\end{equation}
and the instantaneous Bohr spectrum $\Omega(t)$ corresponding to the energy gaps in $\hat H_A(t)$: $\Omega(t):= \{\varepsilon'-\varepsilon\,|\,\varepsilon',\varepsilon \in \mathcal E(\hat H_A(t))\}$ with $\mathcal E(\hat H_A(t))$ the spectrum of $\hat H_A(t)$.

We now proceed to show how the rates $\gamma_{kk'}$, the shifts $S_{kk'}$, and jump operators $\hat L_{k\omega}$ are fixed by the microscopic derivation.

Without loss of generality, we expand the interaction on a time-independent basis, allowing us to carry all the time dependence into scalar couplings
\begin{equation}\label{eq:Vt_V}
    \hat V(t)=\sum_k\lambda_k(t)\,\hat A_k\otimes \hat E_k.
\end{equation}
We further make the standard assumption
\begin{equation}\label{eq:zeroMeanE}
    \qquad \Tr_E[\hat\sigma_E \hat E_k]=0,
\end{equation}
since a nonzero bath mean can always be absorbed into $\hat H_A(t)$. 

Both the rates and the shifts are defined from the bath correlation functions. Specifically, the interaction-picture bath operators are $\tilde E_k(t)= e^{i\hat H_Et}\hat E_k e^{-i\hat H_Et}$, and their correlation functions are $C_{kk'}(t-s)=\Tr_E[\hat\sigma_E\,\tilde E_k(t)\tilde E_{k'}(s)]$. Their half-Fourier transform are
\begin{equation}\label{eq:Vt_Gamma}
    \Gamma_{kk'}(\omega)=\int_0^\infty\!\!dt\,e^{i\omega t}\,C_{kk'}(t),
\end{equation}
and rates and shifts are defined as
\begin{align}
    \gamma_{kk'}(\omega)
    &=
    \Gamma_{kk'}(\omega)
    +
    \Gamma_{k'k}^{*}(\omega),
    \\
    S_{kk'}(\omega)
    &=
    \frac{1}{2i}
    \left[
        \Gamma_{kk'}(\omega)
        -
        \Gamma_{k'k}^{*}(\omega)
    \right],
\end{align}
where $\gamma_{kk'}(\omega)$ and $S_{kk'}(\omega)$ are Hermitian w.r.t. the indices $k$ and $k'$, and $\gamma_{kk'}(\omega)$ is positive semi-definite.

Finally, the jump operators are defined with the resolution of the system operators $\hat A_k$ in the instantaneous eigenspaces of $\hat H_A(t)$, 
\begin{equation}\label{eq:Vt_L}
    \hat L_{k\omega}(t)=\lambda_k(t)\hat A_k(\omega)~,\qquad \hat A_k(\omega) := \!\!\!\!\!\! \sum_{\substack{\varepsilon,\varepsilon'\in \mathcal E(\hat H_A(t))\\~\varepsilon'-\varepsilon = \omega}} \!\!\!\!\!\! \hat\Pi_\varepsilon \hat A_k\hat\Pi_{\varepsilon'}~,
\end{equation}
where $\hat\Pi_\varepsilon$ is the projector onto the eigenspace of $\varepsilon$, so that $\hat A_k=\sum_\omega \hat A_k(\omega)$ for all $t$ and $[\hat H_A(t),\hat A_k(\omega)]=-\omega \,\hat A_k(\omega)$. It is important to note that the values of $\omega$ and the operators $\hat A_k(\omega)$ are implicitly time-dependent when $\hat H_A(t)$ is driven. However, their contributions to work are all of higher order in bath correlation times, which means they can be neglected in this limit. Therefore, for our purposes, $\dot L_{k\omega}(t) = \partial_t \hat L_{k\omega}(t) = \dot \lambda_k(t) \hat A_k(\omega)$.

In practice the canonical generator~\eqref{eq:Vt_GKLS_canon} is used in its diagonalized form. Since the bath alone supplies $\gamma_{kk'}(\omega)$, this matrix carries no time dependence; diagonalizing it at each Bohr frequency,
\begin{equation}\label{eq:Vt_diag}
    \gamma_{kk'}(\omega)=\sum_\alpha g_\alpha(\omega)\,v^\alpha_k(\omega)\,v^{\alpha*}_{k'}(\omega)~,\qquad g_\alpha(\omega)\ge0,
\end{equation}
and defining $\hat J_{\alpha\omega}(t)=\sqrt{g_\alpha(\omega)}\sum_k v^\alpha_k(\omega)\,\hat L_{k\omega}(t)$, the dissipator becomes a single sum $\sum_{\alpha\omega}(\hat J_{\alpha\omega}\,\cdot\,\hat J_{\alpha\omega}^\dagger-\tfrac12\{\cdots\})$. The generator reads
\begin{equation}\label{eq:Vt_GKLS_diag}
    \mathcal L_t[\,\cdot\,]=-i[\hat H_A(t)+\hat H_{LS}(t),\,\cdot\,]+\sum_{\alpha\omega}\Big(\hat J_{\alpha\omega}(t)\,\cdot\,\hat J_{\alpha\omega}^\dagger(t)-\tfrac12\{\hat J_{\alpha\omega}^\dagger(t)\hat J_{\alpha\omega}(t),\,\cdot\,\}\Big).
\end{equation}
In principle this generator has a gauge symmetry: one could make a transformation on the jump operators that would change the Lamb-shift Hamiltonian (here it remains unchanged by the diagonalization), thus changing the Hamiltonian-dissipative split without changing the dynamics. However, once the microscopic description is fixed, this split is not a matter of convention: each gauge corresponds to a different physical setup where the system and environment interact differently, but the reduced dynamics are indistinguishable. Here, the gauge is fixed by the microscopic description of the interaction $\hat V(t) = \sum_k\lambda_k(t) \, \hat A_k\otimes \hat E_k$ which uniquely defines the jump operators $\hat L_{k\omega}$ (for a given Bohr-frequency) and therefore the corresponding diagonalization. 

\subsection{First moment for time-dependent interactions in the GKLS limit}\label{app:OQS_first_moment_driven}
For a time-dependent interaction, the work operator of Eq.~\eqref{eq:app_W_int} contains an additional contribution that in the GKLS regime reads
\begin{equation}
    \hat W_V(t)=\int_{t_0}^tds\,\hat U^\dagger(s,t_0)\partial_s \hat V(s)\hat U(s,t_0) = \sum_k \int_{t_0}^t ds \, \dot \lambda_k(s)\,\hat U^\dagger(s,t_0) \,(\hat A_k \otimes \hat E_k)\,U^\dagger(s,t_0).
\end{equation}

Its reduced first-moment operator is
\begin{equation}\label{eq:Vt_WV}
    \hat W_{V,A}^{(1)}(t) = \sum_k\int_{t_0}^t\!\!ds\,\dot\lambda_k(s)\,\Tr_E\!\big[(\mathbbm 1_A\otimes\hat\sigma_E)\,\hat U^\dagger(s,t_0)(\hat A_k\otimes \hat E_k)\hat U(s,t_0)\big].
    \end{equation}
such that $\Tr[\hat\rho_S(t)\hat W_V(t)] = \Tr[\hat\rho_A(t)\hat W_{V,A}^{(1)}(t)]$.

The partial trace in the integral has a very similar structure to the adjoint of the reduced dynamical map in Eq.~\eqref{eq:app_Phi_dag}, the only difference being the bath operator $\hat E_k$. We introduce the bath-tagged map
\begin{align}
    \Phi_{k,t}[\hat \rho_A] &= \Tr_E\!\big[\hat U(t,t_0)(\hat \rho_A\otimes\hat\sigma_E)\,\hat U^\dagger(t,t_0)(\mathbbm 1_A\otimes \hat E_k)\big], \label{eq:mod_evolution}\\
    \Phi_{k,t}^\dagger[\hat O_A] &= \Tr_E\!\big[(\mathbbm 1_A\otimes\hat\sigma_E)\,\hat U^\dagger(t,t_0)(\hat O_A\otimes \hat E_k)\hat U(t,t_0)\big].\label{eq:mod_evolution_adjoint}
\end{align}
Eq.~\eqref{eq:Vt_WV} then reads
\begin{equation}
    \hat W_{V,A}^{(1)}(t)
    =
    \sum_k\int_{t_0}^t ds\,
    \dot\lambda_k(s)\Phi_{k,s}^\dagger[\hat A_k].
\end{equation}

We now study the bath-tagged map $\Phi_{k,s}$ in the GKLS limit, and we will show that in the limit $\Phi^\dagger_{k,t} = \Phi^\dagger_{t}\circ \mathcal K^\dagger_{k,t}$, with $\mathcal K_{k,t}$ defined as 
\begin{align}
    \mathcal K_{k,t}[\hat \rho_A] &= -i\sum_{k'\omega} \lambda_{k'}(t)\big(\Gamma_{kk'}(\omega)\hat A_{k'}(\omega)\hat \rho_A - \Gamma^*_{kk'}(\omega)\hat \rho_A \hat A_{k'}^\dagger(\omega) \big),\\
    \mathcal K_{k,t}^\dagger[\hat O_A] &= i\sum_{k'\omega} \lambda_{k'}(t)\big(\Gamma_{kk'}^*(\omega) \hat A_{k'}^\dagger(\omega)\hat O_A - \Gamma_{kk'}(\omega)\hat O_A \hat A_{k'}(\omega) \big).
\end{align}

To prove this relation, we work in the interaction picture with respect to the free Hamiltonian
\begin{equation}
    \hat H_0(t)=\hat H_A(t)\otimes\mathbbm 1_E+\mathbbm 1_A\otimes\hat H_E.
\end{equation}
Its propagator factorizes as
\begin{equation}
    \hat U_0(t,t_0)=\hat U_A(t,t_0)\otimes\hat U_E(t,t_0),
\end{equation}
where
\begin{equation}
    \hat U_A(t,t_0)=\mathcal T e^{-i\int_{t_0}^t ds\,\hat H_A(s)},
    \qquad
    \hat U_E(t,t_0)=e^{-i\hat H_E(t-t_0)}.
\end{equation}
The full propagator is written as
\begin{equation}
    \hat U(t,t_0)=\hat U_0(t,t_0)\hat U_I(t,t_0),
    \qquad
    \hat U_I(t,t_0)=\mathcal T e^{-i\int_{t_0}^t ds\,\tilde V(s)}.
\end{equation}

For any joint operator $\hat O(t)$ and state $\hat\rho_S(t)$, their interaction-picture forms are
\begin{equation}
    \tilde O(t)=\hat U_0^\dagger(t,t_0)\hat O(t)\hat U_0(t,t_0),
    \qquad
    \tilde\rho_S(t)=\hat U_0^\dagger(t,t_0)\hat\rho_S(t)\hat U_0(t,t_0).
\end{equation}
In particular,
\begin{equation}
    \tilde O_A(t)=\hat U_A^\dagger(t,t_0)\hat O_A(t)\hat U_A(t,t_0),
    \qquad
    \tilde E_k(t)=\hat U_E^\dagger(t,t_0)\hat E_k\hat U_E(t,t_0).
\end{equation}
The interaction-picture state then satisfies
\begin{equation}
    \frac{d}{dt}\tilde\rho_S(t)=-i[\tilde V(t),\tilde\rho_S(t)],
    \qquad
    \tilde\rho_S(t)=\hat U_I(t,t_0)\hat\rho_S(t_0)\hat U_I^\dagger(t,t_0).
\end{equation}

Since $\hat U_A(t_0,t_0)=\mathbbm 1_A$ and  $[\hat H_E,\hat\sigma_E]=0$, in interaction picture the maps $\Phi_t$ and $\Phi_{k,t}$ become 
\begin{align}
    \tilde\Phi_t[\hat\rho_A]
    &=
    \Tr_E\!\big[\hat U_I(t,t_0)(\hat\rho_A\otimes\hat\sigma_E)\hat U_I^\dagger(t,t_0)\big]=\hat U_A^\dagger(t,t_0)\Phi_t[\hat\rho_A]\hat U_A(t,t_0),\\
    \tilde\Phi_{k,t}[\hat\rho_A]
    &=\Tr_E\!\big[\hat U_I(t,t_0)(\hat\rho_A\otimes\hat\sigma_E)\hat U_I^\dagger(t,t_0)(\mathbbm 1_A\otimes\tilde E_k(t))\big]=\hat U_A^\dagger(t,t_0)\Phi_{k,t}[\hat\rho_A]\hat U_A(t,t_0).
\end{align}

We now perform the Born approximation. Because the interaction generates system--environment correlations, in general $\tilde\rho_S(t)\neq\tilde\rho_A(t)\otimes\hat\sigma_E$. We define the projective superoperators
\begin{equation}
    \mathbb P[\,\cdot\,]=\Tr_E[\,\cdot\,]\otimes\hat\sigma_E,
    \qquad
    \mathbb Q=\mathcal I-\mathbb P,
\end{equation}
so that
\begin{equation}
    \mathbb P[\tilde\rho_S(t)]=\tilde\rho_A(t)\otimes\hat\sigma_E,
    \qquad
    \tilde\rho_S(t)=\mathbb P[\tilde\rho_S(t)]+\mathbb Q[\tilde\rho_S(t)].
\end{equation}
Applying $\mathbb Q$ to the interaction-picture equation of motion gives the exact relation
\begin{align}
    \frac{d}{dt}\mathbb Q[\tilde\rho_S(t)]
    ={}&-i\mathbb Q[\tilde V(t),\tilde\rho_A(t)\otimes\hat\sigma_E]\nonumber\\
    &-i\mathbb Q[\tilde V(t),\mathbb Q[\tilde\rho_S(t)]].
\end{align}
The Born approximation consists in neglecting the second term. Moreover, the centering condition~\eqref{eq:zeroMeanE} implies
\begin{equation}
    \mathbb P[\tilde V(t),\tilde\rho_A(t)\otimes\hat\sigma_E]=0,
\end{equation}
and hence $\mathbb Q$ acts as the identity on the remaining commutator. Since the initial state is factorized, $\mathbb Q[\tilde\rho_S(t_0)]=0$, and therefore
\begin{equation}\label{eq:evolution_Q}
    \mathbb Q[\tilde\rho_S(t)]\simeq-i\int_{t_0}^t ds\,[\tilde V(s),\tilde\rho_A(s)\otimes\hat\sigma_E].
\end{equation}

We can now evaluate the bath-tagged map directly. Its contribution from $\mathbb P[\tilde\rho_S(t)]$ vanishes because the bath is centered:
\begin{equation}\label{eq:E_cancellation}
    \Tr_E\!\big[\mathbb P[\tilde\rho_S(t)](\mathbbm 1_A\otimes\tilde E_k(t))\big]
    =\tilde\rho_A(t)\Tr_E[\hat\sigma_E\tilde E_k(t)]=0.
\end{equation}
Thus, for the initial state $\hat\rho_A(t_0)$,
\begin{equation}
    \tilde\Phi_{k,t}[\hat\rho_A(t_0)]
    =\Tr_E\!\big[\mathbb Q[\tilde\rho_S(t)](\mathbbm 1_A\otimes\tilde E_k(t))\big].
\end{equation}
Substituting Eq.~\eqref{eq:evolution_Q} gives
\begin{align}
    \tilde\Phi_{k,t}[\hat\rho_A(t_0)] &= -i\int_{t_0}^t ds\,\Tr_E\!\big[[\tilde V(s),\tilde\rho_A(s)\otimes\hat\sigma_E](\mathbbm 1_A\otimes\tilde E_k(t))\big]\\
    &= -i\sum_{k'}\int_{t_0}^t ds\,\lambda_{k'}(s)\Tr_E\!\big[\big[\tilde A_{k'}(s)\otimes\tilde E_{k'}(s),\tilde\rho_A(s)\otimes\hat\sigma_E\big](\mathbbm 1_A\otimes\tilde E_k(t))\big]\\
    &= -i\sum_{k'}\int_{t_0}^t ds\,\lambda_{k'}(s)\big(C_{kk'}(t-s)\tilde A_{k'}(s)\tilde\rho_A(s)-C_{kk'}^*(t-s)\tilde\rho_A(s)\tilde A_{k'}(s)\big)\\
    &\simeq -i\sum_{k'\omega}\lambda_{k'}(t)\int_0^\infty dr\,e^{i\omega r}\big(C_{kk'}(r)\tilde A_{k'}(\omega)\tilde\rho_A(t)-C_{kk'}^*(r)\tilde\rho_A(t)\tilde A_{k'}(\omega)\big)\\
    &= -i\sum_{k'\omega}\lambda_{k'}(t)\big(\Gamma_{kk'}(\omega)\tilde A_{k'}(\omega)\tilde\rho_A(t)-\Gamma_{kk'}^*(\omega)\tilde\rho_A(t)\tilde A_{k'}^\dagger(\omega)\big)\\
    &\equiv \tilde{\mathcal K}_{k,t}[\tilde\rho_A(t)].
\end{align}
where we have employed the Markov approximation. Since $\tilde\rho_A(t)=\tilde\Phi_t[\hat\rho_A(t_0)]$ and the initial state is arbitrary, returning to the Schroedinger picture gives
\begin{align}
    \Phi_{k,t}&\simeq\mathcal K_{k,t}\circ\Phi_t,\\
    \Phi_{k,t}^\dagger&\simeq\Phi_t^\dagger\circ\mathcal K_{k,t}^\dagger.\label{eq:mod_QRT}
\end{align}

Therefore, the reduced operator for the expected work value in the GKLS limit becomes
\begin{equation}\label{eq:work_first_moment_OQS}
    \hat W_A^{(1)}(t) = \int_{t_0}^t ds \,\Phi_s^\dagger \left[\partial_s \hat H_A(s) + \mathcal{K}_{k,s}^\dagger[\dot \lambda_k(s) \hat A_k]\right],
\end{equation}
and correspondingly the work current is
\begin{equation}\label{eq:power_GKLS}
\hat w^{(1)}(t) = \partial_t \hat H_A(t) +\sum_k\mathcal K_{k,t}^\dagger[\dot\lambda_k(t) \hat A_k].
\end{equation}
We now expand $\mathcal{K}_{k,s}^\dagger$ to obtain a more explicit form of the work operator:
 \begin{align}
	\hat W_{V,A}^{(1)}(t) &= \sum_k\int_{t_0}^t\!ds~\dot\lambda_{k}(s)\,\Phi^\dagger_{k,s}[\hat A_k],\\
	&= \sum_k\int_{t_0}^t\!ds~ \dot\lambda_{k}(s)\,\Phi^\dagger_{s}[\mathcal K_{k,s}^\dagger[\hat A_k]],\\
	&\approx\int_{t_0}^t\!ds~ \Phi^\dagger_{s}[i\sum_{kk'\omega} \dot\lambda_{k}\lambda_{k'}\big(\Gamma_{kk'}^*(\omega) \hat A_{k'}^\dagger(\omega)\hat A_k(\omega) - \Gamma_{kk'}(\omega)\hat A_k^\dagger(\omega) \hat A_{k'}(\omega) \big)],\\
	&=\int_{t_0}^t\!ds~ \Phi^\dagger_{s}[i\sum_{kk'\omega} \big(\dot\lambda_{k'}\lambda_{k}\Gamma_{k'k}^*(\omega) -\dot\lambda_{k}\lambda_{k'} \Gamma_{kk'}(\omega)\big) \hat A_{k}^\dagger(\omega)\hat A_{k'}(\omega)],\\
	&=\int_{t_0}^t\!ds~ \Phi^\dagger_{s}[\sum_{kk'\omega} \big(\tfrac{i}{2}\gamma_{kk'}(\omega)(\dot\lambda_{k'}\lambda_{k} - \dot\lambda_{k}\lambda_{k'}) + S_{kk'}(\omega)(\dot\lambda_{k'}\lambda_{k} + \dot\lambda_{k}\lambda_{k'})\big) \hat A_{k}^\dagger(\omega)\hat A_{k'}(\omega)],\\
	&=\int_{t_0}^t\!ds~ \Phi^\dagger_{s}[\partial_s \hat H^{(T)}_{LS}(s) + \tfrac{i}{2}\!\sum_{kk'\omega}\gamma_{kk'}(\omega)(\hat L_{k\omega}^\dagger(s) (\partial_s \hat L_{k'\omega}(s)) - (\partial_s \hat L_{k\omega}^\dagger(s)) \hat L_{k'\omega}(s))],
\end{align}
where the third line follows from the Secular approximation, with the time arguments of the $\lambda_k$s suppressed for readability, while the last line is expressed in terms of the diagonalized jump operators defined in Eq.~\eqref{eq:Vt_GKLS_diag}, and 
\begin{equation}
\hat H_{LS}^{(T)}(t) = \sum_{kk'\omega} S_{k'k}(\omega)\hat L_{k'\omega}^\dagger(t)\hat L_{k\omega}(t)  
\end{equation}
is the $S$-transposed Lamb-shift Hamiltonian.

Therefore, the work operator in Eq.~\eqref{eq:work_first_moment_OQS} becomes
\begin{equation}
    \hat W_A^{(1)}(t) = \int_{t_0}^t\!ds~\Phi_s^\dagger\left[\partial_s \hat H_A(s) + \partial_s \hat H_{LS}^{(T)}(s) + \frac{i}{2}\!\sum_{kk'\omega}\gamma_{kk'}(\omega)\!\left(\hat L_{k\omega}^\dagger(s) (\partial_s \hat L_{k'\omega}(s)) - (\partial_s \hat L_{k\omega}^\dagger(s)) \hat L_{k'\omega}(s)\right)\right],
\end{equation}
and consequently the work current is
\begin{equation}\label{eq:app_work_GKLS}
    \hat{w}^{(1)}(t) = \partial_t \hat H_A(t) + \partial_t \hat H_{LS}^{(T)}(t) + \frac{i}{2}\!\sum_{kk'\omega}\gamma_{kk'}(\omega)\!\left(\hat L_{k\omega}^\dagger(s) (\partial_s \hat L_{k'\omega}(s)) - (\partial_s \hat L_{k\omega}^\dagger(s)) \hat L_{k'\omega}(s)\right).
\end{equation}

We can apply the same reasoning to calculate the other currents
\begin{align}\nonumber
    \hat e(t) &:= \partial_t\hat H_A(t)\otimes\mathbbm 1_{E} + i[\hat H_S(t),\hat H_A(t)\otimes\mathbbm 1_{E}],\\
    \nonumber
    \hat q(t) &:=  i[\hat H_S(t),\mathbbm 1_{A}\otimes\hat H_E],\\
    \nonumber
    \hat v(t) &:= \partial_t\hat V(t) + i[\hat H_S(t), \hat V(t)].
\end{align}
obtaining
\begin{equation}\label{eq:reduced_currents_GKLS}
\begin{aligned}
    \hat e_A^{(1)}(t) &= \partial_t \hat H_A(t) + \mathcal L_t^\dagger[\hat H_A(t)], \\
    \hat q_A^{(1)}(t)\! &=\! -\mathcal L_t^\dagger[\hat H_A(t)] -\partial_t \hat H_{LS}^{(T)}(t) + \frac{i}{2}\!\sum_{{kk'\omega}}\!\!\gamma_{kk'}(\omega)\left(\hat L_{k\omega}^\dagger(s) (\partial_s \hat L_{k'\omega}(s)) - (\partial_s \hat L_{k\omega}^\dagger(s)) \hat L_{k'\omega}(s)\right),\\
    \hat v_A^{(1)}(t) &= 2\partial_t\hat H_{LS}^{(T)}(t).
\end{aligned}
\end{equation}

Naturally, the conservation of energy current holds
\begin{equation}
    \hat w^{(1)}_A(t)= \hat e^{(1)}_A(t) + \hat q^{(1)}_A(t) + \hat v^{(1)}_A(t).
\end{equation}
The decomposition in Eq.~\eqref{eq:reduced_currents_GKLS} shows that the first term of the work current in Eq.~\eqref{eq:app_work_GKLS} contributes to the internal energy, the Lamb-shit goes into the interaction and heat currents and the $\gamma$ term into heat only. A first-law at the level of current can be found only in the undriven case $\partial_t \hat V(t)=0$, when $\hat v_A^{(1)}(t)=0$.

\subsection{Variance for time-dependent interactions in the GKLS limit}\label{app:OQS_second_moment_driven}
We are now interested in finding the work second moment. By
Eq.~\eqref{eq:work_reduction_moment}, its expression is 
\begin{align}
    \hat W_A^{(2)}(t,t_0)
    &=
    \Tr_E\!\left[
        (\mathbbm 1_A\otimes\hat\sigma_E)
        \hat W(t,t_0)^2
    \right]
    \nonumber\\
    &=
    \int_{t_0}^{t}\!dt_1dt_2\,
    \Big(
        \hat w_{AA}(t_1,t_2)
        +
        \hat w_{AV}(t_1,t_2)
        +
        \hat w_{VA}(t_1,t_2)
        +
        \hat w_{VV}(t_1,t_2)
    \Big),
\end{align}
where
\begin{equation}
    \hat w_{XY}(t_1,t_2)
    =
    \Tr_E\!\left[
        (\mathbbm 1_A\otimes\hat\sigma_E)
        \hat U^\dagger(t_1,t_0)
        \hat G_X(t_1)
        \hat U(t_1,t_0)
        \hat U^\dagger(t_2,t_0)
        \hat G_Y(t_2)
        \hat U(t_2,t_0)
    \right],
\end{equation}
with
\begin{equation}
    \hat G_A(t)
    =
    \partial_t\hat H_A(t)\otimes\mathbbm 1_E,
    \qquad
    \hat G_V(t)
    =
    \sum_k
    \dot\lambda_k(t)\hat A_k\otimes\hat E_k.
\end{equation}
In order to calculate this trace of ordered product of operators $\hat U_i \hat G_i \hat U_i^\dagger$, we need a generalized version of the quantum regression theorem. We state and prove this generalization as Theorem~\ref{thm:generalized_QRT} in Appendix~\ref{app:generalized_QRT}.

For compactness, in the following we use $\hat h_i:=\partial_{\tau_i}\hat H_A(\tau_i)$, and $\hat a_{k,i}:=\dot\lambda_k(\tau_i)\hat A_k$. We split the integration square into its two chronologically ordered sectors,
\begin{equation}
    \int_{t_0}^{t}\!dt_1dt_2\,F(t_1,t_2)
    =
    \int_{t_0}^t\!d\tau_2\!\int_{t_0}^{\tau_2}\!d\tau_1
    \left[
        F(\tau_1,\tau_2)
        +
        F(\tau_2,\tau_1)
    \right].
\end{equation}
The two terms correspond to the two possible product orders of the operators evaluated at the chronologically ordered times $\tau_1<\tau_2$.

We first consider the $AA$ contribution. In the chronologically ordered product sector, the ordinary QRT (as a particular case of Theorem~\ref{thm:generalized_QRT}) gives
\begin{equation}
\label{eq:w_AA_ordered}
    \hat w_{AA}(\tau_1,\tau_2)
    \simeq
    \Phi_{\tau_1,t_0}^\dagger
    \left[
        \hat h_1
        \Phi_{\tau_2,\tau_1}^\dagger[\hat h_2]
    \right].
\end{equation}
Consider now the opposite product order,
$\hat w_{AA}(\tau_2,\tau_1)$. The corresponding recursion
intervals overlap over $[\tau_1,\tau_2]$.
The modified QRT gives an ordinary regression term and an
out-of-time-order boundary correction.  Here the only overlapping
pair is $(q,r)=(1,2)$, with
$I_1\cap I_2=[\tau_1,\tau_2]$ and
$\epsilon_1\epsilon_2=-1$.  Therefore,
\begin{align}
\label{eq:w_AA_reversed_final}
    \hat w_{AA}(\tau_2,\tau_1)
    &\simeq
    \Phi_{\tau_2,t_0}^\dagger
    \left[
        \hat h_2
        \Phi_{\tau_1,\tau_2}^\dagger[\hat h_1]
    \right] -
    \left.
    \Phi_{s,t_0}^\dagger
    \left[
        \Phi_{\tau_2,s}^\dagger[\hat h_2]
        \Phi_{\tau_1,s}^\dagger[\hat h_1]
    \right]
    \right|_{s=\tau_1}^{s=\tau_2}
    \nonumber\\
    &=
    \Phi_{\tau_1,t_0}^\dagger
    \left[
        \Phi_{\tau_2,\tau_1}^\dagger[\hat h_2]
        \hat h_1
    \right].
\end{align}
The upper-boundary contribution exactly cancels the ordinary regression term, while the lower-boundary contribution produces the oppositely oriented regression chain. Thus, the finite-overlap correction converts the formally backward ordinary regression chain into a forward-propagating expression with right multiplication by the earlier power operator.

Combining Eqs.~\eqref{eq:w_AA_ordered} and
\eqref{eq:w_AA_reversed_final}, we finally obtain
\begin{align}
\label{eq:AA_second_moment}
    \hat W_{AA}^{(2)}(t,t_0)
    \simeq{}&
    \int_{t_0}^{t}\!d\tau_2
    \int_{t_0}^{\tau_2}\!d\tau_1\,
    \Phi_{\tau_1,t_0}^\dagger
    \left[
        \hat h_1
        \Phi_{\tau_2,\tau_1}^\dagger[\hat h_2]
        +
        \Phi_{\tau_2,\tau_1}^\dagger[\hat h_2]
        \hat h_1
    \right]
    \nonumber\\
    ={}&
    \int_{t_0}^{t}\!d\tau_2
    \int_{t_0}^{\tau_2}\!d\tau_1\,
    \Phi_{\tau_1,t_0}^\dagger
    \left[
        \left\{
            \hat h_1,
            \Phi_{\tau_2,\tau_1}^\dagger[\hat h_2]
        \right\}
    \right].
\end{align}

It is worth noting that the second product-order sector could equivalently have been obtained without explicitly evaluating the out-of-order (from now on, OO) term. Indeed, for $\tau_1<\tau_2$, one may start the QRT recursion from the rightmost operator in the product, which is the operator at the earliest physical time $\tau_1$, rather than from the leftmost one. The resulting propagation then remains forward in time throughout and directly gives $\hat w_{AA}(\tau_2,\tau_1) \simeq \Phi_{\tau_1,t_0}^\dagger \left[\Phi_{\tau_2,\tau_1}^\dagger[\hat h_2] \hat h_1 \right]$. The explicit calculation above shows that the OO correction is precisely what makes the left-started and right-started regression procedures equivalent.

We next consider the two mixed contributions.  With $\hat w_V(t) = \sum_k \mathcal K_{k,t}^\dagger [\dot\lambda_k(t)\hat A_k]$, for the chronologically ordered product sectors, the generalized QRT gives
\begin{align}
    \hat w_{AV}(\tau_1,\tau_2)
    &\simeq
    \Phi_{\tau_1,t_0}^\dagger
    \left[
        \hat h_1
        \Phi_{\tau_2,\tau_1}^\dagger[
            \hat w_V(\tau_2)
        ]
    \right],
    \label{eq:w_AV_ordered}\\
    \hat w_{VA}(\tau_1,\tau_2)
    &\simeq
    \Phi_{\tau_1,t_0}^\dagger
    \Bigg[
        \sum_k
        \mathcal K_{k,\tau_1}^\dagger
        \left[
            \hat a_{k,1}
            \Phi_{\tau_2,\tau_1}^\dagger[\hat h_2]
        \right]
        +
        \sum_k
        \hat a_{k,1}
        \mathcal R_{k,\tau_1}^\dagger
        \left[
            \Phi_{\tau_2,\tau_1}^\dagger[\hat h_2]
        \right]
    \Bigg],
    \label{eq:w_VA_ordered}
\end{align}
where we used $\mathcal R_{k,t}^\dagger[\mathbbm 1_A]=0$.

Consider now the reversed product order
$\hat w_{VA}(\tau_2,\tau_1)$. The product-order sequence is
$(k_1,k_2)=(2,1)$, and the unique environment insertion occurs at
position $p=1$, at the physical time $t_{k_p}=\tau_2$. Since
$n=2$, the sets entering the $m=1$ out-of-time-order correction are
empty: $\mathcal O_{<}^{(1)} = \varnothing$ and $\mathcal O_{>}^{(1)} = \varnothing$.
There is therefore no separate out-of-time-order contribution in
this sector. Starting the recursion from the leftmost operator gives
the formally backward expression
\begin{align}
    \hat w_{VA}(\tau_2,\tau_1)
    \simeq
    \Phi_{\tau_2,t_0}^\dagger
    \Bigg[
        \sum_k
        \mathcal K_{k,\tau_2}^\dagger
        \left[
            \hat a_{k,2}
            \Phi_{\tau_1,\tau_2}^\dagger[\hat h_1]
        \right]
        +
        \sum_k
        \hat a_{k,2}
        \mathcal R_{k,\tau_2}^\dagger
        \left[
            \Phi_{\tau_1,\tau_2}^\dagger[\hat h_1]
        \right]
    \Bigg].
    \label{eq:w_VA_reversed_backward}
\end{align}
The backward recursion interval is adjacent to the environment
insertion and is therefore already accounted for by the ordinary
$\mathcal R^\dagger$ branch.

For the reversed word $\hat w_{AV}(\tau_2,\tau_1)$, the
product-order sequence is again $(k_1,k_2)=(2,1)$, but the unique
environment insertion now occurs at position $p=2$, at the physical
time $t_{k_p}=\tau_1$. The set
$\mathcal O_{>}^{(2)}$ is empty, whereas the first recursion interval
contains the environment-insertion time: $\mathcal O_{<}^{(2)} = \{1\}$ and $\mathcal O_{>}^{(2)} = \varnothing$.
Indeed, the corresponding physical recursion interval is
$I_1=[t_0,\tau_2]$, which contains $\tau_1$. The ordinary backward
regression chain must therefore be supplemented by the
$\mathcal K^{\mathrm{OO}\dagger}$ contribution of the $m=1$
generalized regression theorem:
\begin{align}
    \hat w_{AV}(\tau_2,\tau_1)
    &\simeq
    \Phi_{\tau_2,t_0}^\dagger
    \left[
        \hat h_2
        \Phi_{\tau_1,\tau_2}^\dagger[
            \hat w_V(\tau_1)
        ]
    \right]
    +
    \delta\hat w_{AV}^{\rm OO}(\tau_2,\tau_1),
    \label{eq:w_AV_reversed_before_OO}
\end{align}
with 
\begin{align}
    \delta\hat w_{AV}^{\rm OO}(\tau_2,\tau_1)
    =
    -i\sum_{kj\omega}
    \Phi_{\tau_1,t_0}^\dagger
    \Big[
        \gamma_{kj}(\omega)
        \Phi_{\tau_2,\tau_1}^\dagger[\hat h_2]
        \hat a_{k,1}
        \hat L_{j\omega}(\tau_1)
        -
        \gamma_{kj}^{*}(\omega)
        \hat L_{j\omega}^\dagger(\tau_1)
        \Phi_{\tau_2,\tau_1}^\dagger[\hat h_2]
        \hat a_{k,1}
    \Big].
    \label{eq:w_AV_OO_expanded}
\end{align}
However, we can avoid the out-of-time-order contribution by starting the recursion from the rightmost operator in the product. This keeps all reduced propagations forward in physical time.

More generally, the two possible positions of an interaction-power
insertion relative to an arbitrary future system operator $\hat O$
lead to two operator identities. When the interaction insertion lies
to the left of $\hat O$, the generalized regression rule gives
\begin{equation}
    \mathcal V_{L,t}^\dagger[\hat O]
    :=
    \sum_k
    \mathcal K_{k,t}^\dagger
    \left[
        \hat a_k(t)\hat O
    \right]
    +
    \sum_k
    \hat a_k(t)
    \mathcal R_{k,t}^\dagger[\hat O].
    \label{eq:V_left_insertion_identity}
\end{equation}
When instead the interaction insertion lies to the right of
$\hat O$, starting the recursion from the rightmost operator gives
\begin{equation}
    \mathcal V_{R,t}^\dagger[\hat O]
    :=
    \sum_k
    \mathcal K_{k,t}^\dagger
    \left[
        \hat O\hat a_k(t)
    \right]
    -
    \sum_k
    \hat a_k(t)
    \mathcal R_{k,t}^\dagger[\hat O].
    \label{eq:V_right_insertion_identity}
\end{equation}
The opposite signs of the $\mathcal R^\dagger$ contributions keep
track of whether the interaction insertion lies to the left or to
the right of the future system word. Both maps reduce to the
interaction-work operator when acting on the identity:
\begin{equation}
    \mathcal V_{L,t}^\dagger[\mathbbm 1_A]
    =
    \mathcal V_{R,t}^\dagger[\mathbbm 1_A]
    =
    \sum_k
    \mathcal K_{k,t}^\dagger[\hat a_k(t)]
    =
    \hat w_V(t),
    \label{eq:V_insertion_identity_action}
\end{equation}
where we used
$\mathcal R_{k,t}^\dagger[\mathbbm 1_A]=0$. Moreover, summing the
two insertion identities eliminates the explicit
$\mathcal R^\dagger$ terms:
\begin{equation}
    \mathcal V_{L,t}^\dagger[\hat O]
    +
    \mathcal V_{R,t}^\dagger[\hat O]
    =
    \sum_k
    \mathcal K_{k,t}^\dagger
    \left[
        \left\{
            \hat a_k(t),\hat O
        \right\}
    \right].
    \label{eq:V_left_right_sum_identity}
\end{equation}

These identities can now be applied directly to the two reversed
mixed sectors. For $\hat w_{VA}(\tau_2,\tau_1)$, the rightmost
operator is the system insertion $\hat h_1$. Starting the recursion
from this operator gives
\begin{equation}
    \hat w_{VA}(\tau_2,\tau_1)
    \simeq
    \Phi_{\tau_1,t_0}^\dagger
    \left[
        \Phi_{\tau_2,\tau_1}^\dagger[
            \hat w_V(\tau_2)
        ]
        \hat h_1
    \right].
    \label{eq:w_VA_reversed}
\end{equation}
For $\hat w_{AV}(\tau_2,\tau_1)$, the rightmost operator is instead
the interaction-power insertion at $\tau_1$. The later system
operator is first propagated back to $\tau_1$, after which the
interaction insertion acts in the right slot. Equation
\eqref{eq:V_right_insertion_identity} therefore gives directly
\begin{align}
    \hat w_{AV}(\tau_2,\tau_1)
    &\simeq
    \Phi_{\tau_1,t_0}^\dagger
    \Bigg[
        \sum_k
        \mathcal K_{k,\tau_1}^\dagger
        \left[
            \Phi_{\tau_2,\tau_1}^\dagger[\hat h_2]
            \hat a_{k,1}
        \right]
        -
        \sum_k
        \hat a_{k,1}
        \mathcal R_{k,\tau_1}^\dagger
        \left[
            \Phi_{\tau_2,\tau_1}^\dagger[\hat h_2]
        \right]
    \Bigg].
    \label{eq:w_AV_reversed}
\end{align}
The left- and right-started recursions provide two equivalent
representations of the same reversed product-order sectors. In
particular,
\begin{align}
    \Phi_{\tau_2,t_0}^\dagger
    \left[
        \hat h_2
        \Phi_{\tau_1,\tau_2}^\dagger[
            \hat w_V(\tau_1)
        ]
    \right]
    +
    \delta\hat w_{AV}^{\rm OO}(\tau_2,\tau_1)\simeq
    \Phi_{\tau_1,t_0}^\dagger
    \Bigg[
        \sum_k
        \mathcal K_{k,\tau_1}^\dagger
        \left[
            \Phi_{\tau_2,\tau_1}^\dagger[\hat h_2]
            \hat a_{k,1}
        \right]
        -
        \sum_k
        \hat a_{k,1}
        \mathcal R_{k,\tau_1}^\dagger
        \left[
            \Phi_{\tau_2,\tau_1}^\dagger[\hat h_2]
        \right]
    \Bigg],
    \label{eq:w_AV_reversal_identity}
\end{align}
whereas
\begin{align}
    \Phi_{\tau_1,t_0}^\dagger
    \left[
        \Phi_{\tau_2,\tau_1}^\dagger[
            \hat w_V(\tau_2)
        ]
        \hat h_1
    \right] \simeq \Phi_{\tau_2,t_0}^\dagger
    \Bigg[
        \sum_k
        \mathcal K_{k,\tau_2}^\dagger
        \left[
            \hat a_{k,2}
            \Phi_{\tau_1,\tau_2}^\dagger[\hat h_1]
        \right]
        +
        \sum_k
        \hat a_{k,2}
        \mathcal R_{k,\tau_2}^\dagger
        \left[
            \Phi_{\tau_1,\tau_2}^\dagger[\hat h_1]
        \right]
    \Bigg].
    \label{eq:w_VA_reversal_identity}
\end{align}
These identities express the equivalence of the two recursion
directions. Now, the four product-order sectors can now be recombined pairwise. First, Eqs.~\eqref{eq:w_AV_ordered} and \eqref{eq:w_VA_reversed} contain the interaction insertion at the later time $\tau_2$ and give
\begin{align}
    &\hat w_{AV}(\tau_1,\tau_2)
    +
    \hat w_{VA}(\tau_2,\tau_1) \simeq
    \Phi_{\tau_1,t_0}^\dagger
    \left[
        \hat h_1
        \Phi_{\tau_2,\tau_1}^\dagger[
            \hat w_V(\tau_2)
        ]
        +
        \Phi_{\tau_2,\tau_1}^\dagger[
            \hat w_V(\tau_2)
        ]
        \hat h_1
    \right].
    \label{eq:mixed_later_V_pair}
\end{align}
These are the two possible positions of the earlier system operator
$\hat h_1$ relative to the propagated interaction-work operator at
$\tau_2$.

The remaining two sectors,
Eqs.~\eqref{eq:w_VA_ordered} and
\eqref{eq:w_AV_reversed}, contain the interaction insertion at the
earlier time $\tau_1$. In terms of the left- and right-slot maps
defined in Eqs.~\eqref{eq:V_left_insertion_identity} and
\eqref{eq:V_right_insertion_identity}, they read
\begin{align}
    \hat w_{VA}(\tau_1,\tau_2)
    \simeq
    \Phi_{\tau_1,t_0}^\dagger
    \left[
        \mathcal V_{L,\tau_1}^\dagger
        \left[
            \Phi_{\tau_2,\tau_1}^\dagger[\hat h_2]
        \right]
    \right],
    \qquad
    \hat w_{AV}(\tau_2,\tau_1) \simeq
    \Phi_{\tau_1,t_0}^\dagger
    \left[
        \mathcal V_{R,\tau_1}^\dagger
        \left[
            \Phi_{\tau_2,\tau_1}^\dagger[\hat h_2]
        \right]
    \right].
\end{align}
Using the operator identity
\eqref{eq:V_left_right_sum_identity}, their sum becomes
\begin{align}
    \hat w_{VA}(\tau_1,\tau_2)
    +
    \hat w_{AV}(\tau_2,\tau_1) \simeq
    \Phi_{\tau_1,t_0}^\dagger
    \left[
        \sum_k
        \mathcal K_{k,\tau_1}^\dagger
        \left[
            \left\{
                \hat a_{k,1},
                \Phi_{\tau_2,\tau_1}^\dagger[\hat h_2]
            \right\}
        \right]
    \right].
    \label{eq:mixed_earlier_V_pair}
\end{align}
Indeed, the positive $\mathcal R^\dagger$ contribution in
Eq.~\eqref{eq:w_VA_ordered} cancels the negative
$\mathcal R^\dagger$ contribution in
Eq.~\eqref{eq:w_AV_reversed}. Combining Eqs.~\eqref{eq:mixed_later_V_pair} and
\eqref{eq:mixed_earlier_V_pair}, we finally obtain
\begin{multline}
    \hat W_{AV}^{(2)}(t,t_0)
    +
    \hat W_{VA}^{(2)}(t,t_0)
    \simeq 
    \int_{t_0}^{t}\!d\tau_2
    \int_{t_0}^{\tau_2}\!d\tau_1\,
    \Phi_{\tau_1,t_0}^\dagger
    \Bigg[
        \left\{\hat h_1,
        \Phi_{\tau_2,\tau_1}^\dagger[
            \hat w_V(\tau_2)
        ]
        \right\}
        +
        \sum_k
        \mathcal K_{k,\tau_1}^\dagger
        \left[
            \left\{
                \hat a_{k,1},
                \Phi_{\tau_2,\tau_1}^\dagger[\hat h_2]
            \right\}
        \right]
    \Bigg].
    \label{eq:mixed_second_moment}
\end{multline}

Finally, for the $VV$ sector, Theorem~\ref{thm:generalized_QRT} yields
\begin{align}
    \hat W_{VV}^{(2)}
    \simeq{}&
    \sum_{kk'}
    \int_{t_0}^{t}\!dt_1dt_2\,
    C_{kk'}(t_1-t_2)
    \Phi_{t_1,t_0}^\dagger\!
    \left[
        \hat a_{k,1}
        \Phi_{t_2,t_1}^\dagger[
            \hat a_{k',2}
        ]
    \right].
\end{align}
Applying the Markov and secular approximations gives
\begin{equation}
\label{eq:VV_local_noise}
    \hat W_{VV}^{(2)}
    \simeq
    \int_{t_0}^{t}\!ds\,
    \Phi_{s,t_0}^\dagger[\hat N(s)],
\end{equation}
where
\begin{align}
\label{eq:work_noise_operator}
    \hat N(s) =
    \sum_{kk'\omega}
    \gamma_{kk'}(\omega)
    \dot\lambda_k(s)\dot\lambda_{k'}(s)
    \hat A_k^\dagger(\omega)
    \hat A_{k'}(\omega) =
    \sum_{\alpha\omega}
    \partial_s\hat L_{\alpha\omega}^\dagger(s)
    \partial_s\hat L_{\alpha\omega}(s)
    \geq0.
\end{align}

We can now recognize a common structure in
Eqs.~\eqref{eq:AA_second_moment} and
\eqref{eq:mixed_second_moment}. Define
\begin{align}
\label{eq:WR_after_derivation}
    \mathfrak W_{R,t}^\dagger[\hat O]
    &:=
    \partial_t\hat H_A(t)\hat O
    +
    \sum_k
    \mathcal K_{k,t}^\dagger
    [\dot\lambda_k(t)\hat A_k\hat O],
    \\
\label{eq:WL_after_derivation}
    \mathfrak W_{L,t}^\dagger[\hat O]
    &:=
    \hat O\,\partial_t\hat H_A(t)
    +
    \sum_k
    \mathcal K_{k,t}^\dagger
    [\dot\lambda_k(t)\hat O\hat A_k].
\end{align}
They coincide on the identity,
\begin{equation}
    \hat w(t)
    :=
    \mathfrak W_{R,t}^\dagger[\mathbbm 1_A]
    =
    \mathfrak W_{L,t}^\dagger[\mathbbm 1_A]
    =
    \partial_t\hat H_A(t)+\hat w_V(t).
\end{equation}
The complete second moment can therefore be written as
\begin{align}
\label{eq:oqs_second_moment}
    \hat W_A^{(2)}(t,t_0)
    \simeq
    \int_{t_0}^{t}\!d\tau_2
    \int_{t_0}^{\tau_2}\!d\tau_1\,
    \Phi_{\tau_1,t_0}^\dagger
    \left[
        \left(
            \mathfrak W_{R,\tau_1}^\dagger
            +
            \mathfrak W_{L,\tau_1}^\dagger
        \right)
        \left[
            \Phi_{\tau_2,\tau_1}^\dagger[
                \hat w(\tau_2)
            ]
        \right]
    \right] +
    \int_{t_0}^{t}\!ds\,
    \Phi_{s,t_0}^\dagger[\hat N(s)]
    +
    \mathcal O(\tau_E^2).
\end{align}
The compact form generates terms containing two
$\mathcal K^\dagger$ maps, but these are understood to be
discarded consistently as $\mathcal O(\tau_E^2)$.

For an arbitrary initial state
$\hat\rho_A(t_0)$, with
$\hat\rho_A(s)=\Phi_{s,t_0}[\hat\rho_A(t_0)]$, the work variance is
\begin{align}
\label{eq:oqs_fluctuations}
    \sigma_W^2(t,t_0)
    =&
    2\int_{t_0}^{t}\!d\tau_2
    \int_{t_0}^{\tau_2}\!d\tau_1
    \Bigg(
        \Tr\!\left[
            \hat w(\tau_2)
            \Phi_{\tau_2,\tau_1}
            [
                \mathfrak W_{\tau_1}^\dagger
                [\hat\rho_A(\tau_1)]
            ]
        \right]
        -
        \Tr[
            \hat\rho_A(\tau_1)\hat w(\tau_1)
        ]
        \Tr[
            \hat\rho_A(\tau_2)\hat w(\tau_2)
        ]
    \Bigg)
    \nonumber\\
    &+
    \int_{t_0}^{t}\!ds\,
    \Tr[\hat\rho_A(s)\hat N(s)]
    +
    \mathcal O(\tau_E^2),
\end{align}
where $\mathfrak W_t^\dagger = \mathfrak W_{R,t}^\dagger+\mathfrak W_{L,t}^\dagger$. For $\partial_t \hat V(t) = 0$, we recover the form derived in Ref.~\cite{Miller2019}.

\subsection{Higher moments for time-dependent interactions in the GKLS limit}
\label{app:OQS_higher_moments}
Let
\begin{equation}
    \hat h_i:=\partial_{t_i}\hat H_A(t_i),
    \qquad
    \hat a_{\alpha,i} := \dot\lambda_\alpha(t_i) \hat A_k = \sum_{\omega} \partial_{t_i}\hat L_{\alpha\omega}(t_i).
\end{equation}
We also define the
system-only regression strings
\begin{equation}
\label{eq:higher_moments_system_regression_string}
    \mathcal G_{a:b}^{(0)\dagger}
    =
    \Phi_{t_a,t_{a-1}}^\dagger
    \circ\mathsf L_{\hat h_a}
    \circ\cdots\circ
    \Phi_{t_b,t_{b-1}}^\dagger
    \circ\mathsf L_{\hat h_b},
    \qquad
    \mathcal G_{a:b}^{(0)\dagger}=\operatorname{id}
    \quad\text{for }a>b.
\end{equation}
The reduced operator associated with the $n$-th work moment is
\begin{equation}
\label{eq:reduced_work_nth_moment}
    \hat W_A^{(n)}(t,t_0)
    =
    \sum_{\bm x\in\{A,V\}^n}
    \int_{t_0}^t\!\!dt_1...dt_n\,
    \hat w_{\bm x}(t_1,\ldots,t_n).
\end{equation}
The generalized regression theorem shows that, up to
$\mathcal O(\tau_E^2)$, only strings containing zero, one, or two
explicit environment insertions contribute. It is therefore natural
to decompose the $n$-th moment as
\begin{equation}
\label{eq:nth_moment_decomposition}
    \hat W_A^{(n)}(t,t_0)
    \simeq
    \hat W_{\rm reg}^{(n)}(t,t_0)
    +
    \hat W_N^{(n)}(t,t_0)
    +
    \hat W_{\rm OO}^{(n)}(t,t_0)
    +
    \mathcal O(\tau_E^2).
\end{equation}

\paragraph{Regular contribution.}
The regular part contains the ordinary $m=0$ regression term and
the regular part of every $m=1$ string. Applying
Theorem~\ref{thm:generalized_QRT} gives
\begin{align}
\label{eq:nth_moment_regular_explicit}
    \hat W_{\rm reg}^{(n)}
    ={}&
    \int_{t_0}^t\!\!dt_1...dt_n\,
    \Bigg\{
        \mathcal G_{1:n}^{(0)\dagger}[\mathbbm 1_A]+
        \sum_{p=1}^n\sum_\alpha
        \mathcal G_{1:p-1}^{(0)\dagger}
        \circ\Phi_{t_p,t_{p-1}}^\dagger
        \circ
        \left(
            \mathcal K_{\alpha,t_p}^\dagger
            \circ\mathsf L_{\hat a_{\alpha,p}}
            +
            \mathsf L_{\hat a_{\alpha,p}}
            \circ\mathcal R_{\alpha,t_p}^\dagger
        \right)
        \circ\mathcal G_{p+1:n}^{(0)\dagger}
        [\mathbbm 1_A]
    \Bigg\}.
\end{align}

This suggests defining the full-hypercube right work-insertion map
as
\begin{align}
\label{eq:full_WR_definition}
    \mathfrak W_{R,s}^\dagger[\hat O_A]
    :=
    \partial_s\hat H_A(s)\hat O_A +
    \sum_\alpha\dot\lambda_\alpha(s)
    \left(
        \mathcal K_{\alpha,s}^\dagger[
            \hat A_\alpha\hat O_A
        ]
        +
        \hat A_\alpha
        \mathcal R_{\alpha,s}^\dagger[\hat O_A]
    \right).
\end{align}
Since
$\mathcal R_{\alpha,s}^\dagger[\mathbbm 1_A]=0$, its action on
the identity gives the reduced first-moment work current,
\begin{equation}
    \mathfrak W_{R,s}^\dagger[\mathbbm 1_A]
    =
    \partial_s\hat H_A(s)
    +
    \sum_\alpha
    \dot\lambda_\alpha(s)
    \mathcal K_{\alpha,s}^\dagger[\hat A_\alpha] = \hat w(s).
\end{equation}

Expanding a product of the maps in
Eq.~\eqref{eq:full_WR_definition} and retaining at most one
interaction-induced $\mathcal K/\mathcal R$ insertion reproduces
Eq.~\eqref{eq:nth_moment_regular_explicit}. Hence,
\begin{equation}
\label{eq:nth_moment_regular_chain}
    \hat W_{\rm reg}^{(n)}
    \simeq
    \int_{t_0}^t\!\!dt_1...dt_n\,
    \prod_{i=1}^n
    \left(
        \Phi_{t_i,t_{i-1}}^\dagger
        \circ\mathfrak W_{R,t_i}^\dagger
    \right)
    [\mathbbm 1_A]
    +
    \mathcal O(\tau_E^2),
\end{equation}
where the product is ordered with increasing $i$ from left to
right. The compact product generates terms containing two or more
regular $\mathcal K/\mathcal R$ insertions, but these are
consistently discarded as $\mathcal O(\tau_E^2)$.

For later use, define the pullback of the work map to $t_0$,
\begin{equation}
\label{eq:pulled_back_WR}
    \overline{\mathfrak W}_{R,s}^\dagger
    =
    \Phi_{s,t_0}^\dagger
    \circ\mathfrak W_{R,s}^\dagger
    \circ(\Phi_{s,t_0}^\dagger)^{-1}.
\end{equation}
Using $\Phi_{t_i,t_{i-1}}^\dagger = \Phi_{t_0,t_{i-1}}^\dagger \circ\Phi_{t_i,t_0}^\dagger$
and the unitality of the adjoint propagator, the factors telescope, giving
\begin{equation}
\label{eq:nth_moment_regular_power}
    \hat W_{\rm reg}^{(n)}
    \simeq
    \left(
        \int_{t_0}^t ds\,
        \overline{\mathfrak W}_{R,s}^\dagger
    \right)^n
    [\mathbbm 1_A]
    +
    \mathcal O(\tau_E^2).
\end{equation}

\paragraph{Direct environment-noise contribution.}
The $m=2$ part of Theorem~\ref{thm:generalized_QRT} gives the
direct contraction of two explicit environment insertions. If these
occur at positions $a<b$, all remaining positions contain
$\hat h_i$. Therefore,
\begin{align}
\label{eq:nth_moment_noise_explicit}
    \hat W_N^{(n)}
    ={}&
    \sum_{1\leq a<b\leq n}
    \sum_{\alpha\beta}
    \int_{t_0}^t\!\!dt_1...dt_n\,
    C_{\alpha\beta}(t_a-t_b)
    \mathcal G_{1:a-1}^{(0)\dagger}
    \circ\Phi_{t_a,t_{a-1}}^\dagger
    \circ
    \mathsf L_{
        \hat a_{\alpha,a}
        \hat h_{a+1}\cdots
        \hat h_{b-1}
        \hat a_{\beta,b}
    }
    \circ
    \mathcal G_{b+1:n}^{(0)\dagger}
    [\mathbbm 1_A].
\end{align}
The product between the two contracted vertices is an ordinary
operator product: there are no reduced propagators between
positions $a$ and $b$. This is precisely the $m=2$ term in
Theorem~\ref{thm:generalized_QRT}. In the Markov--secular limit,
the short-ranged correlation $C_{\alpha\beta}$ may equivalently
be expressed through the corresponding local
$\gamma_{\alpha\beta}(\omega)$ contraction.

\paragraph{Out-of-time-order contribution.}
The out-of-time-order part contains two distinct terms,
\begin{equation}
\label{eq:nth_moment_OO_split}
    \hat W_{\rm OO}^{(n)}
    =
    \hat W_{{\rm OO},II}^{(n)}
    +
    \hat W_{{\rm OO},EI}^{(n)}.
\end{equation}

The $II$ contribution comes from the $m=0$ sector. Let $I_i=[\min\{t_{i-1},t_i\},\max\{t_{i-1},t_i\}]$, and $\epsilon_i=\operatorname{sgn}(t_i-t_{i-1})$, where $t_0$ denotes the initial time. Then
\begin{align}
\label{eq:nth_moment_OO_II}
    \hat W_{{\rm OO},II}^{(n)}
    ={}&
    \int_{t_0}^t\!\!dt_1...dt_n\!\!\!\!
    \sum_{\substack{1\leq q<r\leq n\\
                    |I_q\cap I_r|>0}}\!\!\!\!
    \epsilon_q\epsilon_r
    \left.
    \mathcal G_{1:q-1}^{(0)\dagger}
    \circ\Phi_{s,t_{q-1}}^\dagger\!\!
    \left[
        (\Phi_{t_q,s}^\dagger\circ\mathsf L_{\hat h_q}\circ\mathcal G_{q+1:r-1}^{(0)\dagger}[\mathbbm 1_A])
        (\Phi_{t_r,s}^\dagger\circ\mathsf L_{\hat h_r}\circ\mathcal G_{r+1:n}^{(0)\dagger}[\mathbbm 1_A])
    \right]
    \right|_{s=s_{q,r}^-}^{s=s_{q,r}^+},
\end{align}
where $[s_{q,r}^-,s_{q,r}^+]=I_q\cap I_r$.

The $EI$ contribution comes from the $m=1$ sector. For an
environment insertion at position $p$, define
\begin{equation}
    \mathcal O_<^{(p)}
    =
    \left\{
        q\in\{1,\ldots,p-1\}:t_p\in I_q
    \right\},
    \qquad
    \mathcal O_>^{(p)}
    =
    \left\{
        q\in\{p+2,\ldots,n\}:t_p\in I_q
    \right\}.
\end{equation}
The corresponding contribution is
\begin{multline}
\label{eq:nth_moment_OO_EI}
    \hat W_{{\rm OO},EI}^{(n)}
    =
    \sum_{p=1}^n\sum_\alpha
    \int_{t_0}^t\!\!dt_1...dt_n
    \Bigg\{
    \sum_{q\in\mathcal O_<^{(p)}}
    \epsilon_q\,
    \mathcal G_{1:q-1}^{(0)\dagger}
    \circ\Phi_{t_p,t_{q-1}}^\dagger
    \circ\mathcal K_{\alpha,t_p}^{\rm OO\dagger}
    \circ
    \mathsf L_{
        \hat h_q\cdots
        \hat h_{p-1}\hat a_{\alpha,p}
    }
    \circ\mathcal G_{p+1:n}^{(0)\dagger}
    [\mathbbm 1_A]\\
    +
    \sum_{q\in\mathcal O_>^{(p)}}
    \epsilon_q\,
    \mathcal G_{1:p-1}^{(0)\dagger}
    \circ\Phi_{t_p,t_{p-1}}^\dagger
    \circ
    \mathsf L_{
        \hat a_{\alpha,p}
        \hat h_{p+1}\cdots
        \hat h_{q-1}
    }
    \circ\mathcal R_{\alpha,t_p}^{\rm OO\dagger}\circ
    \Phi_{t_q,t_p}^\dagger
    \circ\mathsf L_{\hat h_q}
    \circ\mathcal G_{q+1:n}^{(0)\dagger}
    [\mathbbm 1_A]
    \Bigg\}.
\end{multline}
Equations~\eqref{eq:nth_moment_regular_explicit},
\eqref{eq:nth_moment_noise_explicit},
\eqref{eq:nth_moment_OO_II}, and
\eqref{eq:nth_moment_OO_EI} contain all contributions retained by
the generalized regression theorem.

\section{Generalized quantum regression theorem}
\label{app:generalized_QRT}
We now state and prove a generalized version of a quantum regression theorem~\cite{Lax63} that we apply to calculate the moments of the reduced work operator in Appendix~\ref{app:OQS}. We employ the notations and approximations of the GKLS limit defined in Appendix~\ref{app:Markov-Lindblad}.

\begin{theorem}[Generalized quantum regression theorem for system--environment insertions]
\label{thm:generalized_QRT}
Let
\begin{equation}
    \hat Z_{\bm k}(t_0) = \prod_{i=1}^n
    \hat U^{\dagger}(t_{k_i},t_0)
    \bigl(\hat X_i\otimes\hat Y_i\bigr)
    \hat U(t_{k_i},t_0),
\end{equation}
where $\hat Y_i\in \{\mathbbm 1_E,\hat E_{k_i}\}$, and $\{t_{k_1},\ldots,t_{k_n}\}$ an arbitrary collection of potentially disordered times. We denote by $m$ the number of operators $\hat Y_i$ different from $\mathbbm 1_E$. Define the reduced operator
\begin{equation}
    \hat z_{\bm k}(t_0) = \Tr_E[(\mathbbm 1_A\otimes\hat\sigma_E)\hat Z_{\bm k}(t_0)],
\end{equation}
such that $\Tr[\hat\rho_A\hat z_{\bm k}(t_0)] = \Tr[(\hat \rho_A\otimes\hat\sigma_E)\hat Z_{\bm k}(t_0)]$ for arbitrary $\hat \rho_A$.

Then, at the working order of the Born--Markov approximation,
$\hat z_{\bm k}(t_0)$ is given by the following four cases.
\paragraph{$m=0$. (Quantum Regression Theorem)} If $\hat Y_i=\mathbbm 1_E$ for every $i$,
\begin{equation}
\label{eq:generalized_QRT_m_zero}
    \hat z_{\bm k}(t_0) \simeq \Phi_{t_{k_1},t_0}^{\dagger} \circ\mathsf L_{\hat X_1} \circ\Phi_{t_{k_2},t_{k_1}}^{\dagger} \circ\mathsf L_{\hat X_2} \circ\cdots\circ\Phi_{t_{k_n},t_{k_{n-1}}}^{\dagger} \circ\mathsf L_{\hat X_n} [\mathbbm 1_A] + \delta \hat z_{QRT}^{\rm OO},
\end{equation}
where $\mathsf L_{\hat X}[\hat O_A]=\hat X\hat O_A$ is the
left-multiplication superoperator and $\Phi_{t,t'}^\dagger$ the adjoint of the GKLS propagator for two arbitrary times. The correction
$\delta\hat z_{\rm QRT}^{\rm OO}$ can be nonzero only when two
recursion intervals overlap over a finite physical-time interval.
In particular, it vanishes for a monotonically ordered sequence of
operator times, recovering the usual quantum regression theorem. \\

\paragraph{$m=1$.}
If the unique environment insertion occurs at position $p$, $\hat Y_p=\hat E_\alpha$, and $\hat Y_i=\mathbbm 1_E$ for $i\neq p$,
\begin{equation}
\label{eq:generalized_QRT_m_one}
    \hat z_{\bm k}(t_0)
    \simeq 
    \Phi_{t_{k_1},t_0}^{\dagger}\!\!
    \circ\mathsf L_{\hat X_1}\!\!
    \circ\cdots\circ
    \Phi_{t_{k_{p-1}},t_{k_{p-2}}}^{\dagger}\!\!
    \circ\mathsf L_{\hat X_{p-1}}\!\!\circ
    \Phi_{t_{k_p},t_{k_{p-1}}}^{\dagger}\!\!
    \circ
    \mathfrak E_{p,\alpha} \circ
    \Phi_{t_{k_{p+1}},t_{k_p}}^{\dagger}\!\!
    \circ\mathsf L_{\hat X_{p+1}}
    \circ\cdots\circ
    \Phi_{t_{k_n},t_{k_{n-1}}}^{\dagger}\!\!
    \circ\mathsf L_{\hat X_n}
    [\mathbbm 1_A] + \delta \hat z_1^{\rm OO},
\end{equation}
with $\delta \hat z_1^{\rm OO}$ the out-of-time-order term, $\mathfrak E_{p,\alpha} = \mathcal K_{\alpha,t_{k_p}}^{\dagger}\circ\mathsf L_{\hat X_p}\!\!+ \mathsf L_{\hat X_p}\!\!\circ\mathcal R_{\alpha,t_{k_p}}^{\dagger}$, $\mathcal R_{\alpha,t}^\dagger[\hat O_A] = i\sum_{j\omega} \Gamma_{j\alpha}^*(\omega)[\hat L_{j\omega}(t),\hat O_A]$, and  $\mathcal K_{\alpha,t}^\dagger[\hat O_A] = -i\sum_{j\omega}( \Gamma_{\alpha j}(\omega)\hat O_A \hat L_{j\omega}(t)- \Gamma^*_{\alpha j}(\omega)\hat L_{j\omega}^\dagger(t)\hat O_A)$.\\

\paragraph{$m=2$.}
If the two environment insertions occur at positions $a<b$, $\hat Y_a=\hat E_\alpha$, $\hat Y_b=\hat E_\beta$, and $\hat Y_i=\mathbbm 1_E$ for $i\neq a,b$,
\begin{equation}
\label{eq:generalized_QRT_m_two}
    \hat z_{\bm k}(t_0)
    \simeq 
    C_{\alpha\beta}
    \bigl(t_{k_a}-t_{k_b}\bigr)
    \Phi_{t_{k_1},t_0}^{\dagger}\!\!
    \circ\mathsf L_{\hat X_1}\!\!
    \circ\cdots\circ
    \Phi_{t_{k_a},t_{k_{a-1}}}^{\dagger}\circ
    \mathsf L_{a:b}
    \circ
    \Phi_{t_{k_{b+1}},t_{k_b}}^{\dagger}\!\!
    \circ\mathsf L_{\hat X_{b+1}}\!\!
    \circ\cdots\circ
    \Phi_{t_{k_n},t_{k_{n-1}}}^{\dagger}\!\!
    \circ\mathsf L_{\hat X_n}
    [\mathbbm 1_A],
\end{equation}
with $\mathsf L_{a:b} = \mathsf L_{\hat X_a}\circ\mathsf L_{\hat X_{a+1}}\circ\cdots\circ \mathsf L_{\hat X_b}$ and no out-of-time-order correction.\\

\paragraph{$m>2$.}
If the operator string contains more than two environment insertions,
\begin{equation}
\label{eq:generalized_QRT_m_greater_two}
    \hat z_{\bm k}(t_0)
    =
    \mathcal O(\tau_E^2),
\end{equation}
and is therefore negligible at the working order retained here.\\

\paragraph*{Out-of-time-order terms.} 
Let $I_i=[\min\{t_{k_{i-1}},t_{k_i}\},\max\{t_{k_{i-1}},t_{k_i}\}]$ (with $t_{k_0} = t_0$), and $\epsilon_i=\operatorname{sgn}(t_{k_i}-t_{k_{i-1}})$. We use $\hat X_{a:b}=\hat X_a\cdots\hat X_b$ and $\mathsf L_{a:b}=\mathsf L_{\hat X_a}\circ\cdots\circ\mathsf L_{\hat X_b}$, with both understood as the identity for $a>b$, and define
\begin{equation}
\label{eq:theorem_regression_string}
    \mathcal G_{a:b}^{\dagger}
    =
    \Phi_{t_{k_a},t_{k_{a-1}}}^{\dagger}
    \circ\mathsf L_{\hat X_a}
    \circ\cdots\circ
    \Phi_{t_{k_b},t_{k_{b-1}}}^{\dagger}
    \circ\mathsf L_{\hat X_b},
    \qquad
    \mathcal G_{a:b}^{\dagger}=\operatorname{id}
    \quad\text{for }a>b.
\end{equation}
Let $\mathcal P_{\rm ov} =\{(q,r):1\leq q<r\leq n,\ |I_q\cap I_r|>0\}$. The $m=0$ correction is
\begin{equation}
\label{eq:theorem_QRT_OO_correction}
    \delta\hat z_{\rm QRT}^{\rm OO}
    \simeq
    \sum_{(q,r)\in\mathcal P_{\rm ov}}
    \epsilon_q\epsilon_r
    \left.
    \mathcal G_{1:q-1}^{\dagger}
    \circ
    \Phi_{s,t_{k_{q-1}}}^{\dagger}
    \left[
        \left(
            \Phi_{t_{k_q},s}^{\dagger}
            \circ\mathsf L_{\hat X_q}
            \circ\mathcal G_{q+1:r-1}^{\dagger}
            [\mathbbm 1_A]
        \right)
        \left(
            \Phi_{t_{k_r},s}^{\dagger}
            \circ\mathsf L_{\hat X_r}
            \circ\mathcal G_{r+1:n}^{\dagger}
            [\mathbbm 1_A]
        \right)
    \right]
    \right|_{s=s_{q,r}^-}^{s=s_{q,r}^+},
\end{equation}
where $[s_{q,r}^-,s_{q,r}^+] = I_q\cap I_r$. In particular, $\delta\hat z_{\rm QRT}^{\rm OO}=0$ when $\mathcal P_{\rm ov}=\varnothing$.

Finally, let $\mathcal O^{(p)}_< =\{q=1,\ldots,p-1\,|\,t_{k_p}\in I_q\}$ and $\mathcal O^{(p)}_> =\{q=p+2,\ldots,n\,|\,t_{k_p}\in I_q\}$. The $m=1$ correction is
\begin{align}
\label{eq:theorem_m_one_OO}
    \delta\hat z_1^{\rm OO}
    &=
    \sum_{q\in\mathcal O^{(p)}_<}
    \epsilon_q\,
    \mathcal G_{1:q-1}^{\dagger}
    \circ\Phi_{t_{k_p},t_{k_{q-1}}}^{\dagger}
    \circ\mathcal K_{\alpha,t_{k_p}}^{\rm OO\dagger}
    \circ\mathsf L_{q:p}
    \circ\mathcal G_{p+1:n}^{\dagger}
    [\mathbbm 1_A]
    \nonumber\\
    &+
    \sum_{q\in\mathcal O^{(p)}_>}
    \epsilon_q\,
    \mathcal G_{1:p-1}^{\dagger}
    \circ\Phi_{t_{k_p},t_{k_{p-1}}}^{\dagger}
    \circ\mathsf L_{p:q-1}
    \circ\mathcal R_{\alpha,t_{k_p}}^{\rm OO\dagger}
    \circ\Phi_{t_{k_q},t_{k_p}}^{\dagger}
    \circ\mathsf L_{\hat X_q}
    \circ\mathcal G_{q+1:n}^{\dagger}
    [\mathbbm 1_A],
\end{align}
with $\mathcal K_{\alpha,t}^{\rm OO\dagger}[\hat O_A] = -i\sum_{j\omega}(\gamma_{\alpha j}(\omega)\hat O_A\hat L_{j\omega}(t) - \gamma_{\alpha j}^{*}(\omega)\hat L_{j\omega}^\dagger(t)\hat O_A)$, and $\mathcal R_{\alpha,t}^{\rm OO\dagger}[\hat O_A] = i\sum_{j\omega} \gamma_{j\alpha}^{*}(\omega) [\hat L_{j\omega}(t),\hat O_A]$.
Hence $\delta\hat z_1^{\rm OO}=0$ whenever both
$\mathcal O^{(p)}_<$ and $\mathcal O^{(p)}_>$ are empty.
\end{theorem}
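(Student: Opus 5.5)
The plan is to work in the interaction picture with respect to $\hat H_0(t)=\hat H_A(t)\otimes\mathbbm 1_E+\mathbbm 1_A\otimes\hat H_E$, exactly as in Appendix~\ref{app:OQS_first_moment_driven}. There the case $n=1$, $m=1$ was already shown to reduce to $\Phi^\dagger_t\circ\mathcal K^\dagger_{k,t}$. The idea is to run an induction on the string length $n$: peel off operators one at a time, each step producing one reduced propagator.

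\textbf{Monotone case, $m=0$.} Take the ordered case $t_{k_1}\le\cdots\le t_{k_n}$ first. Write $\hat Z_{\bm k}$ by inserting $\hat U(t_{k_{i}},t_0)\hat U^\dagger(t_{k_{i-1}},t_0)=\hat U(t_{k_i},t_{k_{i-1}})$ between consecutive factors. Tracing over $E$ against $\hat\sigma_E$, apply the Born approximation $\mathbb Q[\tilde\rho_S]\simeq$ Eq.~\eqref{eq:evolution_Q} at each step. Concretely, I would prove the one-step recursion
\[
\Tr_E\big[(\mathbbm 1\otimes\hat\sigma_E)\hat U^\dagger(s,t_0)\big(\hat X\,\hat O(s)\otimes\mathbbm 1_E\big)\hat U(s,t_0)\big]\simeq\Phi^\dagger_{s,t_0}\big[\hat X\,\hat O_A\big],
\]
where $\hat O(s)$ stands for the later (already reduced) string. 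The key input is that, at Born order, the bath is reset to $\hat\sigma_E$ between insertions up to terms of order $\tau_E$. Iterating this from the right gives the chain $\Phi^\dagger\circ\mathsf L\circ\Phi^\dagger\cdots[\mathbbm 1_A]$, which is the standard Lax regression argument.

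\textbf{Cases $m=1,2$ and $m>2$.}
\begin{itemize}
\item For $m=1$, the insertion $\hat E_\alpha$ at position $p$ has vanishing mean by Eq.~\eqref{eq:zeroMeanE}. It survives only through the first-order correlation $\mathbb Q[\tilde\rho_S]$ produced in the window of width $\tau_E$ preceding $t_{k_p}$. Contracting it with the interaction $\tilde V(s)$ gives a bath correlation $C_{\alpha j}$. When the system operator $\hat X_p$ multiplies from the left, we get the $\mathcal K^\dagger$ term, exactly as in the derivation of Eq.~\eqref{eq:mod_QRT}. The new ingredient is that the later string $\hat O_A$ sits between $\hat X_p$ and the rest. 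Moving the interaction vertex $\hat L_{j\omega}$ past $\hat O_A$ produces the commutator, which yields $\mathcal R^\dagger_{\alpha,t}$.
\item For $m=2$, the leading term is the direct contraction $C_{\alpha\beta}(t_{k_a}-t_{k_b})$ of the two bath operators. Because $C$ is supported on $|t_{k_a}-t_{k_b}|\lesssim\tau_E$, all intermediate propagators collapse to the identity, leaving the plain product $\mathsf L_{a:b}$. Cross terms with an additional $\tilde V$ contraction are $\mathcal O(\tau_E^2)$.
\item For $m>2$, Wick-type pairing for a centered Born bath needs at least two independent short-time contractions, hence $\mathcal O(\tau_E^2)$.
\end{itemize}

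\textbf{Out-of-time-order corrections.} Here a backward propagator $\Phi^\dagger_{t_{k_i},t_{k_{i-1}}}$ with $t_{k_i}<t_{k_{i-1}}$ appears. I would define it through the exact identity $\hat U(t_{k_i},t_{k_{i-1}})=\hat U^\dagger(t_{k_{i-1}},t_{k_i})$. The reduced dynamics, however, is not invertible as a semigroup, so the naive chain mis-counts bath memory wherever two recursion intervals $I_q,I_r$ overlap. I would differentiate the exact string with respect to the overlap endpoint $s$ and compare with the derivative of the naive chain. The discrepancy is a total $s$-derivative supported on $I_q\cap I_r$; integrating it produces the boundary term $|_{s^-}^{s^+}$ with sign $\epsilon_q\epsilon_r$. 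For $m=1$, the same discrepancy evaluated where $t_{k_p}\in I_q$ gives the $\gamma$-weighted terms $\mathcal K^{\rm OO\dagger}$ and $\mathcal R^{\rm OO\dagger}$ in place of $\Gamma$, because forward and backward passes add $\Gamma+\Gamma^*$.

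This out-of-time-order bookkeeping is the main obstacle, especially getting signs and the exact form of the boundary terms right. As a check, I would verify the results against the $n=2$ cases computed in Appendix~\ref{app:OQS_second_moment_driven}, where the correction must make left- and right-started recursions agree.
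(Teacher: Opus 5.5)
Your monotone $m=0$ step is the standard Lax regression argument, and your overall architecture matches the paper: interaction picture, Born approximation, peeling off one operator per step, and Wick counting for $m>2$. The gap is that everything beyond $m=0$ relies on the bath being reset to $\hat\sigma_E$ between insertions. That is exactly what fails in the remaining sectors, and the proposal offers no replacement.

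\textbf{The missing tool.} The paper uses a Born propagator for bath-dressed operators $\hat x_A\otimes\hat\ell_E\hat\sigma_E\hat r_E$. For an even bath word it reduces to $\braket{\hat r_E\hat\ell_E}_{\hat\sigma_E}\tilde\Phi_{t,t'}$. For an odd word only the source term $-i\sum_j\int ds\,\lambda_j(s)\braket{\hat r_E\tilde E_j(s)\hat\ell_E}_{\hat\sigma_E}\tilde\Phi_{t,s}[[\tilde A_j(s),\cdot\,]]$ survives. This is fed into a recursion that removes one operator per step, with Gaussian counting in $g$ and $\tau_E$.

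\textbf{Where $m=1$ goes wrong.} $\mathcal R^\dagger_\alpha$ carries $\Gamma^*_{j\alpha}$, which samples $C_{\alpha j}$ at negative time differences. So it comes from an interaction vertex \emph{after} $t_{k_p}$ contracting with the inserted $\hat E_\alpha$. This is the odd-word source on the next recursion interval, acting by commutator on the post-insertion operator. Correlations built up before $t_{k_p}$ give only $\mathcal K^\dagger$-type terms. Rewriting $\hat O_A\hat L$ as $\hat L\hat O_A+[\hat O_A,\hat L]$ is an identity and cannot turn $\Gamma_{\alpha j}$ into $\Gamma^*_{j\alpha}$.

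\textbf{Where $m=2$ goes wrong.} Your argument fails for the disordered strings the theorem covers. $C_{\alpha\beta}$ localizes only $t_{k_a}-t_{k_b}$, while $t_{k_{a+1}},\dots,t_{k_{b-1}}$ are unconstrained, so nothing collapses the intermediate propagators. The actual reason is parity. While the bath word holds a single centered operator, the zero-vertex propagator $\braket{\hat r_E\hat\ell_E}_{\hat\sigma_E}\tilde\Phi$ vanishes. Only the unpropagated branch $\hat x_A\otimes\Delta_{\ell,r}[\hat\sigma_E]$ survives, so in the interaction picture the operators between $a$ and $b$ are simply multiplied.

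\textbf{The out-of-time-order terms.} You assert the conclusion, a total $s$-derivative giving boundary terms, without a mechanism. Differentiating the exact string in an overlap endpoint does not produce it. In the paper the $m=0$ correction is the sum of two distinct $\mathcal O(g^2\tau_E)$ pieces.
\begin{enumerate}
\item[(i)] The contraction of two interaction-generated bath operators produced on different recursion intervals $I_q,I_r$ with finite overlap. After the Markov--secular step this gives $\tilde{\mathcal D}_s[\hat\varrho\hat X_{q:r-1}]+\tilde{\mathcal D}_s[\hat\varrho]\hat X_{q:r-1}-\hat\varrho\,\tilde{\mathcal D}_s^\dagger[\hat X_{q:r-1}]$, with $\hat\varrho$ the propagated prefix.
\item[(ii)] A mismatch between two backward maps. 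Your definition via $\hat U(t_{k_i},t_{k_{i-1}})=\hat U^\dagger(t_{k_{i-1}},t_{k_i})$ yields the physical backward reduced map, with generator $-\tilde{\mathcal H}+\tilde{\mathcal D}$ in the backward duration. The statement instead uses the GKLS inverse $\Phi^\dagger_{t,t'}$ for $t<t'$, with generator $-(\tilde{\mathcal H}+\tilde{\mathcal D})$. On each backward interval the two differ by $2\int\tilde\Phi\circ\tilde{\mathcal D}\circ\tilde\Phi$.
\end{enumerate}
Rewriting (ii) as a sum over overlapping pairs uses the telescoping identity $\sum_{q<r}\epsilon_q\mathbbm{1}_{I_q}=\mathbbm{1}_{[t_0,t_{k_{r-1}}]}$. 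This flips the sign of the first term of (i). Only then does the integrand become $\tilde{\mathcal D}^\dagger_s[\hat O_1\hat O_2]-\tilde{\mathcal D}^\dagger_s[\hat O_1]\hat O_2-\hat O_1\tilde{\mathcal D}^\dagger_s[\hat O_2]$, which is the exact $s$-derivative of the product of the two backward-propagated strings, since the Hamiltonian part cancels by the Leibniz rule.

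Likewise, the $\gamma$-weighted $m=1$ terms come from nonadjacent recursion intervals that contain $t_{k_p}$ in their interior, where the Markov integral is two-sided. Seeing those intervals at all requires tracking, in the dressed recursion, the step at which the interaction vertex is generated.
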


We now prove the theorem. Let us start by reconstructing a Born-like approximation: for a product operator $\hat x_A\otimes \hat\ell_E\,\hat \sigma_E\,\hat r_E$ we can apply the $\mathbb P$--$\mathbb Q$ sectors decomposition onto its interaction picture evolution
\begin{equation}
    \hat z(t,t') = \hat U^{(I)}_{t,t'}(\hat x_A\otimes\hat\ell_E\,\hat \sigma_E\,\hat r_E)\hat U^{\dagger(I)}_{t,t'} = \tilde \Phi_{t,t'}^{(\ell,r)}[\hat x_A]\otimes\hat\sigma_E + \tilde\chi(t,t')~,
\end{equation}
where $\tilde\Phi_{t,t'}^{(\ell,r)}[\hat x_A] = \Tr_E[\hat z(t,t')]$. The first term is the projection onto the $\mathbb P$--sector and $\tilde\chi(t,t') = \mathbb Q[\hat z(t,t')]$.
Using that, by definition of the interaction picture, $\frac{d}{dt}\tilde \chi (t,t') = -i\mathbb Q[\tilde V(t),\hat z(t,t')]$, we have
\begin{equation}\label{eq:PQ_decomp}
    \tilde \chi(t,t')-\tilde \chi(t',t')=-i\int_{t'}^t\!ds\,\mathbb Q[\tilde V(s),\hat z(s,t')] = -i\int_{t'}^t\!ds\!\left([\tilde V(s),\tilde \Phi_{s,t'}^{(\ell,r)}[\hat x_A]\otimes\hat\sigma_E] + \mathbb Q[\tilde V(s),\tilde \chi(s,t')]\right)
\end{equation}
where $\tilde \chi(t',t') = \hat x_A\otimes\Delta_{\ell,r}[\hat\sigma_E]$, with $\Delta_{\ell,r}[\hat\sigma_E] = \hat\ell_E\hat\sigma_E\hat r_E - \rl\hat\sigma_E$. We also used that $[\tilde V(s),\tilde \Phi_{s,t'}[\hat x_A]\otimes\hat\sigma_E]$ lives entirely in the $\mathbb Q$ sector, since
\begin{equation}
    \mathbb P [\tilde V(s),\tilde \Phi_{s,t'}^{(\ell,r)}[\hat x_A]\otimes\hat\sigma_E] = \sum_j \Tr_E[\hat \sigma_E\tilde E_j(s)] [\lambda_j(s)\tilde A_k(s),\tilde \Phi_{s,t'}^{(\ell,r)}[\hat x_A]]\otimes\hat\sigma_E = 0,
\end{equation}
where we used $\Tr_E[\hat \sigma_E\tilde E_j(s)] = 0$. In a Born-like approximation we can drop the the last term in \eqref{eq:PQ_decomp}, so that we can approximate $\tilde \chi(t,t')$ by
\begin{equation}
    \tilde\chi_B(t,t') = \hat x_A\otimes\Delta_{\ell,r}[\hat \sigma_E] -i\int_{t'}^t\!ds [\tilde V(s),\tilde \Phi_{s,t'}^{(\ell,r)}[\hat x_A]\otimes\hat\sigma_E]~.
\end{equation}
Thus obtaining
\begin{equation}\label{eq:born}
    \hat z(t,t') = \tilde \Phi_{t,t'}^{(\ell,r)}[\hat x_A]\otimes\hat\sigma_E + \hat x_A\otimes\Delta_{\ell,r}[\hat \sigma_E] -i\int_{t'}^t\!ds [\tilde V(s),\tilde \Phi_{s,t'}^{(\ell,r)}[\hat x_A]\otimes\hat\sigma_E] + \mathcal E(t,t')~,
\end{equation}
where we defined the error by subtracting the truncated term from the exact one $\mathcal E(t,t') = \tilde\chi(t,t') - \tilde\chi_B(t,t')$. We can rewrite this error term as a Volterra equation
\begin{equation}
    \mathcal E(t,t') = -i\int_{t'}^t\!ds\,\mathbb Q\circ {\rm ad}_{\tilde V(s)}\circ\mathbb Q\,[\tilde\chi_B(s,t') + \mathcal E(s,t')]~,
\end{equation}
with ${\rm ad}_{\tilde V(s)}[\,\cdot\,] = [\tilde V(s),\,\cdot\,]$. We can solve it to obtain
\begin{equation}
    \mathcal E(t,t') = -i\int_{t'}^t\!ds\,\mathcal G_{\mathbb Q}(t,s)[\mathbb Q[\tilde V(s),\tilde \chi_B(s,t')]],\qquad \mathcal G_{\mathbb Q}(t,s) = \mathcal T e^{-i\int_{s}^t\!du\,\mathbb Q\circ{\rm ad}_{\tilde V(u)}\circ Q}~.
\end{equation}
We can now apply the approximation $\hat z(t,t')\simeq  \tilde \Phi_{t,t'}^{(\ell,r)}[\hat x_A]\otimes\hat\sigma_E + \tilde \chi_B(t,t')$ to the differential equation of $\tilde\Phi_{t,t'}^{(\ell,r)}$
\begin{align}
    \frac{d}{dt}\tilde\Phi_{t,t'}^{(\ell,r)}[\hat x_A] &= -i\Tr_E[[\tilde V(t),\hat z(t,t')]]~,\\
    &\simeq -i\Tr_E[[\tilde V(t),\tilde \chi_B(t,t')]]~,\\
    &= -i\Tr_E[[\tilde V(t),\hat x_A\otimes\Delta_{\ell,r}[\hat \sigma_E]]] - \int_{t'}^t\!\!ds\, \Tr_E[[\tilde V(t),[\tilde V(s),\tilde \Phi_{s,t'}^{(\ell,r)}[\hat x_A]\otimes\hat\sigma_E]]]~,\\
    &= -i\sum_j\lambda_j(t)\!\braket{\hat r_E\tilde E_j(t)\hat\ell_E}_{\hat\sigma_E}[\tilde A_j(t),\hat x_A] + \int_{t'}^t\!\!ds\, \mathcal K_{t,s}[\tilde \Phi_{s,t'}^{(\ell,r)}[\hat x_A]]~,\label{eq:diff_eq_phi_lr}
\end{align}
where we defined the Redfield kernel
\begin{equation}
\mathcal K_{t,s}[\hat x_A] = -\sum_{jj'}\lambda_j(t)\lambda_{j'}(s)\!\left(C_{jj'}(t-s)[\tilde A_j(t),\tilde A_{j'}(s)\hat x_A]-C_{jj'}^*(t-s)[\tilde A_j(t),\hat x_A\tilde A_{j'}(s)]\right)~.
\end{equation}
Since $\braket{\tilde E_j(t)}_{\hat\sigma_E} = 0$, we have that $\Phi^{(1,1)}_{t,t'}=\Phi_{t,t'}$ solves the homogeneous version of the integro-differential equation \eqref{eq:diff_eq_phi_lr}. Then it is solved by
\begin{equation}\label{eq:phi_lr}
    \tilde\Phi_{t,t'}^{(\ell,r)}[\hat x_A] = \rl \tilde\Phi_{t,t'}[\hat x_A] - i\sum_j\int_{t'}^t\!ds\, \lambda_j(s)\! \braket{\hat r_E\tilde E_j(s)\hat\ell_E}_{\hat\sigma_E}\!\tilde\Phi_{t,s}[[\tilde A_j(s),\hat x_A]].
\end{equation}
Let us consider the expectation values in this map for $\hat \ell_E = \tilde E_{a_1}(u_1)...\tilde E_{a_{n_\ell}}(u_{n_\ell})$ and $\hat r_E = \tilde E_{b_1}(v_1)...\tilde E_{b_{n_r}}(v_{n_r})$. If $n_\ell + n_r$ is even, then $\rel{j}{s} = 0$ and the map reduces to
\begin{equation}\label{eq:phi_lr_even}
    \tilde\Phi_{t,t'}^{(\ell,r)}[\hat x_A] = \rl \tilde\Phi_{t,t'}[\hat x_A] \qquad {\rm if}\qquad n_\ell + n_r \!\!\mod 2 = 0~,
\end{equation}
where $\rl$ is a Wick product, that produces over a sum over all possible $\tilde E$--pairs, each term a product of $\frac{n_\ell + n_r}{2}$ $C$--terms. Once integrated in time, these terms are of order $\mathcal{O}(\tau_E^{(n_\ell + n_r)/2})$ since $C$ has only support over a time window of width $\tau_E$ (bath correlation time). Instead, if $n_\ell + n_r$ is odd, then $\rl = 0$ and the map becomes 
\begin{equation}\label{eq:phi_lr_odd}
    \tilde\Phi_{t,t'}^{(\ell,r)}[\hat x_A] =  - i\sum_j\int_{t'}^t\!ds\, \lambda_j(s)\! \braket{\hat r_E\tilde E_j(s)\hat\ell_E}_{\hat\sigma_E}\!\tilde\Phi_{t,s}[[\tilde A_j(s),\hat x_A]] \qquad {\rm if}\qquad n_\ell + n_r \!\!\mod 2 = 1~,
\end{equation}
where the expected value produces terms of order $\mathcal{O}(\tau_E^{(n_\ell + n_r+1)/2})$. 
We now will apply this approximation to compute traces with the form 
\begin{equation}
    z_{k_1,...,k_n}\![\hat x_A,\hat \ell_E,\hat r_E](t_0) = \Tr[(\hat x_A\otimes \hat \ell_E\,\hat\sigma_E\, \hat r_E) \hat Z_{k_1,...,k_n}\!(t_0)]~,
\end{equation}
where $\hat Z_{k_1,...,k_n}\!(t_0)$ is a product of tensor-product Heisenberg-picture operators evaluated at different times 
\begin{equation}
    \hat Z_{k_1,...,k_n}(t_0) = \prod_{i=1}^n \hat U^\dagger_{t_{k_i},t_0}(\hat X_{k_i}\otimes \hat Y_{k_i})\hat U_{t_{k_i},t_0}~.
\end{equation}
We start by rewriting $\hat Z$ in with interaction picture operators
\begin{equation}
    \hat Z_{k_1,...,k_n}\!(t_0) = \prod_{i=1}^n \hat U^{\dagger(I)}_{t_{k_i},t_0}(\tilde X_{k_i}\!(t_{k_i})\otimes \tilde Y_{k_i}\!(t_{k_i}))\hat U_{t_{k_i},t_0}^{(I)}~,
\end{equation}
where we used $\hat U_{t,t'} = \hat U^{(0)}_{t,t'}\hat U^{(I)}_{t,t'}$.  
Using that $\hat U_{t',t}^{(I)} \hat Z_{k_1,...,k_n}\!(t) \hat U_{t',t}^{\dagger(I)} = \hat Z_{k_1,...,k_n}\!(t')$ and $\hat Z_{k_1,...,k_n}\!(t_0) = \hat U^{\dagger(I)}_{t_{k_1},t_0}(\tilde X_{k_1}\!(t_{k_1})\otimes \tilde Y_{k_1}\!(t_{k_1}))\hat U_{t_{k_1},t_0}^{(I)} \hat Z_{k_2,...,k_n}\!(t_0)$ we can apply the Born approximation
\begin{align}
    z_{k_1,...,k_n}\![\hat x_A,\hat \ell_E,\hat r_E](t_0) =&\, \Tr[\hat z(t_{k_1},t_0)(\tilde X_{k_1}\!(t_{k_1})\otimes \tilde Y_{k_1}\!(t_{k_1}))\hat Z_{k_2,...,k_n}\!(t_{k_1})]~,\\
    \simeq&~ \Tr[(\tilde \Phi_{t_{k_1},t_0}^{(\ell,r)}[\hat x_A]\tilde X_{k_1}\!(t_{k_1}) \otimes\hat\sigma_E \tilde Y_{k_1}\!(t_{k_1}))\hat Z_{k_2,...,k_n}\!(t_{k_1})]\\
    \nonumber
    &+ \Tr[(\hat x_A \tilde X_{k_1}\!(t_{k_1})\otimes\Delta_{\ell,r}[\hat\sigma_E]\tilde Y_{k_1}\!(t_{k_1}))\hat Z_{k_2,...,k_n}\!(t_{k_1})]\\
    \nonumber
    &- i\int_{t_0}^{t_{k_1}}\!\!ds\,\Tr[[\tilde V(s),\tilde \Phi_{s,t_0}^{(\ell,r)}[\hat x_A]\otimes\hat\sigma_E](\tilde X_{k_1}\!(t_{k_1})\otimes \tilde Y_{k_1}\!(t_{k_1}))\hat Z_{k_2,...,k_n}\!(t_{k_1})]~,\\
    \label{eq:rec_tmp}
    =&~ z_{k_2,...,k_n}\![\tilde \Phi_{t_{k_1},t_0}^{(\ell,r)}[\hat x_A]\tilde X_{k_1}\!(t_{k_1}),\mathbbm 1_E,\tilde Y_{k_1}\!(t_{k_1})](t_{k_1})\\
    \nonumber
    &+ z_{k_2,...,k_n}\![\hat x_A \tilde X_{k_1}\!(t_{k_1}),\hat\ell_E,\hat r_E\tilde Y_{k_1}\!(t_{k_1})](t_{k_1}) \\
    \nonumber
    &- \braket{\hat r_E\hat\ell_E}_{\hat\sigma_E} z_{k_2,...,k_n}\![\hat x_A \tilde X_{k_1}\!(t_{k_1}),\mathbbm 1_E,\tilde Y_{k_1}\!(t_{k_1})](t_{k_1})\\
    \nonumber
    &-i\sum_j\int_{t_0}^{t_{k_1}}\!\!ds\,z_{k_2,...,k_n}\![\lambda_j(s)\tilde A_j(s)\tilde \Phi_{s,t_0}^{(\ell,r)}[\hat x_A]\tilde X_{k_1}\!(t_{k_1}),\tilde E_j(s),\tilde Y_{k_1}\!(t_{k_1})](t_{k_1})\\
    \nonumber
    &+i\sum_j\int_{t_0}^{t_{k_1}}\!\!ds\,z_{k_2,...,k_n}\![\tilde \Phi_{s,t_0}^{(\ell,r)}[\hat x_A]\lambda_j(s)\tilde A_j(s)\tilde X_{k_1}\!(t_{k_1}),\mathbbm 1_E,\tilde E_j(s)\tilde Y_{k_1}\!(t_{k_1})](t_{k_1})~,
\end{align}
where we used that
\begin{align}
    [\tilde V(s),\tilde \Phi_{s,t_0}^{(\ell,r)}[\hat x_A]\otimes\hat\sigma_E] &= \sum_j[\lambda_j(s)\tilde A_j(s)\otimes\tilde E_j(s),\tilde \Phi_{s,t_0}^{(\ell,r)}[\hat x_A]\otimes\hat\sigma_E]~,\\
    &= \sum_j \lambda_j(s)\tilde A_j(s)\tilde \Phi_{s,t_0}^{(\ell,r)}[\hat x_A]\otimes\tilde E_j(s)\hat\sigma_E - \tilde \Phi_{s,t_0}^{(\ell,r)}[\hat x_A]\lambda_j(s)\tilde A_j(s)\otimes\hat\sigma_E\tilde E_j(s)~.
\end{align}
Equation~\eqref{eq:rec_tmp}, together with the terminal condition
\begin{equation}
z_{\varnothing}[\hat x_A,\hat\ell_E,\hat r_E](t)
=
\braket{\hat r_E\hat\ell_E}_{\hat\sigma_E}
\Tr_A[\hat x_A],
\end{equation}
already provides a closed recursive relation. Indeed, every application of Eq.~\eqref{eq:rec_tmp} removes one operator from the ordered product $\hat Z_{k_1,\ldots,k_n}$. After $n$ applications, every contribution is therefore reduced to an environment expectation value multiplied by a trace over $A$. Consequently, the recursion generates a finite sum of terms composed of nested reduced propagators, system-operator insertions, and environment correlation functions.

To determine which terms of Eq.~\eqref{eq:rec_tmp} are consistent with the accuracy of the Born--Markov approximation, we now introduce an explicit order counting. Let
\begin{equation}
    \mathcal I_E = \{ i\in\{1,\ldots,n\} \,|\, \hat Y_i\neq \mathbbm 1_E\},\qquad m=|\mathcal I_E|
\end{equation}
Thus, $m$ is the number of external environment operators appearing in the original operator string. In the application to the work moments, these operators originate from $\partial_t\hat V(t)= \sum_j\dot\lambda_j(t) \hat A_j\otimes\hat E_j$. The corresponding factors $\dot\lambda_j(t)$ are externally prescribed driving amplitudes and will therefore be kept explicitly: no power of the interaction strength $g$ is assigned to them (since in principle the driving speed $\dot\lambda$ can be of order $\mathcal O(1)$, it only needs to be small compared to the bath relaxation speed for the driven GKLS derivation to be consistent). Instead, $g$ will only be used as a bookkeeping parameter for environment operators generated by the interaction-picture evolution. Thus, every application of either of the last two terms of Eq.~\eqref{eq:rec_tmp}, or of the inhomogeneous term in Eq.~\eqref{eq:phi_lr}, increases the number $p$ of interaction-generated environment operators by one and contributes one power of $g$.

For a centered Gaussian environment, a terminal environment expectation containing $m$ external and $p$ interaction-generated operators can be nonzero only if
\begin{equation}\label{eq:gaussian_parity}
    m+p=2q, \qquad q\in\mathbb N.
\end{equation}
The terminal Wick expansion then contains $q$ two-point environment correlation functions. Since each correlation function is supported only when its two time arguments differ by at most $\tau_E$, every contraction involving an externally integrated work time localizes one relative-time integration to a window of width $\tau_E$. Consequently, a contribution containing $p$ interaction-generated operators has the schematic order
\begin{equation}\label{eq:external_sector_counting}
    z_{k_1,\ldots,k_n}^{[p]} = \mathcal O\!\left(g^p \tau_E^{(m+p)/2}\right),
\end{equation}
The minimum number of interaction-generated operators compatible with Eq.~\eqref{eq:gaussian_parity} is therefore
\begin{equation}
    p_{\min} =
    \begin{cases}
        0, & m\ {\rm even},\\
        1, & m\ {\rm odd}.
    \end{cases}
\end{equation}
The leading contribution to a word containing $m$ external environment operators consequently scales as
\begin{equation}\label{eq:leading_m_scaling}
    z_{k_1,\ldots,k_n}^{\rm lead} = \mathcal O\!\left(g^{m\bmod 2}\tau_E^{\lceil m/2\rceil}\right).
\end{equation}
In particular,
\begin{equation}\label{eq:m_sector_orders}
    \begin{aligned}
        m=0:&\qquad \mathcal O(1),\\
        m=1:&\qquad \mathcal O(g\tau_E),\\
        m=2:&\qquad \mathcal O(\tau_E),\\
        m=3:&\qquad \mathcal O(g\tau_E^2),\\
        m=4:&\qquad \mathcal O(\tau_E^2).
    \end{aligned}
\end{equation}
All sectors with $m\geq3$ are therefore at least of order $\mathcal O(\tau_E^2)$.

At each application of Eq.~\eqref{eq:rec_tmp}, a given recursion branch can generate at most one explicit environment operator. Indeed, the inhomogeneous contribution to $\tilde\Phi_{t,t'}^{(\ell,r)}$ and the last two terms of Eq.~\eqref{eq:rec_tmp} belong to alternative additive branches. Consequently, two explicit interaction-generated environment operators must originate from two distinct recursion steps and hence from two distinct propagation intervals.

To determine the correlation-time order of their contraction, we associate with the $i$-th recursion step the physical-time interval $I_i= [\min\{\tau_{i-1},\tau_i\},\max\{\tau_{i-1},\tau_i\}]$. The integration $\int_{\tau_{i-1}}^{\tau_i}ds$ remains oriented, so that the direction of propagation is already accounted for by its integration limits. Suppose that two interaction-generated environment operators are produced in the $q$-th and $r$-th recursion steps. Their contraction contains an integral of the form
\begin{equation}
    \int_{\tau_{q-1}}^{\tau_q}\!ds\int_{\tau_{r-1}}^{\tau_r}\!du\, C_{jj'}(u-s)\,\hat F_{q,r}(s,u),
\end{equation}
where $\hat F_{q,r}(s,u)$ varies slowly on the bath-correlation timescale. If $I_q\cap I_r$ has nonzero length, the average time can range over the whole overlap, whereas the relative time is restricted to $|u-s|\lesssim\tau_E$. The contraction is therefore of order $\mathcal O(\tau_E)$. If the two intervals intersect only at a common endpoint, both integration variables are restricted to a boundary layer of width $\tau_E$, and the contraction is instead of order $\mathcal O(\tau_E^2)$. Finally, if the intervals are separated by more than a bath-correlation time, the contraction is negligible in the Markov limit.

Consequently, contractions between two interaction-generated environment operators must be retained at order $\mathcal O(\tau_E)$ whenever the corresponding recursion intervals overlap over a finite physical-time range. This modifies the truncation only in the $m=0$ sector, where the zero-vertex ordinary quantum-regression contribution must be supplemented by a two-vertex term of order $\mathcal O(g^2\tau_E)$. In the $m=1$ sector, Gaussian parity requires an odd number of generated environment operators, so the leading contribution still contains one such operator and is of order $\mathcal O(g\tau_E)$; the effect of the interval overlaps is instead to enlarge the set of recursion steps from which this operator may be generated. In the $m=2$ sector, the leading term remains the direct contraction of the two external environment operators, of order $\mathcal O(\tau_E)$, whereas a contribution containing two additional generated operators involves two bath correlations and is therefore of order $\mathcal O(g^2\tau_E^2)$. All sectors with $m\geq3$ likewise require at least two bath correlations and begin at order $\mathcal O(\tau_E^2)$.

Let us now compare this counting with the error introduced by the Born approximation. Expanding the Volterra equation for $\mathcal E(t,t')$, the contribution generated from the initial correlation operator begins with
\begin{equation}\label{eq:born_error_first}
    \mathcal E_{\Delta}^{[1]}(t,t') = -i\int_{t'}^t\!ds\,\mathbb Q\circ\operatorname{ad}_{\tilde V(s)} \left[\hat x_A\otimes\Delta_{\ell,r}[\hat\sigma_E]\right],
\end{equation}
This term contains one interaction vertex and is therefore formally of order $\mathcal O(g)$. A second class of contributions is obtained by applying the error kernel to the one-interaction term already contained in $\tilde\chi_B(t,t')$. To lowest order, this gives
\begin{equation}\label{eq:born_error_second}
    \mathcal E_B^{[2]}(t,t') = -\int_{t'}^t\!ds \int_{t'}^s\!du\,\mathbb Q\circ{\rm ad}_{\tilde V(s)} \circ\mathbb Q\circ{\rm ad}_{\tilde V(u)}\left[ \tilde\Phi_{u,t'}^{(\ell,r)}[\hat x_A]\otimes\hat\sigma_E \right]=\mathcal O(g^2).
\end{equation}
The nominal coupling order in Eq.~\eqref{eq:born_error_second} does not necessarily coincide with the first nonvanishing contribution to a given external-operator sector, since the generalized propagator $\tilde\Phi^{(\ell,r)}$ or the final Gaussian environment expectation may vanish at that order. Owing to the outer $\mathbb Q$ projector, both contributions remain in the $\mathbb Q$--sector and therefore have vanishing direct environment trace,
\begin{equation}
    \Tr_E\!\left[ \mathcal E_{\Delta}^{[1]}(t,t')\right] = \Tr_E\!\left[\mathcal E_B^{[2]}(t,t')\right] = 0.
\end{equation}
They can contribute to a multi-time expectation value only when subsequent external environment insertions or further interaction vertices produce a nonvanishing environment trace.

Consider first the $m=0$ sector. Since the boundary words are initially trivial, $\Delta_{1,1}[\hat\sigma_E]=0$, and hence the contribution $\mathcal E_{\Delta}^{[1]}$ is absent. The leading error is therefore generated by $\mathcal E_B^{[2]}$, which contains two interaction-generated environment operators but remains in the $\mathbb Q$ sector and consequently has vanishing direct environment trace. A nonzero contribution to the multi-time expectation value requires further interaction vertices. One additional vertex gives a centered Gaussian three-point function and vanishes, while two additional vertices produce the first potentially nonzero four-point expectation. The leading Born error in the $m=0$ sector therefore scales as
\begin{equation}
    \delta z_{m=0}^{\rm Born} = \mathcal O(g^4\tau_E^2).
\end{equation}
By contrast, the contraction of two interaction-generated operators on overlapping recursion intervals contributes at order $z_{m=0}^{[2]} = \mathcal O(g^2\tau_E)$. The two-vertex correction in the $m=0$ sector is therefore controlled at the working order retained here.

Consider next a word containing a single external environment operator. When this operator enters the boundary correlation operator $\Delta_{\ell,r}[\hat\sigma_E]$, the latter contains one environment operator. The first-order error $\mathcal E_{\Delta}^{[1]}$ introduces one additional interaction-generated environment operator and therefore contains two environment operators in total: one external and one interaction-generated operator. Nevertheless, its direct environment trace vanishes because of the outer $\mathbb Q$ projector. Since all subsequent external insertions act trivially on the environment, a nonzero contribution can arise only if further interaction vertices are generated. Adding one further interaction vertex introduces a third environment operator, giving a centered Gaussian three-point function, which vanishes. Adding two further interaction vertices instead produces four environment operators in total and therefore gives the first potentially nonvanishing Wick contractions. The first nonzero contribution generated from $\mathcal E_{\Delta}^{[1]}$ consequently contains three interaction-generated environment operators altogether and is of order $\mathcal O(g^3\tau_E^2)$.  

Applying a similar reasoning to $\mathcal E_B^{[2]}$, its nominal second-order contribution vanishes in the single-insertion sector. Indeed, when the boundary word contains one environment operator, the zero-vertex part of the generalized propagator is proportional to $\braket{\hat r_E\hat\ell_E}_{\hat\sigma_E} = 0$. The first nonzero part of $\tilde\Phi^{(\ell,r)}$ therefore requires one additional interaction vertex. Consequently, the first nonvanishing contribution generated from $\mathcal E_B^{[2]}$ also contains three interaction vertices and two environment correlation functions, thus it is also of order $\mathcal O(g^3\tau_E^2)$.
The total Born error in the single-insertion sector thus satisfies
\begin{equation}
    \delta z_{m=1}^{\rm Born} = \mathcal O(g^3\tau_E^2).
\end{equation}
By contrast, the leading single-insertion contribution contains one external and one interaction-generated environment operator and is of order $z_{m=1}^{\rm lead} = \mathcal O(g\tau_E)$. The leading $m=1$ contribution is therefore unaffected by the Born error at the order retained here.

For a word containing two external environment operators, the leading contribution is their direct contraction, $z_{m=2}^{\rm lead} = \mathcal O(\tau_E)$. A contribution containing a single interaction-generated environment operator has odd Gaussian parity and vanishes. The first nonvanishing Born correction therefore contains two interaction-generated operators and scales as
\begin{equation}
    \delta z_{m=2}^{\rm Born}
    =
    \mathcal O(g^2\tau_E^2).
\end{equation}
The leading two-insertion contribution is thus also unaffected by the Born error at order $\mathcal O(\tau_E)$.

The situation changes for a word containing three external environment operators. After one external operator has entered $\Delta_{\ell,r}[\hat\sigma_E]$, the first-order error $\mathcal E_{\Delta}^{[1]}$ contains this external operator and one interaction-generated operator. The two remaining external operators can then be inserted subsequently, producing a terminal expectation value containing four environment operators. Its Wick expansion contains two two-point correlation functions, and therefore
\begin{equation}
    \delta z_{m=3}^{\rm Born} = \mathcal O(g\tau_E^2).
\end{equation}
This is of the same order as the leading three-insertion contribution, $z_{m=3}^{\rm lead} = \mathcal O(g\tau_E^2)$. The leading coefficient of the $m=3$ sector therefore cannot be determined consistently from the Born-truncated recursion \eqref{eq:rec_tmp}.

The resulting order counting is summarized by
\begin{equation}\label{eq:born_error_order_table}
    \begin{array}{c|c|c}
        m & \text{retained contributions} & \delta z_m^{\rm Born} \\
        \hline
        0 &\mathcal O(1) + \mathcal O(g^2\tau_E) & \mathcal O(g^4\tau_E^2) \\
        1 & \mathcal O(g\tau_E) & \mathcal O(g^3\tau_E^2) \\
        2 & \mathcal O(\tau_E) & \mathcal O(g^2\tau_E^2) \\
        3 & \mathcal O(g\tau_E^2) & \mathcal O(g\tau_E^2)
    \end{array}
\end{equation}
Consequently, the $m=0$, $m=1$, and $m=2$ sectors are controlled through order $\mathcal O(\tau_E)$, whereas the leading $m=3$ contribution is already of the same order as the Born error. More generally, every sector with $m\geq3$ begins at order $\mathcal O(\tau_E^2)$ or higher and therefore lies beyond the accuracy retained here.

It follows that, when computing the work moments to first order in the explicit bath-correlation-time contributions, the sum over operator strings may be truncated as
\begin{equation}\label{eq:work_word_truncation}
    z_{k_1,\ldots,k_n} =
    \begin{cases}
        z_{k_1,\ldots,k_n}^{[0]} + z_{k_1,\ldots,k_n}^{[2]} + \mathcal O(\tau_E^2), & m=0,\\
        z_{k_1,\ldots,k_n}^{[1]} + \mathcal O(\tau_E^2), & m=1,\\
        z_{k_1,\ldots,k_n}^{[0]} + \mathcal O(\tau_E^2), & m=2,\\
        \mathcal O(\tau_E^2), & m\geq3.
    \end{cases}
\end{equation}
Here, the superscript $[p]$ denotes the contribution containing exactly $p$ additional interaction vertices generated by the recursion. The $m=0$ sector contains the zero-vertex quantum regression contribution and, for overlapping recursion intervals, a two-vertex correction. No interaction-generated operator is required for the leading $m=2$ contribution, whereas precisely one is required for the leading $m=1$ contribution.

To implement this truncation, we use the parity-dependent form of the generalized propagator derived above. At a given recursion step, let $n_\ell+n_r$ denote the number of environment operators contained in the boundary words $\hat\ell_E$ and $\hat r_E$. Since the environment is centered and Gaussian, the two terms in Eq.~\eqref{eq:phi_lr} cannot contribute simultaneously. If $n_\ell+n_r$ is even, then
\begin{equation}\label{eq:phi_lr_even_truncation}
    \tilde\Phi_{t,t'}^{(\ell,r)}[\hat x_A] = \braket{\hat r_E\hat\ell_E}_{\hat\sigma_E}\tilde\Phi_{t,t'}[\hat x_A], \qquad n_\ell+n_r\!\!\mod 2=0.
\end{equation}
Conversely, if $n_\ell+n_r$ is odd, then
\begin{equation}\label{eq:phi_lr_odd_truncation}
    \tilde\Phi_{t,t'}^{(\ell,r)}[\hat x_A] = -i\sum_j\int_{t'}^t\!ds\, \lambda_j(s)\braket{\hat r_E\tilde E_j(s)\hat\ell_E}_{\hat\sigma_E} \tilde\Phi_{t,s}\left[[\tilde A_j(s),\hat x_A]\right], \qquad n_\ell+n_r\!\!\mod 2=1.
\end{equation}
For later convenience, we denote the right-hand side of Eq.~\eqref{eq:phi_lr_odd_truncation} by
\begin{equation}\label{eq:odd_phi_source}
    \mathcal J_{t,t'}^{(\ell,r)}[\hat x_A] =-i\sum_j \int_{t'}^t\!ds\,\lambda_j(s) \braket{\hat r_E\tilde E_j(s)\hat\ell_E}_{\hat\sigma_E}\tilde\Phi_{t,s} \left[[\tilde A_j(s),\hat x_A]\right],
\end{equation}
when $n_\ell+n_r$ is odd, and set $\mathcal J_{t,t'}^{(\ell,r)}[\hat x_A] = 0$ when $n_\ell+n_r$ is even.\\
To simplify the recursion notation, we write
\begin{equation}
    \tau_i=t_{k_i}, \qquad \hat X_i=\tilde X_{k_i}(\tau_i), \qquad \hat Y_i=\tilde Y_{k_i}(\tau_i), \qquad \tau_0=t_0,
\end{equation}
and introduce
\begin{equation}\label{eq:mathsf_z_definition}
    \mathsf z_i^{[p]} [\hat O_A,\hat L_E,\hat R_E] = z_{k_i,\ldots,k_n}^{[p]} [\hat O_A,\hat L_E,\hat R_E](\tau_{i-1}),
\end{equation}
where the superscript $[p]$ denotes the contribution containing exactly $p$ interaction-generated environment operators from the $i$-th recursion step onward. Whenever $\hat L_E=\mathbbm 1_E$, we use the shorthand
\begin{equation}
    \mathsf z_i^{[p]}[\hat O_A,\hat R_E] = \mathsf z_i^{[p]}[\hat O_A,\mathbbm 1_E,\hat R_E].
\end{equation}
By construction, from \eqref{eq:rec_tmp} it follows the following recursion rule (in $i$ and $p$)
\begin{align}
    \mathsf z_i^{[p]}[\hat O_A,\hat L_E,\hat R_E] \simeq&~ \mathsf z_{i+1}^{[p]}[\hat O_A\hat X_i,\hat L_E,\hat R_E\hat Y_i]
    \nonumber\\
    &+ \braket{\hat R_E\hat L_E}_{\hat\sigma_E}
    \left(\mathsf z_{i+1}^{[p]}[\tilde\Phi_{\tau_i,\tau_{i-1}}[\hat O_A]\hat X_i,\hat Y_i] - \mathsf z_{i+1}^{[p]}[\hat O_A\hat X_i,\hat Y_i]\right)
    \nonumber\\
    &+
    \mathsf z_{i+1}^{[p-1]}[\mathcal J_{\tau_i,\tau_{i-1}}^{(L,R)}[\hat O_A]\hat X_i,\hat Y_i]
    \nonumber\\
    &-i\braket{\hat R_E\hat L_E}_{\hat\sigma_E}\sum_j\int_{\tau_{i-1}}^{\tau_i}\!ds\,\mathsf z_{i+1}^{[p-1]}[\lambda_j(s)\tilde A_j(s)\tilde\Phi_{s,\tau_{i-1}}[\hat O_A]\hat X_i,\tilde E_j(s),\hat Y_i]
    \nonumber\\
    &+i\braket{\hat R_E\hat L_E}_{\hat\sigma_E}\sum_j\int_{\tau_{i-1}}^{\tau_i}\!ds\,\mathsf z_{i+1}^{[p-1]}[\tilde\Phi_{s,\tau_{i-1}}[\hat O_A]\lambda_j(s)\tilde A_j(s)\hat X_i,\tilde E_j(s)\hat Y_i],
    \label{eq:recursion}
\end{align}
where we loose one order of $p$ whenever we contract a bath vertex contribution (i.e. originating internally to the recursion formula, not from the external $\hat Y_i$ terms). The terminal conditions are
\begin{align}
    \mathsf z_{n+1}^{[0]}[\hat O_A,\hat L_E,\hat R_E] &= \Tr_A[\hat O_A]\braket{\hat R_E\hat L_E}_{\hat\sigma_E},
    \label{eq:terminal_zero_vertex}\\
    \mathsf z_{n+1}^{[p]}[\hat O_A,\hat L_E,\hat R_E] &= 0, \qquad p\geq1.
    \label{eq:terminal_one_vertex}
\end{align}
The first two contributions in Eq.~\eqref{eq:recursion} propagate the branch without generating the remaining interaction-generated environment operator. The third term generates it through the odd-parity contribution $\mathcal J_{\tau_i,\tau_{i-1}}^{(L,R)}$ to the generalized propagator and is therefore nonzero only when the boundary word contains an odd number of environment operators. The last two terms instead generate it explicitly through the Born correlation term and contribute only when $\braket{\hat R_E\hat L_E}_{\hat\sigma_E}\neq0$, which requires the boundary word to contain an even number of environment operators. In the $m=1$ sector, the last two terms generate the interaction operator before the external environment operator is encountered, when the boundary word is even. After the external operator has entered the boundary word, the latter is odd, and the remaining interaction operator can instead be generated through the $\mathcal J$ term. The same recursion will later be used in the $m=0$ two-vertex sector. There, the first interaction-generated environment operator is produced explicitly by one of the last two terms of Eq.~\eqref{eq:rec_tmp}. The resulting boundary word contains one environment operator and is therefore odd. From that point onward, the branch is described by Eq.~\eqref{eq:recursion}, and the second interaction-generated operator is supplied by the $\mathcal J$ term. Once the required interaction-generated operators have been produced, the remaining steps are governed by the zero-vertex recursion.

We first consider the contribution containing no interaction-generated environment operator. The last two terms of Eq.~\eqref{eq:recursion} are then absent, while the generalized propagator contributes only through Eq.~\eqref{eq:phi_lr_even_truncation}. The resulting recursion is
\begin{equation}\label{eq:zero_vertex_recursion}
    \mathsf z_i^{[0]}[\hat O_A,\hat L_E,\hat R_E] \simeq \mathsf z_{i+1}^{[0]}[\hat O_A\hat X_i,\hat L_E,\hat R_E\hat Y_i] + \braket{\hat R_E\hat L_E}_{\hat\sigma_E}\left(\mathsf z_{i+1}^{[0]}[\tilde\Phi_{\tau_i,\tau_{i-1}}[\hat O_A]\hat X_i,\hat Y_i] - \mathsf z_{i+1}^{[0]}[\hat O_A\hat X_i,\hat Y_i]\right).
\end{equation}
In particular, when $\hat L_E=\mathbbm 1_E$, this becomes
\begin{equation}\label{eq:zero_vertex_recursion_short}
    \mathsf z_i^{[0]}[\hat O_A\hat R_E] \simeq \mathsf z_{i+1}^{[0]}[\hat O_A\hat X_i,\hat R_E\hat Y_i] + \braket{\hat R_E}_{\hat\sigma_E}\left(\mathsf z_{i+1}^{[0]}[\tilde\Phi_{\tau_i,\tau_{i-1}}[\hat O_A]\hat X_i,\hat Y_i] - \mathsf z_{i+1}^{[0]}[\hat O_A\hat X_i,\hat Y_i]\right).
\end{equation}
Equation~\eqref{eq:zero_vertex_recursion_short} determines the zero-vertex contributions in the $m=0$ and $m=2$ sectors. For $m=0$, it gives the ordinary quantum-regression contribution $z^{[0]}$, while for $m=2$ it gives the leading direct contraction between the two external environment operators. For $m=0$, one has $\hat Y_i=\mathbbm 1_E$ for every $i$, and the environment word remains equal to $\mathbbm 1_E$. The first term in Eq.~\eqref{eq:zero_vertex_recursion_short} then cancels the negative term inside the parentheses, giving
\begin{equation}\label{eq:QRT}
    \mathsf z_i^{[0]}[\hat O_A,\mathbbm 1_E]\simeq \mathsf z_{i+1}^{[0]}[\tilde\Phi_{\tau_i,\tau_{i-1}}[\hat O_A]\hat X_i,\mathbbm 1_E],
\end{equation}
which is the zero-vertex contribution to the $m=0$ sector and coincides with the ordinary quantum regression theorem.

For $m=2$, the recursion carries the first external environment operator in the residual word $\hat R_E$ until the second external operator is reached through the first term. While $\hat R_E$ contains a single environment operator, $\braket{\hat R_E}_{\hat\sigma_E}=0$, and only the first term of Eq.~\eqref{eq:zero_vertex_recursion_short} contributes. The system operators between the two external insertions are therefore multiplied without an intervening reduced propagator. Once the second external operator is inserted, the terminal environment expectation produces their direct two-point correlation function. No interaction-generated environment operator is required at this order.

We have therefore reduced the computation of the work moments, up to corrections of order $\mathcal O(\tau_E^2)$, to the following three sectors:
\begin{equation}
    \begin{aligned}
        m=0: &\qquad \text{ordinary quantum regression }(p=0)  + \text{two-vertex contraction } (p=2), \\
        m=1: &\qquad \text{one interaction-generated environment operator } (p=1)\\
        m=2: &\qquad \text{one direct contraction between the two external operators } (p=0).
    \end{aligned}
\end{equation}
All strings containing three or more insertions of $\partial_t\hat V$ contribute only at order $\mathcal O(\tau_E^2)$ or higher and are therefore discarded at the accuracy retained here.

We now solve explicitly the contributions retained above. The zero-vertex contribution to the $m=0$ sector has already been obtained and coincides with the ordinary quantum regression theorem. We first solve the zero-vertex $m=2$ sector and then the one-vertex $m=1$ sector. The latter result will subsequently be used to obtain the remaining two-vertex contribution to the $m=0$ sector.
\begin{equation}\label{eq:regression_prefix}
    \hat S_0 = \hat x_A, \qquad \hat S_i = \tilde\Phi_{\tau_i,\tau_{i-1}}[\hat S_{i-1}]\hat X_i, \qquad i=1,\ldots,n,
\end{equation}
and the future regression functional
\begin{equation}\label{eq:future_regression}
    \mathcal F_q[\hat O_A]= \Tr_A\!\left[\left(\mathsf R_{\hat X_n}\!\circ \tilde\Phi_{\tau_n,\tau_{n-1}}\right) \circ\cdots\circ \left(\mathsf R_{\hat X_{q+1}}\!\circ\tilde\Phi_{\tau_{q+1},\tau_q}\right) \![\hat O_A]\right],
\end{equation}
where
\begin{equation}
    \mathsf R_{\hat X}[\hat O_A] = \hat O_A\hat X, \qquad \mathcal F_n[\hat O_A] = \Tr_A[\hat O_A].
\end{equation}
In this notation, the zero-vertex contribution to the $m=0$ sector is
\begin{equation}\label{eq:solution_m_zero}
    z_{k_1,\ldots,k_n}^{[0]}[\hat x_A,\mathbbm 1_E,\mathbbm 1_E](t_0) \simeq \Tr_A[\hat S_n].
\end{equation}
It is useful to solve the $m=2$ sector before the $m=1$ sector, since the zero-vertex tails generated in the latter are precisely of the form encountered in the former. Suppose that the two external environment operators occur at positions $a<b$,
\begin{equation}
    \hat Y_a=\tilde E_{\alpha}(\tau_a), \qquad \hat Y_b=\tilde E_{\beta}(\tau_b), \qquad \hat Y_i=\mathbbm 1_E \quad \text{for} \quad i\neq a,b.
\end{equation}
Before the first external insertion, repeated application of Eq.~\eqref{eq:zero_vertex_recursion_short} gives the ordinary regression prefix,
\begin{equation}
    \mathsf z_1^{[0]}[\hat x_A,\mathbbm 1_E] \simeq \mathsf z_a^{[0]} [\hat S_{a-1},\mathbbm 1_E].
\end{equation}
At the $a$-th step, the first and third terms of Eq.~\eqref{eq:zero_vertex_recursion_short} cancel, yielding
\begin{equation}
    \mathsf z_a^{[0]}[\hat S_{a-1},\mathbbm 1_E] \simeq \mathsf z_{a+1}^{[0]}[\hat S_a,\tilde E_{\alpha}(\tau_a)].
\end{equation}
Since the residual environment word now contains a single centered environment operator, its expectation value vanishes. Hence, until the second external insertion is reached, only the first term of Eq.~\eqref{eq:zero_vertex_recursion_short} contributes:
\begin{equation}
    \mathsf z_{a+1}^{[0]}[\hat S_a,\tilde E_{\alpha}(\tau_a)] \simeq \mathsf z_b^{[0]}[\hat S_a\hat X_{a+1}\cdots\hat X_{b-1},\tilde E_{\alpha}(\tau_a)].
\end{equation}
The insertion at position $b$ closes the environment contraction,
\begin{equation}
    \mathsf z_b^{[0]}[\hat S_a\hat X_{a+1}\cdots\hat X_{b-1},\tilde E_{\alpha}(\tau_a)] \simeq \mathsf z_{b+1}^{[0]}[\hat S_a\hat X_{a+1}\cdots\hat X_b,\tilde E_{\alpha}(\tau_a)\tilde E_{\beta}(\tau_b)].
\end{equation}
After the second external insertion, the residual environment word is $\hat R_{ab} = \tilde E_{\alpha}(\tau_a) \tilde E_{\beta}(\tau_b)$. Since $\hat Y_i=\mathbbm 1_E$ for every $i>b$, the remaining recursion takes the form
\begin{equation}\label{eq:closed_pair_recursion}
    \mathsf z_i^{[0]}[\hat O_A,\hat R_{ab}] \simeq \mathsf z_{i+1}^{[0]}[\hat O_A\hat X_i,\hat R_{ab}] + \braket{\hat R_{ab}}_{\hat\sigma_E}\mathsf z_{i+1}^{[0]}[\tilde\Phi_{\tau_i,\tau_{i-1}}[\hat O_A]\hat X_i,\mathbbm 1_E] - \braket{\hat R_{ab}}_{\hat\sigma_E}\mathsf z_{i+1}^{[0]}[\hat O_A\hat X_i,\mathbbm 1_E].
\end{equation}
The first term in Eq.~\eqref{eq:closed_pair_recursion} still carries the two-operator residual word and must therefore be iterated further. Upon applying the recursion to this term at the next step, it generates the positive contribution
\begin{equation}
    \braket{\hat R_{ab}}_{\hat\sigma_E}\mathsf z_{i+2}^{[0]}[\tilde\Phi_{\tau_{i+1},\tau_i}[\hat O_A\hat X_i]\hat X_{i+1},\mathbbm 1_E].
\end{equation}
By the $m=0$ solution, this is precisely
\begin{equation}
    \braket{\hat R_{ab}}_{\hat\sigma_E}\mathsf z_{i+1}^{[0]}[\hat O_A\hat X_i,\mathbbm 1_E],
\end{equation}
and hence cancels the last term of Eq.~\eqref{eq:closed_pair_recursion}. The same cancellation occurs successively at every remaining recursion step. The resulting telescopic expansion leaves only
\begin{equation}
    \mathsf z_{b+1}^{[0]}[\hat O_A,\hat R_{ab}] \simeq \mathsf z_{n+1}^{[0]}[\hat O_A\hat X_{b+1}\cdots\hat X_n,\hat R_{ab}] + \braket{\hat R_{ab}}_{\hat\sigma_E}\mathsf z_{b+2}^{[0]}[\tilde\Phi_{\tau_{b+1},\tau_b}[\hat O_A]\hat X_{b+1},\mathbbm 1_E] - \braket{\hat R_{ab}}_{\hat\sigma_E}\mathsf z_{n+1}^{[0]}[\hat O_A\hat X_{b+1}\cdots\hat X_n,\mathbbm 1_E].
\end{equation}
The first and last terms cancel by the terminal condition. Therefore, using the explicit $m=0$ solution for the remaining tail, we obtain
\begin{equation}
    \mathsf z_{b+1}^{[0]}[\hat O_A,\hat R_{ab}] \simeq \braket{\hat R_{ab}}_{\hat\sigma_E}\mathcal F_b[\hat O_A].
    \label{eq:closed_pair_solution}
\end{equation}
Applying this result to $\hat O_A = \hat S_a\hat X_{a+1}\cdots\hat X_b$, the solution in the $m=2$ sector is therefore
\begin{equation}\label{eq:solution_m_two}
    z_{k_1,\ldots,k_n}[\hat x_A,\mathbbm 1_E,\mathbbm 1_E](t_0) \simeq C_{\alpha\beta}(\tau_a-\tau_b)\mathcal F_b[\hat S_a \hat X_{a+1}\cdots\hat X_b].
\end{equation}
Equivalently, writing both the future regression and the prefix
$\hat S_a$ explicitly,
\begin{equation}
    z_{k_1,\ldots,k_n}[\hat x_A,\mathbbm 1_E,\mathbbm 1_E](t_0) \simeq C_{\alpha\beta}(\tau_a-\tau_b)\Tr_A\!\big[\mathsf R_{\hat X_n}\tilde\Phi_{\tau_n,\tau_{n-1}}\cdots \mathsf R_{\hat X_{b+1}}\tilde\Phi_{\tau_{b+1},\tau_b}[\mathsf R_{\hat X_a}\tilde\Phi_{\tau_a,\tau_{a-1}}\cdots\mathsf R_{\hat X_1}\tilde\Phi_{\tau_1,t_0}[\hat x_A]\hat X_{a+1}\cdots\hat X_b] \big].
\end{equation}

Finally, we treat the $m=1$ case. Let the unique external environment operator occur at position $p$,
\begin{equation}
    \hat Y_p=\tilde E_{\alpha}(\tau_p), \qquad \hat Y_i=\mathbbm 1_E \quad\text{for}\quad i\neq p,
\end{equation}
Before the external insertion is reached, the boundary word is trivial. Thus, for $i<p$, one has $\hat L_E=\hat R_E=\hat Y_i=\mathbbm 1_E$ and $\mathcal J_{\tau_i,\tau_{i-1}}^{(L,R)}=0$. The first term of Eq.~\eqref{eq:recursion} then cancels the negative $\mathsf z^{[1]}$ term inside the parentheses. Consequently, the recursion reduces to
\begin{align}\label{eq:m_one_recursion_before_p}
    \mathsf z_i^{[1]}[\hat O_A,\mathbbm 1_E] \simeq&~ \mathsf z_{i+1}^{[1]}[\tilde\Phi_{\tau_i,\tau_{i-1}}[\hat O_A]\hat X_i,\mathbbm 1_E]
    \nonumber\\
    & -i\sum_j\int_{\tau_{i-1}}^{\tau_i}\!ds\,\mathsf z_{i+1}^{[0]}[\lambda_j(s)\tilde A_j(s)\tilde\Phi_{s,\tau_{i-1}}[\hat O_A]\hat X_i,\tilde E_j(s),\mathbbm 1_E]
    \nonumber\\
    & +i\sum_j\int_{\tau_{i-1}}^{\tau_i}\!ds\,\mathsf z_{i+1}^{[0]}[\tilde\Phi_{s,\tau_{i-1}}[\hat O_A]\lambda_j(s)\tilde A_j(s)\hat X_i,\tilde E_j(s)].
\end{align}
The first term carries the unresolved $\mathsf z^{[1]}$ branch to the next step and generates the ordinary regression prefix. The last two terms instead select the current interval as the one in which the interaction-generated environment operator is introduced. Once either of these source terms is selected, the remaining tail is governed by the zero-vertex recursion. Iterating the first term of Eq.~\eqref{eq:m_one_recursion_before_p} up to the $q$-th step gives the prefix $\hat S_{q-1}$. Hence, we obtain that at step $p+1$ we have
\begin{align}
    \mathsf z_{1}^{[1]}[\hat x_A,\mathbbm 1_E]\simeq&~ \mathsf z_{p+1}^{[1]}[\hat S_p,\tilde E_{\alpha}(\tau_p)] \\
    & -i\sum_{q=1}^p\sum_j\int_{\tau_{q-1}}^{\tau_q}\!ds\,\mathsf z_{q+1}^{[0]}[\lambda_j(s)\tilde A_j(s)\tilde\Phi_{s,\tau_{q-1}}[\hat S_{q-1}]\hat X_q,\tilde E_j(s),\hat Y_q]
    \nonumber\\
    & +i\sum_{q=1}^p\sum_j\int_{\tau_{q-1}}^{\tau_q}\!ds\,\mathsf z_{q+1}^{[0]}[\tilde\Phi_{s,\tau_{q-1}}[\hat S_{q-1}]\lambda_j(s)\tilde A_j(s)\hat X_q,\tilde E_j(s)\hat Y_q].
    \label{eq:m_one_before_mathsf}
\end{align}
With the $m=2$ solution we can treat the last two lines, for $q\leq p$ we have
\begin{align}\label{eq:m_two_left_tail}
    \mathsf z_{q+1}^{[0]}[\hat O_A,\tilde E_j(s),\hat Y_q] &\simeq \braket{\tilde E_{\alpha}(\tau_p)\tilde E_j(s)}_{\hat\sigma_E} \mathcal F_p[\hat O_A\hat X_{q+1}\cdots\hat X_p],\\
    \label{eq:m_two_right_tail}
    \mathsf z_{q+1}^{[0]}[\hat O_A,\tilde E_j(s)\hat Y_q] &\simeq \braket{\tilde E_j(s)\tilde E_{\alpha}(\tau_p)}_{\hat\sigma_E} \mathcal F_p[\hat O_A\hat X_{q+1}\cdots\hat X_p],
\end{align}
where the products $\hat X_{q+1}\cdots\hat X_p$ are understood as the identity for $q=p$. While the first term is obtained from the recursion Eq.~\eqref{eq:recursion}, now we have only the first and third line that survive because $\braket{\tilde E_{\alpha}(\tau_p)}_{\hat \sigma_E} = 0$, for $p < q \leq n$ we have
\begin{equation}
    \mathsf z_{q}^{[1]}[\hat O_A,\tilde E_{\alpha}(\tau_p)] \simeq \mathsf z_{q+1}^{[1]}[\hat O_A\hat X_{q},\tilde E_{\alpha}(\tau_p)] + \mathsf z_{q+1}^{[0]}[\mathcal J_{\tau_{q},\tau_{q-1}}^{(1,E)}[\hat O_A]\hat X_{q},\mathbbm 1_E].
    \label{eq:m_one_after_p}
\end{equation}
Where the first term carries through the unmatched external environment operator. And the second term corresponds to it being matched by the interaction vertex within $\tilde \Phi^{(L,R)}$, this term can be simply treated by $m=0$ solution, yielding
\begin{equation}
    \mathsf z_{q+1}^{[0]}[\mathcal J_{\tau_{q},\tau_{q-1}}^{(1,E)}[\hat O_A]\hat X_{q},\mathbbm 1_E] = \mathcal F_{q}[\mathcal J_{\tau_{q},\tau_{q-1}}^{(1,E)}[\hat O_A]\hat X_{q}].
\end{equation}
Instead, the first term eventually reaches the terminal condition $\mathsf z_{n+1}^{[1]}[\hat O_A,\hat R_E] = 0$. Recombining all the terms, we thus obtain
\begin{align}\nonumber
    \mathsf z_{1}^{[1]}[\hat x_A,\mathbbm 1_E]\simeq&~ \sum_{q=p+1}^n \mathcal F_q[\mathcal J_{\tau_{q},\tau_{q-1}}^{(1,E)}[\hat S_p \hat X_{p+1}\cdots\hat X_{q-1}]\hat X_{q}]\\
    & -i\sum_{q=1}^p\sum_j\int_{\tau_{q-1}}^{\tau_q}\!ds \braket{\tilde E_{\alpha}(\tau_p)\tilde E_j(s)}_{\hat\sigma_E} \mathcal F_p[\lambda_j(s)\tilde A_j(s)\tilde\Phi_{s,\tau_{q-1}}[\hat S_{q-1}]\hat X_q\cdots\hat X_p]
    \nonumber\\
    & +i\sum_{q=1}^p\sum_j\int_{\tau_{q-1}}^{\tau_q}\!ds \braket{\tilde E_j(s)\tilde E_{\alpha}(\tau_p)}_{\hat\sigma_E}\mathcal F_p[\tilde\Phi_{s,\tau_{q-1}}[\hat S_{q-1}]\lambda_j(s)\tilde A_j(s)\hat X_q\cdots\hat X_p],
\end{align}
where the product $\hat X_{p+1}\cdots\hat X_{q-1}$ is the identity for $q=p+1$. By expanding $\mathcal J$ we get
\begin{align}\nonumber
    \mathsf z_{1}^{[1]}[\hat x_A,\mathbbm 1_E]\simeq&~ -i\sum_{q=p+1}^n\sum_j\int_{\tau_{q-1}}^{\tau_q}\!ds\braket{\tilde E_{\alpha}(\tau_p)\tilde E_j(s)}_{\hat\sigma_E} \mathcal F_q[\tilde \Phi_{\tau_{q},s}[[\lambda_j(s)\tilde A_j(s),\hat S_p \hat X_{p+1}\cdots\hat X_{q-1}]]\hat X_{q}]\\
    & -i\sum_{q=1}^p\sum_j\int_{\tau_{q-1}}^{\tau_q}\!ds\braket{\tilde E_{\alpha}(\tau_p)\tilde E_j(s)}_{\hat\sigma_E} \mathcal F_p[\lambda_j(s)\tilde A_j(s)\tilde\Phi_{s,\tau_{q-1}}[\hat S_{q-1}]\hat X_q\cdots\hat X_p]
    \nonumber\\
    & +i\sum_{q=1}^p\sum_j\int_{\tau_{q-1}}^{\tau_q}\!ds \braket{\tilde E_j(s)\tilde E_{\alpha}(\tau_p)}_{\hat\sigma_E}\mathcal F_p[\tilde\Phi_{s,\tau_{q-1}}[\hat S_{q-1}]\lambda_j(s)\tilde A_j(s)\hat X_q\cdots\hat X_p].
\end{align}

The bath correlation functions localize the interaction time $s$ to a neighborhood of $\tau_p$ of width $\tau_E$. This localization must be applied over the complete operator-time
hypercube, since no ordering is imposed on
$\tau_1,\ldots,\tau_n$. Using the physical recursion intervals
$I_q = [\min\{\tau_{q-1},\tau_q\},\max\{\tau_{q-1},\tau_q\}]$, we introduce the sets
\begin{align}
\label{eq:out_of_order_sets}
    \mathcal O^{(p)}_{<}(\tau_p)
    &=
    \left\{
        q\in\{1,\ldots,p-1\}
        \,\middle|\,
        \tau_p\in I_q
    \right\},
    \\
    \mathcal O^{(p)}_{>}(\tau_p)
    &=
    \left\{
        q\in\{p+2,\ldots,n\}
        \,\middle|\,
        \tau_p\in I_q
    \right\}.
\end{align}
The set $\mathcal O^{(p)}_{<}(\tau_p)$ contains the nonadjacent recursion intervals generated before the external environment operator enters the boundary word whose physical-time interval contains $\tau_p$. Similarly, $\mathcal O^{(p)}_{>}(\tau_p)$ contains the nonadjacent recursion intervals generated after the external insertion whose physical-time interval contains $\tau_p$. We also define $\epsilon_q = {\rm sgn}(\tau_q-\tau_{q-1})$, which keeps track of the orientation already present in the integral $\int_{\tau_{q-1}}^{\tau_q}ds$.

In a chronologically ordered region of the hypercube, both
$\mathcal O^{(p)}_{<}(\tau_p)$ and $\mathcal O^{(p)}_{>}(\tau_p)$ are empty.
The usual ordered-time argument then shows that only the adjacent
intervals $q=p$ and $q=p+1$ contribute at order
$\mathcal O(\tau_E)$. Indeed, a nonadjacent contribution can remain
non-negligible only if the intermediate ordered times are confined
to a window of width $\mathcal O(\tau_E)$, which supplies at least
one additional power of $\tau_E$.

Over the complete hypercube, however, the sets
$\mathcal O^{(p)}_{<}(\tau_p)$ and $\mathcal O^{(p)}_{>}(\tau_p)$ need not be
empty. Whenever
$q\in\mathcal O^{(p)}_{<}(\tau_p)\cup\mathcal O^{(p)}_{>}(\tau_p)$, the
correlation peak $s=\tau_p$ lies in the interior of the corresponding
recursion interval. The associated contribution is then of order
$\mathcal O(\tau_E)$ even though the interval is nonadjacent.
A nonadjacent interval that does not contain $\tau_p$ contributes
only when one of its endpoints lies within a distance of order
$\tau_E$ from $\tau_p$. Since this condition restricts one of the
outer hypercube integrations to a region of width
$\mathcal O(\tau_E)$, such boundary contributions are of order
$\mathcal O(\tau_E^2)$ and may be discarded.

Consequently, the leading single-insertion contribution is
\begin{align}
\label{eq:m_one_adjacent_and_OO}
    \mathsf z_{1}^{[1]}
    [\hat x_A,\mathbbm 1_E]
    \simeq{}&
    -i\sum_j
    \int_{\tau_p}^{\tau_{p+1}}\!ds\,
    C_{j\alpha}^{*}(s-\tau_p)
    \mathcal F_{p+1}\!\left[
        \tilde\Phi_{\tau_{p+1},s}
        \left[
            \left[
                \lambda_j(s)\tilde A_j(s),
                \hat S_p
            \right]
        \right]
        \hat X_{p+1}
    \right]
    \nonumber\\
    &-
    i\sum_j
    \int_{\tau_{p-1}}^{\tau_p}\!ds\,
    C_{\alpha j}(\tau_p-s)
    \mathcal F_p\!\left[
        \lambda_j(s)\tilde A_j(s)
        \tilde\Phi_{s,\tau_{p-1}}[\hat S_{p-1}]
        \hat X_p
    \right]
    \nonumber\\
    &+
    i\sum_j
    \int_{\tau_{p-1}}^{\tau_p}\!ds\,
    C_{\alpha j}^{*}(\tau_p-s)
    \mathcal F_p\!\left[
        \tilde\Phi_{s,\tau_{p-1}}[\hat S_{p-1}]
        \lambda_j(s)\tilde A_j(s)
        \hat X_p
    \right]
    \nonumber\\
    &-
    i\sum_{q\in\mathcal O^{(p)}_{>}(\tau_p)}
    \sum_j
    \int_{\tau_{q-1}}^{\tau_q}\!ds\,
    C_{j\alpha}^{*}(s-\tau_p)
    \mathcal F_q\!\left[
        \tilde\Phi_{\tau_q,s}
        \left[
            \left[
                \lambda_j(s)\tilde A_j(s),
                \hat S_p
                \hat X_{p+1}\cdots\hat X_{q-1}
            \right]
        \right]
        \hat X_q
    \right]
    \nonumber\\
    &-
    i\sum_{q\in\mathcal O^{(p)}_{<}(\tau_p)}
    \sum_j
    \int_{\tau_{q-1}}^{\tau_q}\!ds\,
    C_{\alpha j}(\tau_p-s)
    \mathcal F_p\!\left[
        \lambda_j(s)\tilde A_j(s)
        \tilde\Phi_{s,\tau_{q-1}}[\hat S_{q-1}]
        \hat X_q\cdots\hat X_p
    \right]
    \nonumber\\
    &+
    i\sum_{q\in\mathcal O^{(p)}_{<}(\tau_p)}
    \sum_j
    \int_{\tau_{q-1}}^{\tau_q}\!ds\,
    C_{\alpha j}^{*}(\tau_p-s)
    \mathcal F_p\!\left[
        \tilde\Phi_{s,\tau_{q-1}}[\hat S_{q-1}]
        \lambda_j(s)\tilde A_j(s)
        \hat X_q\cdots\hat X_p
    \right]
    +
    \mathcal O(\tau_E^2).
\end{align}
The first three terms are generated by the two intervals adjacent to
the external insertion. The remaining three terms are the additional
out-of-order contributions associated with
$\mathcal O^{(p)}_{>}(\tau_p)$ and $\mathcal O^{(p)}_{<}(\tau_p)$.

For the adjacent intervals, the Markov approximation gives the
one-sided maps
\begin{align}
\label{eq:K_R_maps}
    \tilde{\mathcal K}_{\alpha,t}[\hat O_A]
    &=
    -i\sum_{j\omega}\lambda_j(t)
    \left(
        \Gamma_{\alpha j}(\omega)
        \tilde A_j(\omega)\hat O_A
        -
        \Gamma_{\alpha j}^{*}(\omega)
        \hat O_A\tilde A_j^\dagger(\omega)
    \right),
    \\
    \tilde{\mathcal R}_{\alpha,t}[\hat O_A]
    &=
    -i\sum_{j\omega}\lambda_j(t)
    \Gamma_{j\alpha}^{*}(\omega)
    [\tilde A_j(\omega),\hat O_A].
\end{align}
For an out-of-order interval, the correlation peak lies in the
interior of the integration domain, and the Markov approximation
samples the full two-sided bath correlation function. We define the corresponding out-of-order maps
\begin{align}
\label{eq:K_R_OO_maps}
    \tilde{\mathcal K}_{\alpha,t}^{\mathrm{OO}}[\hat O_A]
    &=
    -i\sum_{j\omega}\lambda_j(t)
    \left(
        \gamma_{\alpha j}(\omega)
        \tilde A_j(\omega)\hat O_A
        -
        \gamma_{\alpha j}^{*}(\omega)
        \hat O_A\tilde A_j^\dagger(\omega)
    \right),
    \\
    \tilde{\mathcal R}_{\alpha,t}^{\mathrm{OO}}[\hat O_A]
    &=
    -i\sum_{j\omega}\lambda_j(t)
    \gamma_{j\alpha}^{*}(\omega)
    [\tilde A_j(\omega),\hat O_A].
\end{align}
Applying the Markov approximation to
Eq.~\eqref{eq:m_one_adjacent_and_OO}, we obtain the complete solution
for the $m=1$ sector,
\begin{align}
\label{eq:solution_m_one}
    \mathsf z_{1}^{[1]}
    [\hat x_A,\mathbbm 1_E]
    \simeq{}&
    \mathcal F_p\!\left[
        \tilde{\mathcal K}_{\alpha,\tau_p}
        \!\left[
            \tilde\Phi_{\tau_p,\tau_{p-1}}
            [\hat S_{p-1}]
        \right]
        \hat X_p
        +
        \tilde{\mathcal R}_{\alpha,\tau_p}[\hat S_p]
    \right]
    \nonumber\\
    &+
    \sum_{q\in\mathcal O^{(p)}_{<}(\tau_p)}
    \epsilon_q\,
    \mathcal F_p\!\left[
        \tilde{\mathcal K}_{\alpha,\tau_p}^{\mathrm{OO}}
        \!\left[
            \tilde\Phi_{\tau_p,\tau_{q-1}}
            [\hat S_{q-1}]
        \right]
        \hat X_q\cdots\hat X_p
    \right]
    \nonumber\\
    &+
    \sum_{q\in\mathcal O^{(p)}_{>}(\tau_p)}
    \epsilon_q\,
    \mathcal F_q\!\left[
        \tilde\Phi_{\tau_q,\tau_p}
        \!\left[
            \tilde{\mathcal R}_{\alpha,\tau_p}^{\mathrm{OO}}
            \!\left[
                \hat S_p
                \hat X_{p+1}\cdots\hat X_{q-1}
            \right]
        \right]
        \hat X_q
    \right]
    +
    \mathcal O(\tau_E^2).
\end{align}
The first line is the local contribution that survives in a
chronologically ordered region. The second and third lines are the
additional out-of-order terms. They vanish when
$\mathcal O^{(p)}_{<}(\tau_p)=\mathcal O^{(p)}_{>}(\tau_p)=\varnothing$, but are
generally present when the operator times are integrated over the
complete hypercube.

Equivalently, writing the future regression and the ordinary
regression prefixes explicitly, the complete $m=1$ solution reads
\begin{multline}
\label{eq:solution_m_one_explicit}
    \mathsf z_{1}^{[1]}[\hat x_A,\mathbbm 1_E]\simeq
    \Tr_A\!\Big[
        \mathsf R_{\hat X_n}
        \tilde\Phi_{\tau_n,\tau_{n-1}}
        \cdots
        \mathsf R_{\hat X_{p+1}}
        \tilde\Phi_{\tau_{p+1},\tau_p}
        \left(
            \mathsf R_{\hat X_p}\tilde{\mathcal K}_{\alpha,\tau_p}
            +
            \tilde{\mathcal R}_{\alpha,\tau_p}\mathsf R_{\hat X_p}
        \right)
        \tilde\Phi_{\tau_p,\tau_{p-1}}
        \mathsf R_{\hat X_{p-1}}
        \tilde\Phi_{\tau_{p-1},\tau_{p-2}}
        \cdots
        \mathsf R_{\hat X_1}
        \tilde\Phi_{\tau_1,t_0}
        [\hat x_A]
    \Big]\\
    +
    \sum_{q\in\mathcal O^{(p)}_{<}(\tau_p)}\!\!\!
    \epsilon_q\,
    \Tr_A\!\Big[
        \mathsf R_{\hat X_n}
        \tilde\Phi_{\tau_n,\tau_{n-1}}
        \cdots
        \mathsf R_{\hat X_{p+1}}
        \tilde\Phi_{\tau_{p+1},\tau_p}
        \mathsf R_{\hat X_p}\cdots\mathsf R_{\hat X_q}
        \tilde{\mathcal K}_{\alpha,\tau_p}^{\mathrm{OO}}
        \tilde\Phi_{\tau_p,\tau_{q-1}}
        \mathsf R_{\hat X_{q-1}}
        \tilde\Phi_{\tau_{q-1},\tau_{q-2}}
        \cdots
        \mathsf R_{\hat X_1}
        \tilde\Phi_{\tau_1,t_0}
        [\hat x_A]
    \Big]\\
    +
    \sum_{q\in\mathcal O^{(p)}_{>}(\tau_p)}\!\!\!
    \epsilon_q\,
    \Tr_A\!\Big[
        \mathsf R_{\hat X_n}
        \tilde\Phi_{\tau_n,\tau_{n-1}}
        \cdots
        \mathsf R_{\hat X_{q+1}}
        \tilde\Phi_{\tau_{q+1},\tau_q}
        \mathsf R_{\hat X_q}
        \tilde\Phi_{\tau_q,\tau_p}
        \tilde{\mathcal R}_{\alpha,\tau_p}^{\mathrm{OO}}
        \mathsf R_{\hat X_{q-1}}\cdots\mathsf R_{\hat X_p}
        \tilde\Phi_{\tau_p,\tau_{p-1}}
        \mathsf R_{\hat X_{p-1}}
        \tilde\Phi_{\tau_{p-1},\tau_{p-2}}
        \cdots
        \mathsf R_{\hat X_1}
        \tilde\Phi_{\tau_1,t_0}
        [\hat x_A]
    \Big].
\end{multline}

Finally, we return to the $m=0$ sector and compute the contribution
containing exactly two interaction-generated environment operators.
This is the leading correction to the ordinary quantum regression
theorem. Since there are no external environment insertions, one has
$\hat Y_i=\mathbbm 1_E$ for every $i$. Before either interaction
operator is generated, the boundary word is trivial and the
inhomogeneous contribution
$\mathcal J^{(L,R)}_{\tau_i,\tau_{i-1}}$ vanishes. Selecting from
Eq.~\eqref{eq:recursion} the terms containing exactly two generated
environment operators therefore gives
\begin{align}
\label{eq:two_vertex_recursion_m_zero}
    \mathsf z_i^{[2]}[\hat O_A,\mathbbm 1_E]
    \simeq{}&
    \mathsf z_{i+1}^{[2]}
    \!\left[
        \tilde\Phi_{\tau_i,\tau_{i-1}}[\hat O_A]\hat X_i,
        \mathbbm 1_E
    \right]
    \nonumber\\
    &-
    i\sum_j\int_{\tau_{i-1}}^{\tau_i}\!ds\,
    \mathsf z_{i+1}^{[1]}
    \!\left[
        \lambda_j(s)\tilde A_j(s)
        \tilde\Phi_{s,\tau_{i-1}}[\hat O_A]\hat X_i,
        \tilde E_j(s),
        \mathbbm 1_E
    \right]
    \nonumber\\
    &+
    i\sum_j\int_{\tau_{i-1}}^{\tau_i}\!ds\,
    \mathsf z_{i+1}^{[1]}
    \!\left[
        \tilde\Phi_{s,\tau_{i-1}}[\hat O_A]
        \lambda_j(s)\tilde A_j(s)\hat X_i,
        \mathbbm 1_E,
        \tilde E_j(s)
    \right].
\end{align}
The first term carries the unresolved two-vertex branch to the next recursion step and generates the ordinary regression prefix. The last two terms select the current recursion interval as the interval in which the first interaction-generated environment operator is introduced.

For later convenience, define
\begin{align}
\label{eq:first_generated_left_right}
    \hat B_{j,q}^{\rm L}(s)
    &=
    \lambda_j(s)\tilde A_j(s)
    \tilde\Phi_{s,\tau_{q-1}}[\hat S_{q-1}]
    \hat X_q,
    \\
    \hat B_{j,q}^{\rm R}(s)
    &=
    \tilde\Phi_{s,\tau_{q-1}}[\hat S_{q-1}]
    \lambda_j(s)\tilde A_j(s)
    \hat X_q.
\end{align}
Iterating the first term of Eq.~\eqref{eq:two_vertex_recursion_m_zero} until the the stopping condition yields
\begin{align}
\label{eq:first_vertex_m_zero}
    \mathsf z_1^{[2]}[\hat x_A,\mathbbm 1_E]
    \simeq{}&
    -i\sum_{q=1}^{n-1}\sum_j
    \int_{\tau_{q-1}}^{\tau_q}\!ds\,
    \mathsf z_{q+1}^{[1]}
    \!\left[
        \hat B_{j,q}^{\rm L}(s),
        \tilde E_j(s),
        \mathbbm 1_E
    \right]
    \nonumber\\
    &+
    i\sum_{q=1}^{n-1}\sum_j
    \int_{\tau_{q-1}}^{\tau_q}\!ds\,
    \mathsf z_{q+1}^{[1]}
    \!\left[
        \hat B_{j,q}^{\rm R}(s),
        \mathbbm 1_E,
        \tilde E_j(s)
    \right].
\end{align}
The upper limit $n-1$ follows because of applying the stopping condition to $\mathsf z^{[1]}_{q+1}$ at $q=n$.

After the first interaction operator has been generated, the boundary word contains one centered environment operator. Hence, $\braket{\hat R_E\hat L_E}_{\hat\sigma_E}=0$. The last two source terms of Eq.~\eqref{eq:recursion} therefore vanish, while the odd-parity term $\mathcal J^{(L,R)}$ is nonzero. The recursion for the remaining interaction-generated operator reduces to
\begin{equation}
\label{eq:one_remaining_generated_operator}
    \mathsf z_i^{[1]}[\hat O_A,\hat L_E,\hat R_E]
    \simeq
    \mathsf z_{i+1}^{[1]}
    [\hat O_A\hat X_i,\hat L_E,\hat R_E]
    +
    \mathsf z_{i+1}^{[0]}
    \!\left[
        \mathcal J_{\tau_i,\tau_{i-1}}^{(L,R)}
        [\hat O_A]\hat X_i,
        \mathbbm 1_E
    \right].
\end{equation}
The first term carries the unmatched environment operator to the next recursion step. The second term generates the remaining interaction operator through the generalized propagator and closes the environment contraction. Iterating Eq.~\eqref{eq:one_remaining_generated_operator} and using the terminal condition  gives, for any $q<n$,
\begin{align}
\label{eq:one_remaining_generated_solution}
    \mathsf z_{q+1}^{[1]}
    [\hat O_A,\hat L_E,\hat R_E]
    \simeq
    \sum_{r=q+1}^{n}
    \mathcal F_r\!\left[
        \mathcal J_{\tau_r,\tau_{r-1}}^{(L,R)}
        \!\left[
            \hat O_A
            \hat X_{q+1}\cdots\hat X_{r-1}
        \right]
        \hat X_r
    \right].
\end{align}
The product $\hat X_{q+1}\cdots\hat X_{r-1}$ is understood as the identity when $r=q+1$.

Substituting Eq.~\eqref{eq:one_remaining_generated_solution} into Eq.~\eqref{eq:first_vertex_m_zero}, we obtain
\begin{align}
\label{eq:two_vertex_solution_J}
    \mathsf z_1^{[2]}[\hat x_A,\mathbbm 1_E]
    \simeq{}&
    -i\sum_{1\leq q<r\leq n}\sum_j
    \int_{\tau_{q-1}}^{\tau_q}\!ds\,
    \mathcal F_r\!\left[
        \mathcal J_{\tau_r,\tau_{r-1}}^{(E_j(s),1)}
        \!\left[
            \hat B_{j,q}^{\rm L}(s)
            \hat X_{q+1}\cdots\hat X_{r-1}
        \right]
        \hat X_r
    \right]
    \nonumber\\
    &+
    i\sum_{1\leq q<r\leq n}\sum_j
    \int_{\tau_{q-1}}^{\tau_q}\!ds\,
    \mathcal F_r\!\left[
        \mathcal J_{\tau_r,\tau_{r-1}}^{(1,E_j(s))}
        \!\left[
            \hat B_{j,q}^{\rm R}(s)
            \hat X_{q+1}\cdots\hat X_{r-1}
        \right]
        \hat X_r
    \right].
\end{align}
This expression displays explicitly the two distinct possibilities for generating the first interaction operator: to the left or to the right of the reduced system operator. In either case, the second operator is generated through the odd-parity contribution to the generalized propagator at a later recursion step.

Expanding $\mathcal J$ according to Eq.~\eqref{eq:odd_phi_source}, Eq.~\eqref{eq:two_vertex_solution_J} becomes
\begin{align}
\label{eq:two_vertex_solution_expanded}
    \mathsf z_1^{[2]}[\hat x_A,\mathbbm 1_E]\simeq&-\sum_{1\leq q<r\leq n}
    \sum_{jj'}
    \int_{\tau_{q-1}}^{\tau_q}\!ds
    \int_{\tau_{r-1}}^{\tau_r}\!du\,
    C_{j'j}(u-s)
    \mathcal F_r\!\left[
        \tilde\Phi_{\tau_r,u}
        \!\left[
            \left[
                \lambda_{j'}(u)\tilde A_{j'}(u),
                \hat B_{j,q}^{\rm L}(s)
                \hat X_{q+1}\cdots\hat X_{r-1}
            \right]
        \right]
        \hat X_r
    \right]\\
    &+
    \sum_{1\leq q<r\leq n}
    \sum_{jj'}
    \int_{\tau_{q-1}}^{\tau_q}\!ds
    \int_{\tau_{r-1}}^{\tau_r}\!du\,
    C_{jj'}(s-u)
    \nonumber
    \mathcal F_r\!\left[
        \tilde\Phi_{\tau_r,u}
        \!\left[
            \left[
                \lambda_{j'}(u)\tilde A_{j'}(u),
                \hat B_{j,q}^{\rm R}(s)
                \hat X_{q+1}\cdots\hat X_{r-1}
            \right]
        \right]
        \hat X_r
    \right].
\end{align}

The two bath operators generated at steps $q$ and $r$ are integrated
over the physical recursion intervals $I_q$ and $I_r$, respectively.
Their contraction is supported only when $|u-s|\lesssim\tau_E$.
Accordingly, define the set of recursion-interval pairs with a
finite-length overlap,
\begin{equation}
\label{eq:overlapping_interval_pairs}
    \mathcal P_{\rm ov}
    =
    \left\{
        (q,r)
        \,\middle|\,
        1\leq q<r\leq n,\ 
        \left|I_q\cap I_r\right|>0
    \right\},
\end{equation}
where $\left|I_q\cap I_r\right|$ denotes the length of the
intersection.

If $(q,r)\in\mathcal P_{\rm ov}$, one integration variable may range
over the finite overlap $I_q\cap I_r$, while the correlation function
localizes the relative variable $u-s$ to a region of width
$\tau_E$. The corresponding contribution is therefore of order
$\mathcal O(g^2\tau_E)$. If the two intervals meet only at a common
endpoint, both $s$ and $u$ must lie within a distance of order
$\tau_E$ from that endpoint, and the contribution is instead of
order $\mathcal O(g^2\tau_E^2)$. Separated intervals give no
contribution in the Markov limit, except in boundary regions of the
outer hypercube whose additional measure again makes them
$\mathcal O(g^2\tau_E^2)$. Therefore we can write
\begin{align}
\label{eq:QRT_two_vertex_correction}
    \mathsf z_1^{[2]}[\hat x_A,\mathbbm 1_E]
    \simeq&
    -\sum_{(q,r)\in\mathcal P_{\rm ov}}
    \sum_{jj'}
    \int_{\tau_{q-1}}^{\tau_q}\!ds
    \int_{\tau_{r-1}}^{\tau_r}\!du\,
    C_{jj'}^*(s-u)
    \mathcal F_r\!\left[
        \tilde\Phi_{\tau_r,u}
        \!\left[
            \left[
                \lambda_{j'}(u)\tilde A_{j'}(u),
                \hat B_{j,q}^{\rm L}(s)
                \hat X_{q+1}\cdots\hat X_{r-1}
            \right]
        \right]
        \hat X_r
    \right]
    \nonumber\\
    &+
    \sum_{(q,r)\in\mathcal P_{\rm ov}}
    \sum_{jj'}
    \int_{\tau_{q-1}}^{\tau_q}\!ds
    \int_{\tau_{r-1}}^{\tau_r}\!du\,
    C_{jj'}(s-u)
    \mathcal F_r\!\left[
        \tilde\Phi_{\tau_r,u}
        \!\left[
            \left[
                \lambda_{j'}(u)\tilde A_{j'}(u),
                \hat B_{j,q}^{\rm R}(s)
                \hat X_{q+1}\cdots\hat X_{r-1}
            \right]
        \right]
        \hat X_r
    \right].
\end{align}
After applying the Markov and secular approximations (as in the GKLS procedure), the two-vertex
correction becomes
\begin{align}
\label{eq:QRT_two_vertex_correction_markov}
    \mathsf z_1^{[2]}[\hat x_A,\mathbbm 1_E]
    \simeq{}&
    -\sum_{(q,r)\in\mathcal P_{\rm ov}}
    \epsilon_q\epsilon_r
    \sum_{jj'\omega}
    \int_{I_q\cap I_r}\!\!\!ds\,
    \lambda_j(s)\lambda_{j'}(s)\gamma_{jj'}^*(\omega)
    \mathcal F_r\!\left[
        \tilde\Phi_{\tau_r,s}\!\left[
            \left[
                \tilde A_{j'}^\dagger(\omega),
                \tilde A_j(\omega)
                \tilde\Phi_{s,\tau_{q-1}}[\hat S_{q-1}]
                \hat X_{q:r-1}
            \right]
        \right]\hat X_r
    \right]
    \nonumber\\
    &+
    \sum_{(q,r)\in\mathcal P_{\rm ov}}
    \epsilon_q\epsilon_r
    \sum_{jj'\omega}
    \int_{I_q\cap I_r}\!\!\!ds\,
    \lambda_j(s)\lambda_{j'}(s)\gamma_{jj'}(\omega)
    \mathcal F_r\!\left[
        \tilde\Phi_{\tau_r,s}\!\left[
            \left[
                \tilde A_{j'}(\omega),
                \tilde\Phi_{s,\tau_{q-1}}[\hat S_{q-1}]
                \tilde A_j^\dagger(\omega)
                \hat X_{q:r-1}
            \right]
        \right]\hat X_r
    \right],
\end{align}
where $\hat X_{a:b} = \hat X_a\hat X_{a+1}\cdots\hat X_b$, with $\hat X_{a:b}=\mathbbm 1_A$ whenever $a>b$.  The interaction-picture GKLS dissipator and its adjoint are given by
\begin{align}
\label{eq:interaction_picture_dissipator}
    \tilde{\mathcal D}_s[\hat O_A]
    &=
    \sum_{jj'\omega}
    \lambda_j(s)\lambda_{j'}(s)\gamma_{jj'}(\omega)
    \left(
        \tilde A_{j'}(\omega)
        \hat O_A
        \tilde A_j^\dagger(\omega)
        -
        \frac{1}{2}
        \left\{
            \tilde A_j^\dagger(\omega)
            \tilde A_{j'}(\omega),
            \hat O_A
        \right\}
    \right),
    \\
\label{eq:interaction_picture_dissipator_adjoint}
    \tilde{\mathcal D}_s^\dagger[\hat O_A]
    &=
    \sum_{jj'\omega}
    \lambda_j(s)\lambda_{j'}(s)\gamma_{jj'}(\omega)
    \left(
        \tilde A_j^\dagger(\omega)
        \hat O_A
        \tilde A_{j'}(\omega)
        -
        \frac{1}{2}
        \left\{
            \tilde A_j^\dagger(\omega)
            \tilde A_{j'}(\omega),
            \hat O_A
        \right\}
    \right).
\end{align}
A direct expansion of the two commutators in
Eq.~\eqref{eq:QRT_two_vertex_correction_markov}, together with the
Hermiticity relation
$\gamma_{jj'}^*(\omega)=\gamma_{j'j}(\omega)$, shows that their sum
can be written as
\begin{equation}
\label{eq:QRT_two_vertex_dissipator_identity}
    \tilde{\mathcal D}_s
    \!\left[
        \tilde\Phi_{s,\tau_{q-1}}[\hat S_{q-1}]
        \hat X_{q:r-1}
    \right]
    +
    \tilde{\mathcal D}_s
    \!\left[
        \tilde\Phi_{s,\tau_{q-1}}[\hat S_{q-1}]
    \right]
    \hat X_{q:r-1}
    -
    \tilde\Phi_{s,\tau_{q-1}}[\hat S_{q-1}]
    \tilde{\mathcal D}_s^\dagger
    [\hat X_{q:r-1}].
\end{equation}
Consequently, the two-vertex contribution obtained directly from
the microscopic recursion is
\begin{multline}
\label{eq:QRT_two_vertex_correction_physical}
    \delta z_{\rm QRT}^{[2]}[\hat x_A]
    \simeq
    \sum_{(q,r)\in\mathcal P_{\rm ov}}
    \epsilon_q\epsilon_r
    \int_{I_q\cap I_r}\!ds\,
    \mathcal F_r\!\Bigg[
        \tilde\Phi_{\tau_r,s}\!\Bigg[
            \tilde{\mathcal D}_s
            \!\left[
                \tilde\Phi_{s,\tau_{q-1}}[\hat S_{q-1}]
                \hat X_{q:r-1}
            \right]\\
            +
            \tilde{\mathcal D}_s
            \!\left[
                \tilde\Phi_{s,\tau_{q-1}}[\hat S_{q-1}]
            \right]
            \hat X_{q:r-1}
            -
            \tilde\Phi_{s,\tau_{q-1}}[\hat S_{q-1}]
            \tilde{\mathcal D}_s^\dagger
            [\hat X_{q:r-1}]
        \Bigg]
        \hat X_r
    \Bigg].
\end{multline}
The reduced propagators appearing up to this point arise directly
from the microscopic recursion,
\begin{equation}
    \tilde\Phi_{t,t'}^{\rm phys}[\hat O_A]
    =
    \Tr_E\!\left[
        \hat U_{t,t'}^{(I)}
        (\hat O_A\otimes\hat\sigma_E)
        \hat U_{t,t'}^{\dagger(I)}
    \right].
\end{equation}
For $t\geq t'$, this map agrees, at the working order of the
Born--Markov approximation, with the two-time GKLS propagator
\begin{equation}
    \tilde\Phi_{t,t'}^{\rm GKLS}
    =
    \tilde\Phi_{t,t_0}
    \circ
    \tilde\Phi_{t',t_0}^{-1}.
\end{equation}
For $t<t'$, however, the microscopic map describes physical reverse
evolution, whereas the GKLS expression is the inverse of the forward
reduced evolution. If $u$ denotes the positive duration of the
backward interval, their generators are respectively
\begin{equation}
    \frac{d}{du}\tilde\Phi^{\rm phys}
    =
    \left(
        -\tilde{\mathcal H}
        +
        \tilde{\mathcal D}
    \right)
    \circ\tilde\Phi^{\rm phys},
    \qquad
    \frac{d}{du}\tilde\Phi^{\rm GKLS}
    =
    -
    \left(
        \tilde{\mathcal H}
        +
        \tilde{\mathcal D}
    \right)
    \circ\tilde\Phi^{\rm GKLS}.
\end{equation}
Equivalently, for $t<t'$ they satisfy
\begin{equation}
\label{eq:physical_to_GKLS_backward_propagator}
    \tilde\Phi_{t,t'}^{\rm phys}
    =
    \tilde\Phi_{t,t'}^{\rm GKLS}
    +
    2\int_t^{t'}\!du\,
    \tilde\Phi_{t,u}^{\rm phys}
    \circ
    \tilde{\mathcal D}_u
    \circ
    \tilde\Phi_{u,t'}^{\rm GKLS}.
\end{equation}
This distinction does not affect the $m=1$ and $m=2$ results at the
order retained here. Indeed, their leading contributions are already
of order $\mathcal O(g\tau_E)$ and $\mathcal O(\tau_E)$,
respectively, so that replacing a physical backward propagator by
its GKLS counterpart changes them only at order
$\mathcal O(g^3\tau_E^2)$ and
$\mathcal O(g^2\tau_E^2)$. In the $m=0$ sector, by contrast, the
zero-vertex regression term is of order one. The difference between
the two backward propagators therefore produces an additional
contribution of order $\mathcal O(g^2\tau_E)$, which must be combined
with Eq.~\eqref{eq:QRT_two_vertex_correction_physical}.

In the following, we return to the notation $\tilde\Phi_{t,t'}$,
which will now denote the two-time GKLS propagator for either
orientation of its time arguments. Accordingly, the zero-vertex
regression prefix is understood as
\begin{equation}
\label{eq:GKLS_regression_prefix}
    \hat S_0=\hat x_A,
    \qquad
    \hat S_i
    =
    \tilde\Phi_{\tau_i,\tau_{i-1}}[\hat S_{i-1}]
    \hat X_i,
    \qquad
    i=1,\ldots,n.
\end{equation}

Let
\begin{equation}
    \mathcal B_-
    :=
    \left\{
        r\in\{1,\ldots,n\}
        \,\middle|\,
        \epsilon_r=-1
    \right\}
\end{equation}
denote the set of backward recursion intervals. Only the propagators
associated with these intervals differ from the physical propagators
generated by the microscopic recursion. To the working order,
Eq.~\eqref{eq:physical_to_GKLS_backward_propagator} gives
\begin{equation}
\label{eq:physical_backward_interval_expansion}
    \tilde\Phi_{\tau_r,\tau_{r-1}}^{\rm phys}
    =
    \tilde\Phi_{\tau_r,\tau_{r-1}}
    +
    2\int_{\tau_r}^{\tau_{r-1}}\!ds\,
    \tilde\Phi_{\tau_r,s}
    \circ\tilde{\mathcal D}_s
    \circ\tilde\Phi_{s,\tau_{r-1}}
    +
    \mathcal O(\tau_E^2),
    \qquad r\in\mathcal B_-.
\end{equation}
Inserting this expansion once in the zero-vertex regression chain
therefore produces
\begin{align}
\label{eq:QRT_zero_vertex_conversion_backward_intervals}
    \delta z_{\rm prop}^{[0]}[\hat x_A]
    \simeq{}&
    2
    \sum_{r\in\mathcal B_-}
    \int_{\tau_r}^{\tau_{r-1}}\!ds\,
    \mathcal F_r\!\Bigg[
        \tilde\Phi_{\tau_r,s}\!\left[
            \tilde{\mathcal D}_s\!\left[
                \tilde\Phi_{s,\tau_{r-1}}[\hat S_{r-1}]
            \right]
        \right]
        \hat X_r
    \Bigg].
\end{align}
This is the direct correction obtained by replacing the physical
backward propagators in the zero-vertex term by the inverse GKLS
propagators.

To combine this contribution with the two-vertex correction, it is
useful to rewrite Eq.~\eqref{eq:QRT_zero_vertex_conversion_backward_intervals}
as a sum over overlapping recursion intervals. The oriented intervals
traversed before the $r$-th recursion step satisfy, away from their
measure-zero endpoints, $\sum_{q=1}^{r-1} \epsilon_q\,\mathbbm 1_{I_q}(s) = \mathbbm 1_{[t_0,\tau_{r-1}]}(s)$. Indeed, $\epsilon_q\,\mathbbm 1_{I_q}(s) = \Theta(\tau_q-s)-\Theta(\tau_{q-1}-s)$,
so that the sum telescopes from $t_0$ to $\tau_{r-1}$. Therefore
\begin{equation}
\label{eq:backward_interval_overlap_identity}
    \mathbbm 1_{\{\epsilon_r=-1\}}
    \mathbbm 1_{I_r}(s)
    =
    -
    \epsilon_r
    \sum_{q=1}^{r-1}
    \epsilon_q\,
    \mathbbm 1_{I_q\cap I_r}(s).
\end{equation}
For a backward interval, the right-hand side equals one throughout
$I_r$, whereas for a forward interval it vanishes.

Moreover, because the integrand in
Eq.~\eqref{eq:QRT_zero_vertex_conversion_backward_intervals} already
contains one dissipator, its incoming state may be evaluated at
zeroth dissipative order. On $I_q\cap I_r$, this gives
\begin{equation}
\label{eq:zero_vertex_incoming_state_rearrangement}
    \tilde\Phi_{s,\tau_{r-1}}[\hat S_{r-1}]
    =
    \tilde\Phi_{s,\tau_{q-1}}[\hat S_{q-1}]
    \hat X_{q:r-1}
    +
    \mathcal O(\tau_E).
\end{equation}
The error in Eq.~\eqref{eq:zero_vertex_incoming_state_rearrangement}
is acted upon by $\tilde{\mathcal D}_s$ and hence contributes only at
order $\mathcal O(\tau_E^2)$.

Using Eqs.~\eqref{eq:backward_interval_overlap_identity} and
\eqref{eq:zero_vertex_incoming_state_rearrangement},
Eq.~\eqref{eq:QRT_zero_vertex_conversion_backward_intervals} becomes
\begin{align}
\label{eq:QRT_zero_vertex_propagator_conversion}
    \delta z_{\rm prop}^{[0]}[\hat x_A]
    \simeq{}&
    -2
    \sum_{(q,r)\in\mathcal P_{\rm ov}}
    \epsilon_q\epsilon_r
    \int_{I_q\cap I_r}\!ds\,
    \mathcal F_r\!\Bigg[
        \tilde\Phi_{\tau_r,s}\!\Bigg[
            \tilde{\mathcal D}_s\!\left[
                \tilde\Phi_{s,\tau_{q-1}}[\hat S_{q-1}]
                \hat X_{q:r-1}
            \right]
        \Bigg]
        \hat X_r
    \Bigg].
\end{align}

All propagators in the already dissipative two-vertex contribution
may simultaneously be replaced by their GKLS counterparts, since
their difference would contribute only at order
$\mathcal O(\tau_E^2)$. Combining
Eqs.~\eqref{eq:QRT_two_vertex_correction_physical} and
\eqref{eq:QRT_zero_vertex_propagator_conversion}, we obtain a simple sign flip for one of the terms
\begin{multline}
\label{eq:QRT_OO_GKLS_bare}
    \delta z_{\rm QRT}^{\rm OO}[\hat x_A]
    \simeq
    \sum_{(q,r)\in\mathcal P_{\rm ov}}
    \epsilon_q\epsilon_r
    \int_{I_q\cap I_r}\!ds\,
    \mathcal F_r\!\Bigg[
        \tilde\Phi_{\tau_r,s}\!\Bigg[
            -
            \tilde{\mathcal D}_s\!\left[
                \tilde\Phi_{s,\tau_{q-1}}[\hat S_{q-1}]
                \hat X_{q:r-1}
            \right]\\
            +
            \tilde{\mathcal D}_s\!\left[
                \tilde\Phi_{s,\tau_{q-1}}[\hat S_{q-1}]
            \right]
            \hat X_{q:r-1}
            -
            \tilde\Phi_{s,\tau_{q-1}}[\hat S_{q-1}]
            \tilde{\mathcal D}_s^\dagger
            [\hat X_{q:r-1}]
        \Bigg]
        \hat X_r
    \Bigg].
\end{multline}
Combining Eq.~\eqref{eq:QRT_OO_GKLS_bare} with the zero-vertex
solution, the complete result in the $m=0$ sector, expressed in terms
of the two-time GKLS propagator, is
\begin{equation}
\label{eq:solution_m_zero_corrected_state}
    z_{k_1,\ldots,k_n}
    [\hat x_A,\mathbbm 1_E,\mathbbm 1_E](t_0)
    \simeq
    \Tr_A[\hat S_n]
    +
    \delta z_{\rm QRT}^{\rm OO}[\hat x_A]
    +
    \mathcal O(\tau_E^2),
\end{equation}
where $\hat S_n$ is constructed using the two-time GKLS propagators,
and $\delta z_{\rm QRT}^{\rm OO}[\hat x_A]$ is given by
Eq.~\eqref{eq:QRT_OO_GKLS_bare}.

To obtain the form given in the theorem of the OO term, we now determine the corresponding Heisenberg-picture operator
$\delta\hat z_{\rm QRT}^{\rm OO}$, defined by
\begin{equation}
\label{eq:QRT_OO_operator_definition}
    \delta z_{\rm QRT}^{\rm OO}[\hat x_A]
    =
    \Tr_A\!\left[
        \hat x_A\,
        \delta\hat z_{\rm QRT}^{\rm OO}
    \right].
\end{equation}
For this purpose, introduce the interaction-picture regression
strings
\begin{equation}
\label{eq:interaction_picture_regression_string}
    \tilde{\mathcal G}_{a:b}^{\dagger}
    =
    \tilde\Phi_{\tau_a,\tau_{a-1}}^\dagger
    \circ\mathsf L_{\hat X_a}
    \circ\cdots\circ
    \tilde\Phi_{\tau_b,\tau_{b-1}}^\dagger
    \circ\mathsf L_{\hat X_b},
    \qquad
    \tilde{\mathcal G}_{a:b}^{\dagger}
    =
    \operatorname{id}
    \quad\text{for }a>b.
\end{equation}
Using
$\Tr_A[\tilde{\mathcal D}_s[\hat\rho_A]\hat O_A]
=
\Tr_A[\hat\rho_A\tilde{\mathcal D}_s^\dagger[\hat O_A]]$,
the Heisenberg-picture out-of-time-order contribution is
\begin{align}
\label{eq:QRT_OO_GKLS_Heisenberg}
    \delta\hat z_{\rm QRT}^{\rm OO}
    \simeq{}&
    \sum_{(q,r)\in\mathcal P_{\rm ov}}
    \epsilon_q\epsilon_r
    \int_{I_q\cap I_r}\!ds\,
    \tilde{\mathcal G}_{1:q-1}^{\dagger}
    \circ
    \tilde\Phi_{s,\tau_{q-1}}^\dagger
    \circ
    \Big(
        \tilde{\mathcal D}_s^\dagger
        \circ\mathsf L_{\hat X_{q:r-1}}
        -
        \mathsf L_{
            \tilde{\mathcal D}_s^\dagger
            [\hat X_{q:r-1}]
        }
        -
        \mathsf L_{\hat X_{q:r-1}}
        \circ\tilde{\mathcal D}_s^\dagger
    \Big)
    \circ
    \tilde\Phi_{\tau_r,s}^\dagger
    \circ\mathsf L_{\hat X_r}
    \circ
    \tilde{\mathcal G}_{r+1:n}^{\dagger}
    [\mathbbm 1_A].
\end{align}
To evaluate the integral, define
\begin{align}
\label{eq:QRT_overlap_operators}
    \hat O_{q,r}(s)
    :=
    \tilde\Phi_{\tau_q,s}^\dagger
    \circ
    \mathsf L_{\hat X_q}
    \circ
    \tilde{\mathcal G}_{q+1:r-1}^{\dagger}
    [\mathbbm 1_A],\qquad
    \hat O_r(s)
    :=
    \tilde\Phi_{\tau_r,s}^\dagger
    \circ
    \mathsf L_{\hat X_r}
    \circ
    \tilde{\mathcal G}_{r+1:n}^{\dagger}
    [\mathbbm 1_A].
\end{align}
We then introduce the interpolating operator
\begin{equation}
\label{eq:QRT_overlap_interpolating_function}
    \hat F_{q,r}(s)
    :=
    \tilde{\mathcal G}_{1:q-1}^{\dagger}
    \circ
    \tilde\Phi_{s,\tau_{q-1}}^\dagger
    \left[
        \hat O_{q,r}(s)\hat O_r(s)
    \right].
\end{equation}
Since the two-time maps are now the GKLS propagators, we have
\begin{equation}
\label{eq:QRT_overlap_operator_derivatives}
    \frac{d}{ds}\hat O_{q,r}(s)
    =
    -
    \tilde{\mathcal L}_s^\dagger
    [\hat O_{q,r}(s)],
    \qquad
    \frac{d}{ds}\hat O_r(s)
    =
    -
    \tilde{\mathcal L}_s^\dagger
    [\hat O_r(s)].
\end{equation}
Differentiating Eq.~\eqref{eq:QRT_overlap_interpolating_function}
and using Eq.~\eqref{eq:QRT_overlap_operator_derivatives}, we obtain
\begin{align}
\label{eq:QRT_overlap_interpolating_derivative}
    \frac{d}{ds}\hat F_{q,r}(s)
    =
    \tilde{\mathcal G}_{1:q-1}^{\dagger}
    \circ
    \tilde\Phi_{s,\tau_{q-1}}^\dagger
    \Big[
        \tilde{\mathcal L}_s^\dagger
        [\hat O_{q,r}(s)\hat O_r(s)]
        -
        \tilde{\mathcal L}_s^\dagger
        [\hat O_{q,r}(s)]
        \hat O_r(s)
        -
        \hat O_{q,r}(s)
        \tilde{\mathcal L}_s^\dagger
        [\hat O_r(s)]
    \Big].
\end{align}
Writing $\tilde{\mathcal L}_s^\dagger = \tilde{\mathcal H}_s^\dagger + \tilde{\mathcal D}_s^\dagger$, the Hamiltonian terms cancel because
$\tilde{\mathcal H}_s^\dagger$ satisfies the Leibniz rule,
\begin{equation}
    \tilde{\mathcal H}_s^\dagger
    [\hat O_1\hat O_2]
    =
    \tilde{\mathcal H}_s^\dagger[\hat O_1]\hat O_2
    +
    \hat O_1\tilde{\mathcal H}_s^\dagger[\hat O_2].
\end{equation}
Therefore,
\begin{align}
\label{eq:QRT_overlap_interpolating_derivative_dissipative}
    \frac{d}{ds}\hat F_{q,r}(s)
    ={}&
    \tilde{\mathcal G}_{1:q-1}^{\dagger}
    \circ
    \tilde\Phi_{s,\tau_{q-1}}^\dagger
    \Big[
        \tilde{\mathcal D}_s^\dagger
        [\hat O_{q,r}(s)\hat O_r(s)]
        -
        \tilde{\mathcal D}_s^\dagger
        [\hat O_{q,r}(s)]
        \hat O_r(s)
        -
        \hat O_{q,r}(s)
        \tilde{\mathcal D}_s^\dagger
        [\hat O_r(s)]
    \Big].
\end{align}
Comparing Eq.~\eqref{eq:QRT_overlap_interpolating_derivative_dissipative}
with the integrand of Eq.~\eqref{eq:QRT_OO_GKLS_Heisenberg}, we
immediately obtain
\begin{equation}
\label{eq:QRT_OO_as_total_derivative}
    \delta\hat z_{\rm QRT}^{\rm OO}
    \simeq
    \sum_{(q,r)\in\mathcal P_{\rm ov}}
    \epsilon_q\epsilon_r
    \int_{I_q\cap I_r}\!ds\,
    \frac{d}{ds}\hat F_{q,r}(s).
\end{equation}
Writing $I_q\cap I_r=[s_{q,r}^-,s_{q,r}^+]$, the integral reduces directly to
\begin{equation}
\label{eq:QRT_OO_GKLS_final}
    \delta\hat z_{\rm QRT}^{\rm OO}
    \simeq
    \sum_{(q,r)\in\mathcal P_{\rm ov}}
    \epsilon_q\epsilon_r
    \left[
        \hat F_{q,r}(s_{q,r}^+)
        -
        \hat F_{q,r}(s_{q,r}^-)
    \right].
\end{equation}
Substituting the definition of $\hat F_{q,r}(s)$ gives
\begin{equation}
\label{eq:QRT_OO_GKLS_final_expanded}
    \delta\hat z_{\rm QRT}^{\rm OO}
    \simeq
    \sum_{(q,r)\in\mathcal P_{\rm ov}}
    \epsilon_q\epsilon_r
    \left.
    \tilde{\mathcal G}_{1:q-1}^{\dagger}
    \circ
    \tilde\Phi_{s,\tau_{q-1}}^\dagger
    \left[
        \left(
            \tilde\Phi_{\tau_q,s}^\dagger
            \circ\mathsf L_{\hat X_q}
            \circ\tilde{\mathcal G}_{q+1:r-1}^{\dagger}
            [\mathbbm 1_A]
        \right)
        \left(
            \tilde\Phi_{\tau_r,s}^\dagger
            \circ\mathsf L_{\hat X_r}
            \circ\tilde{\mathcal G}_{r+1:n}^{\dagger}
            [\mathbbm 1_A]
        \right)
    \right]
    \right|_{s=s_{q,r}^-}^{s=s_{q,r}^+}.
\end{equation}

\section{Gauge dependency} \label{app:gauge}
We now illustrate and resolve an apparent paradox that arises in our framework when adding an arbitrary time-dependent term proportional to the identity to the system Hamiltonian. Take the time-dependent Hamiltonian acting on the system, $\hat{H}_S(t)$, and define
\begin{equation}\label{eq:mod_ham}
    \hat{H}_S'(t) = \hat{H}_S(t) + f(t)\,\mathbbm{1}_S.
\end{equation}
One can easily show that the modified Hamiltonian generates exactly the same dynamics on the system as the original one. In fact, the identity-proportional term ultimately reduces to a global phase, i.e.
\begin{equation}
    \hat U'(t,t_0)
    =\mathcal{T}\exp\!\left(-i\int_{t_0}^t \!dt'\, \hat H'_S(t')\right)
    =\mathcal{T}\exp\!\left(-i\int_{t_0}^t \!dt'\, \hat H_S(t')\right)\exp\!\left(-i\int_{t_0}^t \!dt'\, f(t') \right)
    = e^{i\phi}\,\hat U(t,t_0),
\end{equation}
where we have used that the identity term factors out, and identified $\phi=-\int_{t_0}^t \!dt'\, f(t')$. Consequently, the additional term cannot be detected by analyzing the system dynamics alone. This creates an apparent paradox, since the work operator corresponding to the new Hamiltonian yields a measurable difference:
\begin{equation}\label{eq:mod_work}
    \hat W'(t,t') = \hat U^\dagger(t,t')\,\hat H'_S(t)\,\hat U(t,t') - \hat H'_S(t') = \hat W(t,t') + \big(f(t)-f(t')\big)\,\mathbbm{1}_S.
\end{equation}
Such a discrepancy seemingly puts tension on this definition of work, since an additive term in the Hamiltonian that leaves the dynamics invariant should not have any physical meaning. One can identify the addition of the function $f(t)$ as a different gauge fixing of the model, leaving the dynamics invariant. Observing that the work operator depends on this function would then suggest that it is a gauge-dependent quantity and thus not completely operationally meaningful.

However, this inconsistency is also fully present in a classical context. This is clear because \eqref{eq:mod_work} retains the discrepancy for classical protocols. E.g. suppose a process where someone is raising and lowering some weight in a gravitational potential while walking up and down a hill. Here the position of the agent on the hill plays the role of $f(t)$, and indeed if the agent is climbing the hill while lifting the weight then they have to spend extra energy (beyond that to move themselves) in order to further raise the system too.

This inconsistency was already found in the classical context in Ref.~\cite{PhysRevLett.100.020601}, starting from the result of Ref.~\cite{Horowitz_2007}, challenging the connection between work and free energy. The apparent issue has been widely discussed in the community~\cite{imparato2007commentonfailureworkhamiltonian, PhysRevLett.101.098901, vilar2007inconsistencynonstandarddefinitionwork}, and ultimately settled~\cite{peliti2007commentoninconsistencynonstandard, Peliti_2008}. There, the author shows that this apparent paradox disappears when considering the physical setup needed to perform the manipulation described by the time-dependent Hamiltonian. We show here how our autonomization framework naturally provides an analogous solution to the paradox in the quantum case as well.

Let us construct the two Hamiltonians corresponding to the autonomous descriptions of $\hat H_S(t)$ and $\hat H'_S(t)$. From Eqs.~(\ref{eq:int_ham}--\ref{eq:clock_ham}) we have
\begin{align}
    \hat{H}_{\text{tot}} &= \hat H_C + \hat H_{\text{int}}
    = \frac{1}{2\pi}\int_{\mathbb R}\! dE\,dt\,dt'\; E \, e^{iE(t-t')}\,\ket{t\,}\!\bra{t'} 
      + \int_{\mathbb{R}}\! dt'\, \hat H_S(t')\otimes \ket{t'}\!\bra{t'},\\
    \hat{H}'_{\text{tot}} &= \hat H_C + \hat H'_{\text{int}}
    = \frac{1}{2\pi}\int_{\mathbb R}\! dE\,dt\,dt'\; E \, e^{iE(t-t')}\,\ket{t\,}\!\bra{t'} 
      + \int_{\mathbb{R}}\! dt'\, \hat H'_S(t')\otimes \ket{t'}\!\bra{t'}.
\end{align}
Plugging the explicit expression for the modified Hamiltonian in Eq.~\eqref{eq:mod_ham} we get
\begin{equation}
    \hat{H}'_{\text{tot}} = \frac{1}{2\pi}\int_{\mathbb R}\! dE\,dt\,dt'\; E \, e^{iE(t-t')}\,\ket{t\,}\!\bra{t'}
    + \int_{\mathbb{R}}\! dt'\, \big(\hat H_S(t') + f(t')\,\mathbbm{1}_S\big)\otimes \ket{t'}\!\bra{t'}.
\end{equation}
Since the added $f(t)$ in the interaction Hamiltonian acts trivially on the system, we can equivalently ascribe it to a modification of the clock Hamiltonian:
\begin{equation}
    \hat{H}'_{\text{tot}} = \hat H'_C + \hat H_{\text{int}}
    = \frac{1}{2\pi}\int_{\mathbb R}\! dE\,dt\,dt'\; \big(E \, e^{iE(t-t')} + f(t')\,\delta(t-t')\big)\ket{t\,}\!\bra{t'}
    + \int_{\mathbb{R}}\! dt'\, \hat H_S(t')\otimes \ket{t'}\!\bra{t'}.
\end{equation}
This rewrite shows explicitly how the dynamics of the system itself is unaffected by the added term. However, the modification in $\hat H'_C$ induces a different dynamics for the clock system, which explains the detectable difference in expended work. In other words, $\hat H_S(t)$ and $\hat H'_S(t)$ indeed represent different physical situations: they implement the same process on the system via different evolutions of the time-control system, thereby incurring in different work costs. This implies that the two settings compared above do not merely correspond to the same physics in different gauges, but rather represent distinct physical implementations, resulting in processes with different energetic costs depending on the evolution of the control system (here, the clock).

Finally, one can consider one last resolution of the apparent paradox that is agnostic to the physical scenario: cycles. By imposing that the initial and final Hamiltonian coincide, then we have full thermodynamic consistency a priori of identifying the physical scenario.

\end{document}